\definecolor{myblue}{rgb}{0.2,0.2,0.8}
\definecolor{myblack}{rgb}{0,0,0}
\definecolor{myurl}{rgb}{0.1,0.1,0.4}
\edef\restoreparindent{\parindent=\the\parindent\relax}
\newtheorem{lemma}{Lemma}
\numberwithin{lemma}{section}
\newcommand{\lemref}[1]{Lemma~\ref{#1}}
\newtheorem{prop}{Proposition}
\numberwithin{prop}{section}
\newcommand{\propref}[1]{Proposition~\ref{#1}}
\newtheorem{corollary}{Corollary}
\numberwithin{corollary}{section}
\newcommand{\corref}[1]{Corollary~\ref{#1}}
\numberwithin{example}{section}
\newtheorem{theorem}{Theorem}
\numberwithin{theorem}{section}
\newcommand{\thmref}[1]{Theorem~\ref{#1}}
\theoremstyle{definition}
\newtheorem{definition}{Definition}
\newcommand{\defref}[1]{Definition~\ref{#1}}
\newtheorem*{remark}{Remark}
\DeclareMathOperator\supp{supp}
\newcommand{\<}{\langle}
\renewcommand{\>}{\rangle}
\newcommand{\rr}{{\mathcal{R}}}
\newcommand{\lo}{{\mathcal{L}}}
\newcommand{\s}{{\mathcal{S}}}
\newcommand{\ee}{{\mathcal{E}}}
\newcommand{\xx}{{\mathcal{X}}}
\renewcommand{\aa}{{\mathcal{A}}}
\newcommand{\ff}{{\mathcal{F}}}
\newcommand{\opset}{{\mathscr{O}}}
\newcommand{\inset}{{\mathscr{I}}}
\newcommand{\ii}{{\mathcal{I}}}
\newcommand{\I}{{\mathsf{I}}}
\newcommand{\II}{{\mathsf{II}}}
\newcommand{\III}{{\mathsf{III}}}
\newcommand{\jj}{{\mathcal{J}}}
\newcommand{\co}{\mathds{C}}
\newcommand{\re}{\mathds{R}}
\newcommand{\h}{{\mathcal{H}}}
\newcommand{\kk}{{\mathcal{K}}}
\newcommand{\hs}{{\mathcal{H}\sub{\s}}}
\newcommand{\ha}{{\mathcal{H}\sub{\aa}}}
\newcommand{\hsys}{{H\sub{\s}}}
\newcommand{\happ}{{H\sub{\aa}}}
\newcommand{\E}{\mathsf{E}}
\newcommand{\Z}{\mathsf{Z}}
\newcommand{\A}{\mathsf{A}}
\renewcommand{\P}{\mathsf{P}}
\newcommand{\cT}{\mathbb{T}}
\newcommand{\cS}{\mathbb{S}}
\renewcommand{\nat}{\mathds{N}}
\newcommand{\one}{\mathds{1}}
\newcommand{\onesys}{\mathds{1}\sub{\s}}
\newcommand{\oneapp}{\mathds{1}\sub{\aa}}
\newcommand{\idch}{\operatorname{id}}
\newcommand{\idchsys}{\operatorname{id}\sub{\s}}
\newcommand{\idchapp}{\operatorname{id}\sub{\aa}}
\newcommand{\zero}{\mathds{O}}
\newcommand{\tr}{\mathrm{tr}}
\newcommand{\T}{\mathrm{T}}
\newcommand{\tra}{\mathrm{tr}\sub{\aa}}
\newcommand{\trs}{\mathrm{tr}\sub{\s}}
\newcommand{\av}{\mathrm{av}}
\newcommand*\bigcdot{ \mathpalette\bigcdot@{.5} }
\newcommand*\bigcdot@[2]{\mathbin{\vcenter{\hbox{\scalebox{#2}{$\m@th#1\bullet$}}}}}
\newcommand{\sub}[1]{_{\!\mathsmaller{\, #1}}}
\newcommand{\eq}[1]{Eq.~\eqref{#1}}
\newcommand{\sect}[1]{Sec.~\ref{#1}}
\newcommand{\app}[1]{Appendix~(\ref{#1})}
\newcommand{\ket}[1]{|{#1}\rangle}
\newcommand{\bra}[1]{\langle{#1}|}
\newcommand{\rank}[1]{\mathrm{rank}\left( {#1}\right)}
\newcommand{\proj}[1]{\ensuremath{\left|#1\right\rangle\!\!\left\langle#1\right|}}
\definecolor{darkgreen}{RGB}{0,175,0}
\definecolor{pblue}{RGB}{46,117,182}
\definecolor{pred}{RGB}{197, 90, 17}
\begin{document}

\title{A hierarchy of thermodynamically consistent quantum operations}

\author{Fereshte Shahbeigi  }
\email{fereshte.shahbeigi@savba.sk}
\affiliation{RCQI, Institute of Physics, Slovak Academy of Sciences, D\'ubravsk\'a cesta 9, Bratislava 84511, Slovakia}

\author{M. Hamed Mohammady  }
\email{m.hamed.mohammady@savba.sk}
\affiliation{RCQI, Institute of Physics, Slovak Academy of Sciences, D\'ubravsk\'a cesta 9, Bratislava 84511, Slovakia}

\begin{abstract}
In order to determine what quantum operations and measurements are consistent with the laws of thermodynamics, one must start by allowing all  processes allowed by the framework of quantum theory, and then impose the laws of thermodynamics as a  set of constraints. Here, we consider a hierarchy of quantum operations and measurements that are consistent with ($\I$) the weak third law, ($\II$) the strong third law, and ($\III$) both the second and the third laws of thermodynamics, i.e., operations and measurements that are fully consistent with thermodynamics. Such characterisation allows us to identify which particular thermodynamic principle is responsible for the (un)attainability of a given quantum operation or measurement. In the case of channels, i.e.,  trace-preserving operations, we show that a channel belongs to ($\I$) and ($\II$) if and only if it is strictly positive and rank non-decreasing, respectively, whereas a channel belongs to ($\III$) only if it is rank non-decreasing and does not perturb a strictly positive state. On the other hand, while thermodynamics does not preclude the measurability of any POVM, the realisable state-update rules for measurements are increasingly restricted as we go from ($\I$) to ($\III$).

\end{abstract}

\maketitle

\section{Introduction}
\label{sec:intro}

The most general way in which a quantum system may transform, potentially probabilistically, is described by a completely positive (CP) trace non-increasing map, known as  an operation. An instrument is a family of operations that sum to a deterministic (trace-preserving)  operation,  called a channel, and provides the most general state-update rule for a quantum measurement. The mathematical formalism of quantum theory allows for every possible   operation, or measurement,  on a quantum system of interest to be \emph{purified}. In other words, every operation or measurement may be  dilated into a unitary channel acting on the compound of the system of interest and an auxiliary quantum system  initially prepared in a pure state, followed by readout of a pointer observable on the auxiliary system     \cite{Von-Neumann-Foundations, Neu40, Stinespring1955, Ozawa1984}. However, for such pure dilations to be interpreted as physical  processes and not just formal mathematical constructs, that is, as physical interactions between the system and an existent environment or measuring apparatus, they must be consistent with physical principles beyond quantum theory alone: in particular, they must be consistent with the laws of thermodynamics. While  pure dilations are consistent with the second law of thermodynamics \cite{Purves-2020, Minagawa2023a}, they are in conflict with the third law which states that it is impossible to cool a system to absolute zero temperature, and hence  prohibits  the preparation of  quantum systems in pure states; in fact, the third law permits quantum systems to be prepared only in strictly positive states, i.e., states with full rank \cite{Schulman2005, Allahverdyan2011a, Ticozzi2014,Masanes2014,Wilming2017a,Scharlau2016a,  Freitas2018a, Taranto2021, Taranto2025a}.

An operation or measurement is thermodynamically consistent, therefore, if it admits a thermodynamically consistent process. That is to say, the operation or measurement must be realisable by a thermodynamically permissible apparatus state preparation, and a permissible interaction between system and apparatus. In the first analysis, a process is thermodynamically consistent precisely when it utilises a unitary interaction with an apparatus prepared in a strictly positive state. Such thermodynamically consistent operations can be seen as generalisations of so-called thermal operations, implemented by an energy conserving unitary interaction with an apparatus prepared in thermal equilibrium, which is a strictly positive state \cite{Horodecki2013, Navascues2015a, Brandao2015a, Perry2015, Lostaglio2016, Mazurek2018, Mohammady2022}. It follows that while thermal operations can be interpreted as those that do not consume any thermodynamic resources, thermodynamically consistent operations can be interpreted as those that consume only finite resources. It has been shown that some textbook examples of operations and measurements do not admit such a notion of a thermodynamically consistent process  \cite{Terhal-mixed-environment, Zalka-mixed-environment, Karol-unistochastic, GOUR2015, Guryanova2018,  Panda2023, Vetrivelan2023, Debarba2024}.  

While the restriction of the apparatus state preparation to a strictly positive one has an unambiguous operational justification in terms of quantum theory and thermodynamics \emph{alone}, the same cannot be said for  unitarity of the interaction. To be sure, if the compound of system and apparatus is assumed to be  \emph{informationally} closed so that the dynamics is reversible, then the interaction channel must be unitary, in accordance with the conventional wisdom. However, if we assume only that the   compound is \emph{thermodynamically closed} so that no heat is exchanged with an external environment---a precondition for the process to be amenable to thermodynamic analysis \cite{Mohammady2025}---then the interaction  is consistent with the second law if and only if it is described by a bistochastic channel, i.e., a unital and trace preserving CP map. This is because bistochastic channels are precisely those that do not decrease the entropy of any state and so cannot be used to construct a \emph{perpetuum mobile} \cite{Purves-2020}.  In other words, since unitary channels are a subclass of bistochastic ones, then unitarity of the interaction (together with a strictly positive apparatus preparation) is sufficient, but not necessary, for thermodynamic consistency of a process realising a given operation or measurement: Unitarity of the interaction is logically necessitated only if physical principles beyond the mathematical framework of operational quantum theory and thermodynamics are taken into account.   Furthermore, if we relax the requirement that the interaction must be consistent with the second law, but still demand that it be consistent with the third, we may generalise the permissible interactions  beyond the class of bistochastic channels. Note that state preparations are in fact channels that take a quantum system, initially prepared in an arbitrary state,  to a fixed known  state. As such,   the statement of the third law---that a quantum system can be prepared only in a strictly positive state---can be expressed in terms of the properties of the channels that may be physically implemented. Here, we are left with two possible definitions for channels that are consistent with the third law, with one a stronger form of the other. The minimal requirement for a channel to be consistent with the third law is that such channel must be strictly positive, i.e., it must map every strictly positive state to a strictly positive state  \cite{VomEnde2022a, Mohammady2022a}. Such channels cannot prepare a system, initially given in some state with full rank (such as a thermal state) in a pure state.  We call strictly positive channels as those that are consistent with the weak third law. On the other hand, a stronger condition for consistency with the third law is that the channel must also be  rank non-decreasing, i.e., it must not decrease the rank of any state. Such channels can prepare a system in a pure state only if the system is initially in a pure state. Rank non-decreasing channels are a proper subset of strictly positive channels, and we call rank non-decreasing channels as those that are consistent with the strong third law. Note that since bistochastic channels are also rank non-decreasing, then they are in fact consistent with both the second and the (strong) third laws.    

In order to gain a better understanding of what particular thermodynamic law is responsible   for the (un)attainability of a given operation or measurement, in this paper we consider the following hierarchy of thermodynamically consistent processes: Processes that are ($\I$) consistent with the weak third law; ($\II$)  consistent with the strong third law; and ($\III$)  consistent with both the second and the third laws, and hence fully consistent with thermodynamics. In each class of the hierarchy, the apparatus state preparation is strictly positive, whereas the interaction channels in ($\I$), ($\II$), and ($\III$) are strictly positive, rank non-decreasing, and bistochastic, respectively.  We do not consider the second law in isolation because, as stated above, all operations and measurements admit a process that is consistent with the second law alone \cite{Minagawa2023a}. We then characterise the corresponding hierarchy of thermodynamically consistent operations and measurements, providing  necessary (and in some cases also sufficient) conditions for an operation or measurement to belong to each class in the hierarchy.  For example, a channel is consistent with the weak  (strong) third law if and only if it is strictly positive (rank non-decreasing). On the other hand, a channel is fully consistent with thermodynamics only if it is both rank non-decreasing \emph{and}   has a strictly positive fixed state. Indeed, this provides further evidence that thermodynamically consistent operations are true generalisations of thermal operations: thermal channels are known to preserve the thermal state of the system, and if we relax the energy conservation of the interaction and thermality of the initial environment preparation, then while the thermal equilibrium state of the system may be perturbed, the system continues to have a strictly positive non-equilibrium  steady state. 

The paper is structured as follows. In \sect{sec:preliminaries}  we establish notation and review some basic facts about operational quantum physics and the theory of quantum measurements. Readers familiar with these topics may skip directly to \sect{sec:consistent-processes}, where we define the three hierarchies of thermodynamically consistent processes.  \sect{sec:results} contains the main results of our paper, with  \sect{sec:results-operations} characterising the set of operations that may be realised by processes in each class of the hierarchy, while  \sect{sec:results-instruments} concerns instruments  and in particular their non-disturbance properties.     We conclude with some discussion in \sect{sec:discussion}.


\section{preliminaries}\label{sec:preliminaries}
In this section, we shall cover the basics of operational quantum physics and the theory of quantum measurement. For more details, see, e.g., Refs. \cite{PaulBuschMarianGrabowski1995, Busch1996, Heinosaari2011, Wolf2012, Busch2016a,  Hayashi-QIT}. Readers familiar with these topics may skip this section and proceed to \sect{sec:consistent-processes}.
\subsection{Basic concepts}

We always consider systems with a complex Hilbert space $\h$ of finite dimension. Let $\lo(\h)$ be the algebra of linear operators on $\h$, with $\one$ and $\zero$ denoting the unit and null operators in $\lo(\h)$, respectively.  An operator $ A \in  \lo(\h)$ is called positive definite, or strictly positive, if $A > \zero$, i.e., if  all the eigenvalues of $A$ are strictly positive, which implies that $\rank{A} = \dim(\h)$.  For any $A > \zero$ and $B \in \lo(\h)$, it holds that $\tr[A \, B^*B] = \zero \iff B = \zero$. An operator $E \in \lo(\h)$ such that $\zero \leqslant E \leqslant \one$ is called an effect. An effect is trivial if it is proportional to the identity, and is non-trivial otherwise.   $E$ is   called a  norm-1 effect  if $\|E \| =1$,  where $\| \bigcdot \|$ is the operator norm;  a norm-1 effect has at least one eigenvector with eigenvalue one. A subclass of norm-1 effects are projections, which satisfy  $E^2 = E$. An effect $E < \one$ does not have  the norm-1 property.   An effect $E$ is indefinite (or completely unsharp) if it is strictly positive and lacks the norm-1 property, i.e., if $\zero < E < \one$, which means that the spectrum of  $E$ does not contain  zero or one. See \app{app:definite-effects} for further details.   A state on $\h$ is defined as a positive semidefinite operator $\rho$ of unit trace, with $\s(\h)$ denoting the state space on $\h$.  For any subset $\mathscr{A} \subseteq \lo(\h)$, we define the \emph{commutant} of $\mathscr{A}$ in $\lo(\h)$ as
\begin{align*}
    \mathscr{A}' \coloneq \{B \in \lo(\h) : [B, A] = \zero \quad \forall \, A \in \mathscr{A}\} \, .
\end{align*}

\subsection{Operations and channels}

In the ``Schr\"odinger picture'', a completely positive (CP)  linear map $\Phi : \lo(\h) \to \lo(\kk)$ is called an operation if it is trace non-increasing. When $\kk = \h$, we say that the operation acts in $\h$, and we denote $\opset(\h)$ as the set of operations acting in $\h$.       Among the operations are channels, which preserve the trace. The identity channel acting in $\h$ is denoted by $\idch$, which maps all operators to themselves.   For each CP map $\Phi : \lo(\h) \to \lo(\kk)$ there exists a unique ``Heisenberg picture'' dual $\Phi^* : \lo(\kk) \to \lo(\h)$ defined by the trace duality  $\tr[\Phi^*(A) B] = \tr[A \Phi(B)]$ for all $A \in \lo(\kk),B \in \lo(\h)$. $\Phi^*$ is also a CP map, and if $\Phi$ is an operation, then $\Phi^*$ is subunital, i.e., $\Phi^*(\one\sub{\kk}) = E \leqslant \one\sub{\h}$ is an effect; we say that the operation $\Phi$ is compatible with $E$. The dual of a channel is  unital, i.e., channels are compatible with the unit effect $\one\sub{\h}$.  We denote the parallel application of two CP maps $\Phi_i : \lo(\h_i) \to \lo(\kk_i)$, $i = 1,2$, as $\Phi_1 \otimes \Phi_2 : \lo(\h_1 \otimes \h_2) \to \lo(\kk_1 \otimes \kk_2), A_1 \otimes A_2 \mapsto \Phi_1(A_1) \otimes \Phi_2(A_2)$, and if $\kk_1 = \h_2$, the sequential application as $\Phi_2 \circ \Phi_1 : \lo(\h_1) \to \lo(\kk_2), A \mapsto \Phi_2[\Phi_1(A)]$.

A CP map $\Phi : \lo(\h) \to \lo(\kk)$ is called strictly positive if $ A >\zero \implies \Phi(A) > \zero$. In the case where $\kk = \h$, $\Phi$  is rank non-decreasing if $\rank{\Phi(A)} \geqslant \rank{A}$ for all $A \geqslant \zero$. $\Phi$ is rank non-decreasing if and only if $\Phi^*$ is rank non-decreasing (\lemref{lemma:equivalent-dual-positive-rank-non-dec}).  While all rank non-decreasing  maps are strictly positive, not all strictly positive  maps are rank non-decreasing, see an example in \cite[Appendix B]{Mohammady2022a}.    A channel $\Phi$ acting in $\h$ is called bistochastic if it preserves both the trace and the unit; $\Phi$ is bistochastic if and only if $\Phi^*$ is bistochastic.  While all bistochastic channels are rank non-decreasing,  not all rank non-decreasing channels are bistochastic. See \app{app:completely-rank-non-decreasing} for a detailed discussion on strictly positive and rank non-decreasing maps.

\subsection{Fixed points of operations}

We define the fixed point sets  of an operation $\Phi$ acting in a system $\h$, and its dual $\Phi^*$, as
\begin{align*} 
   &\ff(\Phi) \coloneq \{A \in \lo(\h) : \ \Phi(A) = A\} \,  ,  &\ff(\Phi^*) \coloneq \{A \in \lo(\h) :\ \Phi^*(A) = A\}\, .
\end{align*}
$\ff(\Phi)$ and $\ff(\Phi^*)$ are closed under linear combination and involution, 
 and they have the same dimension, where the dimension of a subset $\mathscr{A} \subseteq \lo(\h)$ is equal to  the smallest number of linearly independent operators that spans $\mathscr{A}$.   If $\Phi$ is a channel then by the Schauder–Tychonoff fixed point theorem \cite{Fixed-points-appl, Tumulka2024} $\ff(\Phi)$ contains at least one state $\rho_0$. On the other hand, an $E$-compatible  operation $\Phi$ acting in $\h$ has non-vanishing fixed points if and only if there exists a projection $P $ such that $EP = P$, which implies that $\| E\| =1$ must hold,  and  $ P \Phi(P \bigcdot P) P  = \Phi(P \bigcdot P)$.   See \app{app:definite-effects} for further details.

\subsection{Observables,  instruments, and measurement processes}

An observable on  $\h$  is represented by a normalised positive operator valued measure (POVM) $\E$. We consider only discrete observables,  which may be represented as a family of effects $\E \coloneq \{E_x : x \in \xx\}$   on $\h$ such that $\sum_{x\in \xx} E_x = \one$. Here,  $\xx \coloneq \{x_1, \dots, x_N \}$ is a discrete (and finite) value space (or space of  measurement outcomes). The probability of observing outcome $x$ when measuring $\E$ in the state $\rho$ is given by the Born rule as $p^E_\rho(x) \coloneq \tr[\rho \, E_x]$.  Without loss of generality, we shall consider only observables such that $E_x \ne \zero$ for any $x$. Since an outcome $x$ for which $E_x = \zero$ is  observed with probability zero, this can always be done by replacing the original value space $\xx$ with the relative complement $\xx \backslash \{x : E_x = \zero\}$. We define the following classes of observables:

\

\begin{definition}[Observables]\label{defn:observable-class}
    Let $\E \coloneq \{E_x : x\in \xx\}$ be an observable. 
\begin{enumerate}[(i)]
    \item $\E$ is a commutative observable if $\E \subset \E'$, i.e., if  $[E_x,E_y]=\zero$ for all $x,y \in \xx$.

    \item  $\E$ is a sharp  (or projection valued) observable if $E_x E_y = \delta_{x,y} E_x$ for all $x,y \in \xx$, i.e., if all the effects  are mutually orthogonal projections. An observable that is not sharp is called unsharp.

    \item $\E$ is a norm-1 observable if $\|E_x\| = 1$ for all $x \in \xx$.

    \item $\E$ is an indefinite (or completely unsharp) observable   if $\zero < E_x < \one$ for all $x \in \xx$. 
\end{enumerate}
 \unskip\nobreak\hfill $\square$   
\end{definition}

Note that  sharp observables are both commutative and norm-1. On the other hand, while a norm-1 (indefinite) observable may also be commutative, it cannot be indefinite (norm-1). Moreover, while a norm-1 observable admits for every outcome $x$  a state $\rho$ such that $p^\E_\rho(x) \in \{0,1\}$, for an indefinite observable it holds that $0 < p^\E_\rho(x) < 1$ for all  $\rho$ and $x$. That is, norm-1 observables admit states for which the outcome of measurement can be predicted with probabilistic certainty, whereas this possibility does not exist for an indefinite observable.  

A discrete instrument (or normalised operation valued measure) acting in $\h$, with outcomes in $\xx$,  is represented by a family of operations  $\ii = \{\ii_x \in \opset(\h)  : x\in \xx\}$  such that $\ii_\xx(\bigcdot) \coloneq \sum_{x \in \xx} \ii_x(\bigcdot)$ is a channel. We denote the set of instruments acting in $\h$ as $\inset(\h)$. Every instrument is compatible with a unique observable $\E$ via $\ii_x^*(\one) = E_x$;    we shall refer to such an instrument $\ii$, and to the corresponding channel $\ii_\xx$, as $\E$-compatible. As above, we  consider only instruments such that $\ii_x^*(\onesys) = E_x \ne \zero$ for any $x$. Note that if $|\xx| = N = 1$, then the instrument has just one operation $\ii_x = \ii_\xx$, which is in fact a channel, and is thus compatible with a trivial observable $\E = \{E_x = \onesys\}$.

Let $\hs$ be a  system of interest. Let $\ha$ be an auxiliary system (an apparatus or environment) with $\xi$  a state on $\ha$;  let   $\ee$ be  an interaction channel acting in $\hs \otimes \ha$; and let $\Z\coloneq \{Z_x: x \in \xx\}$ be an observable on $\ha$ with the same value space $\xx$ as that of the observable to be measured in $\hs$. The tuple $(\ha, \xi, \ee, \Z)$ is a \emph{measurement process} for the instrument $\ii$ if  all operations of $\ii$ may be written as
\begin{align}\label{eq:instrument-implementation}
    \ii_x(\bigcdot) = \tra[\onesys \otimes Z_x \  \ee(\bigcdot \otimes \xi)] \qquad  \forall \,  x \in \xx \, .
\end{align}
The physical interpretation of the above is as follows: we first couple the system of interest with the apparatus, prepared in some fixed state $\xi$,  and let them interact via the channel $\ee$. Subsequently, we \emph{post-select} the apparatus by a measurement of $\Z$, so that conditional on observing an outcome $x$  associated with the effect $Z_x$, the system will transform via $\ii_x$.  By the Naimark-Ozawa dilation theorem \cite{Ozawa1984} every instrument in $\inset(\hs)$ (and hence every operation in $\opset(\hs)$)  admits \emph{some} process: choose $\xi$ to be pure, $\ee$ to be unitary, and $\Z$ to be  projection-valued. Since $\hs$ is assumed to be finite-dimensional, then $\ha$ can always be chosen to be  finite.   However, in general $\xi$ need not be pure, $\ee$ need not be unitary, and $\Z$ need not be projection-valued.


\section{Thermodynamically consistent processes}\label{sec:consistent-processes}

In this paper, we wish to determine the properties of the   operations and instruments one may realise by a measurement process $(\ha, \xi, \ee, \Z)$ as in \eq{eq:instrument-implementation},  with the only constraints  being that the process implementing them must be consistent with  thermodynamic principles: in particular, the second and  third laws of thermodynamics. To be sure, a process is fully consistent with the laws of thermodynamics if it is consistent with the conjunction of all thermodynamical laws. But in order to delineate what particular law is responsible for the (un)attainability of a given operation or instrument, we establish the following  \emph{hierarchy} of thermodynamically consistent processes:

\

\begin{definition}[Thermodynamically consistent processes]
\label{defn:thermo-consistent-process}

Let $(\ha, \xi, \ee, \Z)$ be a process. We say that the process is:

\begin{enumerate}[(I)]

    \item   consistent with the weak third law if  $\xi$ is a strictly positive state and $\ee$ is a  strictly positive channel. 

    \item  consistent with the strong third law if   $\xi$ is a strictly positive state and $\ee$ is a  rank non-decreasing channel. 

     \item fully consistent with thermodynamics if   $\xi$ is a strictly positive state and $\ee$ is a  bistochastic channel. 
\end{enumerate}
\unskip\nobreak\hfill $\square$
\end{definition}

\

{\bf The weak third law: }The third law of thermodynamics, or Nernst's unattainability principle, states that it is impossible to cool a system to absolute zero temperature with finite resources of time, energy, or control complexity \cite{Schulman2005, Allahverdyan2011a, Freitas2018a, Taranto2021}. In finite dimensions, a system $\hs \otimes \ha$ with Hamiltonian $H$ at thermal equilibrium with respect to temperature $T$ is described by a Gibbs state $\tau(T) \coloneq e^{- H/ k_B T} / \tr[e^{- H/ k_B T}]$.  $\tau(T)$ is strictly positive whenever $T > 0$, and (provided a non-trivial Hamiltonian) is rank deficient when $T=0$. Consequently, a channel $\ee$ acting in $\hs \otimes \ha$ is  consistent with the third law  only if $\ee (\tau(T)) > \zero$ whenever $T >0$. Since the existence of a strictly positive operator in the image of a positive linear map is equivalent to the strict positivity of such a map \cite{VomEnde2022a}, a minimal requirement for a channel (with potentially different input and output spaces) to be consistent with the third law---the weak third law---is that such channel must be strictly positive \cite{Mohammady2022a}.   Indeed, such a characterisation already ensures that the only state preparations  $\xi$ on $\ha$ that are consistent with the  third law are strictly positive: A state preparation $\xi$ on $\ha$ is characterised as a channel which sends a trivial system $\co^1 \equiv \co |\Omega\>$  to $\ha$, i.e.,  $\Xi : \lo(\co^1) \to \lo(\ha), \proj\Omega \mapsto \xi$. Since $\proj{\Omega}$ is strictly positive in $\lo(\co^1)$, then strict positivity of the channel $\Xi$ ensures that the state $\xi$ is strictly positive.

{\bf The strong third law: } Note that provided a rank-deficient input state $\rho$ on $\hs \otimes \ha$, it is possible to have $\rank{\ee(\rho)} < \rank{\rho}$ even if $\ee$ is a strictly positive channel acting in $\hs \otimes \ha$ \cite[Appendix B]{Mohammady2022a}. As such, a stronger form of the third law---applicable now only to the case where a channel's input and output systems are the same---would be to demand that $\ee$ must be rank non-decreasing. The distinction between the weak and the strong third law can be given  the following operational interpretation: a channel   that is consistent with the weak third law cannot prepare a pure output from a strictly positive input. On the other hand, a channel  that is consistent with the strong third law can prepare a pure output only from a pure input. Note that for such a characterisation to be physically meaningful as a \emph{law of nature}, then  a local application of a rank non-decreasing channel should not reduce the rank of a global entangled state; all extensions $\ee \otimes \idch$, with $\idch$ the identity channel acting in an arbitrary finite system, must also be rank non-decreasing, i.e., $\ee$ must be \emph{completely rank non-decreasing}. The reasoning is analogous to why a positive map is physical only if it is completely positive. While it is well-known that $\ee \otimes \idch$ is strictly positive (or bistochastic) if $\ee$ is strictly positive (or bistochastic), to the best of our knowledge the same has not been shown to hold for rank non-decreasing channels. In \app{app:completely-rank-non-decreasing} (\propref{prop:complete-rank-non-dec}) we prove  that  $\ee \otimes \idch$ is indeed rank non-decreasing whenever $\ee$ is.

{\bf Full consistency with thermodynamics: }
As discussed recently in Ref. \cite{Mohammady2025}, a pre-condition for even beginning to interpret  the measurement process as a \emph{thermodynamic process} which may be subject to thermodynamic laws is that the full compound $\hs \otimes \ha$ must be thermodynamically closed, exchanging at most mechanical energy (work) with an external environment, but not heat. That is, the auxiliary system $\ha$ must be extended to include all degrees of freedom (thermal baths, etc.) that may exchange heat with $\hs$ and amongst each other, so that the full process is \emph{adiabatic}.  If we wish to consider the interaction channel $\ee$ as an independent, autonomous part of the process---that is, a process that is independent of the prior state preparation of both the system and the apparatus, as well as the subsequent measurement of the pointer observable---as is commonly assumed, either explicitly or implicitly,  then consistency with the second law demands that  $\ee$ must be  a bistochastic (e.g. unitary) channel \cite{Purves-2020}, since otherwise it could be used to construct a \emph{perpetuum mobile}. Since bistochastic channels are rank non-decreasing, then as long as the apparatus preparation $\xi$ is also strictly positive, then the process will be fully consistent with both the second, and the (strong) third laws. We remark that for a process to be consistent with the second law, then the \emph{objectification} mechanism with which the pointer observable $\Z$ obtains definite values,   modelled by a $\Z$-compatible instrument $\jj$ acting in $\ha$, must also be taken into account. But  we may always choose the L\"uders instrument $\jj^L_x(\bigcdot) \coloneq \sqrt{Z_x} \bigcdot \sqrt{Z_x}$ which yields a bistochastic channel $\jj^L_\xx$, so that the full process $(\idchsys \otimes \jj^L_\xx) \circ \ee$ is bistochastic whenever $\ee$ is, and hence the full process is consistent with the second law \cite{Minagawa2023a, Mohammady2025}. Since all $\Z$-compatible objectification instruments result in the same instrument $\ii$ acting in the system, for simplicity in this paper we  consider only the pointer observable and not the instrument that objectifies it.


\section{Results}\label{sec:results}

\subsection{Operations}\label{sec:results-operations}

In this section, we shall characterise the individual operations that admit a thermodynamically consistent process. In analogy with \eq{eq:instrument-implementation}, we say that the tuple $(\ha, \xi, \ee, Z)$, where $Z$ is a \emph{single effect} on $\ha$, is a process for an operation $\Phi$ acting in $\hs$ if it holds that
\begin{align}\label{eq:operation-implementation}
    \Phi(\bigcdot) =  \tra[\onesys \otimes Z \  \ee(\bigcdot \otimes \xi)]  \, .
\end{align}
Note that if $Z = \oneapp$, corresponding to the case where no post-selection takes place, then $\Phi$ is a channel.  By \defref{defn:thermo-consistent-process}, we define the following sets of thermodynamically consistent operations:

\

\begin{definition}[Thermodynamically consistent operations]
\label{defn:thermo-consistent-operation}

\

\begin{enumerate}[(I)]

    \item   {\bf Operations consist with the weak third law: }  $\opset_\I(\hs)$ is defined as the set of operations acting in $\hs$ that admit a process $(\ha, \xi, \ee, Z)$,  as in \eq{eq:operation-implementation}, such that   $\xi$ is a strictly positive state and $\ee$ is a  strictly positive channel. 

    \item  {\bf Operations consistent with the strong third law: }  $\opset_\II(\hs)$ is defined as the set of operations acting in $\hs$ that admit a process $(\ha, \xi, \ee, Z)$,  as in \eq{eq:operation-implementation}, such that   $\xi$ is a strictly positive state and $\ee$ is a  rank non-decreasing channel. 

     \item  {\bf Operations fully consistent with thermodynamics: }  $\opset_\III(\hs)$ is defined as the set of operations acting in $\hs$ that admit a process $(\ha, \xi, \ee, Z)$,  as in \eq{eq:operation-implementation}, such that   $\xi$ is a strictly positive state and $\ee$ is a  bistochastic channel. 
\end{enumerate}
\unskip\nobreak\hfill $\square$
\end{definition}

 For any $C \in \{ \I, \II, \III\}$, $\opset_C(\hs)$ contains the identity channel $\idchsys$,  is convex,  and is closed under composition. That is, for any $\Phi_1, \Phi_2 \in \opset_C(\hs)$ and $0 \leqslant  \lambda \leqslant  1$, the operations $\Phi_3 (\bigcdot) \coloneq \lambda \, \Phi_1(\bigcdot) + (1-\lambda) \Phi_2(\bigcdot)$ and $\Phi_4 \coloneq \Phi_2 \circ \Phi_1$ admit a process that is subject to the thermodynamic constraint $C$, and are thus also members of $\opset_C(\hs)$. It follows that these sets form convex \emph{monoids}.  See \app{app:convexity} for the proof.

We also note that
\begin{align*}
&\opset_{\III}(\hs) \subsetneq \opset_{\II}(\hs) \subsetneq \opset_{\I}(\hs) \subsetneq \opset(\hs) \, .
\end{align*}

That each set in this chain is a subset of those appearing to its right follows trivially from \defref{defn:thermo-consistent-operation} and the fact that the set of bistochastic   channels  is a proper subset of the set of  rank non-decreasing channels, which is itself a   proper subset of the set of strictly positive channels, see \app{app:completely-rank-non-decreasing}. That each set is a  proper subsets of those appearing to its right  will be shown below in Theorems \ref{thm:weak-third}, \ref{thm:strong-third}, and \ref{thm:channel-III-sp-fixed-state}. Moreover, let us remark that since thermodynamic consistency does not restrict the pointer effect $Z$ in any way, then every effect $E$ admits an operation in $\opset_C(\hs)$ for every $C \in \{\I, \II, \III\}$. To see this,  consider a \emph{trivial} process where $\ha = \hs$, $\ee$ is a unitary swap channel,  $\xi >\zero$, and the pointer effect is chosen as  $Z = E$. This process implements the operation    $\Phi(\bigcdot) = \tr[E \, \bigcdot] \xi$, which is clearly compatible with $E$. Since such a process is  fully consistent with thermodynamics, then  $\Phi$ exists in $\opset_\III(\hs)$, and hence also in $\opset_\II(\hs)$ and $\opset_\I(\hs)$ by the above.

\ 

\begin{theorem}[Operations consistent with the weak third law] \label{thm:weak-third}
An operation $\Phi$ (that is compatible with a non-vanishing effect $E \ne \zero$) exists in  $\opset_\I(\hs)$ if and only if $\Phi$ is strictly positive. 
\unskip\nobreak\hfill $\square$ \end{theorem}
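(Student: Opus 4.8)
The plan is to prove both implications essentially by hand, with all the genuine work in the ``if'' direction, and there only for operations that are not channels. For the ``only if'' direction I would argue directly from the definition of strict positivity. Suppose $\Phi \in \opset_\I(\hs)$, realised by a process $(\ha,\xi,\ee,Z)$ as in \eq{eq:operation-implementation} with $\xi > \zero$ and $\ee$ a strictly positive channel. First note that $Z \neq \zero$, for otherwise $\Phi$ would be the zero map and hence $E = \Phi^*(\onesys) = \zero$, contrary to hypothesis. Now fix any $\rho > \zero$ on $\hs$; then $\rho \otimes \xi$ is strictly positive (its eigenvalues are products of those of $\rho$ and of $\xi$), so $\sigma \coloneq \ee(\rho\otimes\xi) > \zero$ by strict positivity of $\ee$. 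For any nonzero $\ket{\psi}\in\hs$, writing out the partial trace gives $\bra{\psi}\Phi(\rho)\ket{\psi} = \tr[(\proj{\psi}\otimes Z)\sigma] = \tr[\sigma\,B^*B]$ with $B \coloneq \proj{\psi}\otimes Z^{1/2} \neq \zero$; since $\sigma > \zero$, the elementary fact recalled in Sec.~2 that $\tr[A\,B^*B]=\zero \iff B=\zero$ for $A>\zero$ shows this is strictly positive. As $\ket{\psi}$ was arbitrary, $\Phi(\rho) > \zero$, so $\Phi$ is strictly positive.

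\textbf{Sufficiency, the channel case.} For the converse, set $E \coloneq \Phi^*(\onesys)$ and split into two cases. If $E = \onesys$, i.e.\ $\Phi$ is a channel, I would take any apparatus $\ha$, the maximally mixed state $\xi \coloneq \oneapp/\dim(\ha) > \zero$, pointer effect $Z \coloneq \oneapp$, and interaction $\ee \coloneq \Phi \otimes \idchapp$. This is a channel; it is strictly positive because $\Phi$ is (using that tensoring with the identity preserves strict positivity, or directly that $(\Phi\otimes\idchapp)(\onesys\otimes\oneapp) = \Phi(\onesys)\otimes\oneapp > \zero$ together with the cited equivalence); and $\tra[\ee(\rho\otimes\xi)] = \Phi(\rho)$, so $\Phi \in \opset_\I(\hs)$.

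\textbf{Sufficiency, the genuine-operation case.} The substantive case is $E \neq \onesys$. Here I would use a qubit apparatus together with post-selection, routing the ``missing trace'' of the sub-normalised $\Phi$ into an orthogonal sector. Fix a strictly positive state $\omega_0 > \zero$ on $\hs$, let $\ha \coloneq \co^2$ with orthonormal basis $\{\ket{g},\ket{b}\}$, put $\xi \coloneq \oneapp/2$, $Z \coloneq \proj{g}$, and define
\begin{align*}
\ee(Y) \coloneq \Phi(\tra[Y]) \otimes \proj{g} \;+\; \tr\!\big[((\onesys - E)\otimes\oneapp)\,Y\big]\,\omega_0 \otimes \proj{b}\,.
\end{align*}
I would check that $\ee$ is completely positive (the second summand is a completely positive measure--and--prepare map, using $(\onesys-E)\otimes\oneapp \geqslant \zero$) and trace preserving (the two summands contribute $\tr[(E\otimes\oneapp)Y]$ and $\tr[((\onesys-E)\otimes\oneapp)Y]$, which sum to $\tr[Y]$), hence a channel; that it is strictly positive because its image contains the strictly positive operator $\ee(\onesys\otimes\oneapp) = \dim(\ha)\big(\Phi(\onesys)\otimes\proj{g} + \tr[\onesys-E]\,\omega_0\otimes\proj{b}\big)$, whose two diagonal blocks $\Phi(\onesys)$ and $\tr[\onesys-E]\,\omega_0$ are both strictly positive (this is exactly where $\Phi$ strictly positive and $E\neq\onesys$ enter); and finally that $\ee(\rho\otimes\xi) = \Phi(\rho)\otimes\proj{g} + \tr[(\onesys-E)\rho]\,\omega_0\otimes\proj{b}$, so that $\tra[(\onesys\otimes\proj{g})\,\ee(\rho\otimes\xi)] = \Phi(\rho)$. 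Thus $(\ha,\xi,\ee,Z)$ is a process consistent with the weak third law realising $\Phi$.

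\textbf{Main obstacle.} The hard part is the construction in the previous paragraph: one must recognise that allowing the interaction to be merely strictly positive rather than unitary, together with a genuinely full-rank apparatus state, already suffices, and then find the right gadget. The subtlety is that strict positivity of $\ee$ forbids the textbook move of ``resetting'' the apparatus to a pure state and forbids $\ee(\rho\otimes\xi)$ from being supported on a proper subspace; the orthogonal ``$b$-sector'' populated by the dummy state $\omega_0$ is precisely what simultaneously restores trace preservation and certifies $\ee(\onesys\otimes\oneapp) > \zero$. The same tension explains why the channel case needs separate (and easier) treatment: when $\Phi$ is trace preserving there is no missing trace to hide elsewhere, so a rank-deficient post-selection $Z$ becomes impossible and one falls back on $\ee = \Phi\otimes\idchapp$ with $Z = \oneapp$.
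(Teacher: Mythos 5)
Your proof is correct and follows essentially the same route as the paper's: the sufficiency construction (routing $\Phi$'s output to one pointer sector and the missing trace $\tr[(\onesys-E)\,\bigcdot\,]$ to an orthogonal sector populated by a full-rank dummy state, with strict positivity certified via $\ee(\onesys\otimes\oneapp)>\zero$) is exactly the $N=2$ instance of the instrument-level construction in \propref{prop:necessary-sufficient-weak-third-law}, and the channel case is handled identically via $\ee=\Phi\otimes\idchapp$. The only cosmetic difference is that you prove necessity by a direct Schr\"odinger-picture computation of $\<\psi|\Phi(\rho)|\psi\>$, whereas the paper works in the Heisenberg picture through $\Gamma^\ee_\xi$ and \lemref{lemma:equivalent-dual-positive-rank-non-dec}; both hinge on the same elementary fact that $\tr[A\,B^*B]=0\iff B=\zero$ for $A>\zero$.
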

The necessity of strict positivity was shown already in Lemma D.1 of Ref. \cite{Mohammady2022a}.  The sufficiency is a new result;  see \app{app:weak-third-law} for further details and the proof. We remark that any effect $E\ne\zero$ admits a strictly positive $E$-compatible operation (\corref{app-cor:strict-positive-for-any-effect}). Moreover, it trivially follows from above that all rank non-decreasing operations, permitted only for strictly positive effects $E > \zero$, exist in $\opset_\I(\hs)$.

\

\begin{theorem}[Operations consistent with the strong third law]  \label{thm:strong-third}
A channel exists in $\opset_\II(\hs)$ if and only if it is rank non-decreasing, and any rank non-decreasing operation compatible with an indefinite effect $\zero < E < \onesys$ exists in $\opset_\II(\hs)$. 
\unskip\nobreak\hfill $\square$ \end{theorem}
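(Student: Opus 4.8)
The plan is to prove the two parts of \thmref{thm:strong-third} separately, since each is an ``if and only if'' (for channels) or a one-directional sufficiency claim (for rank non-decreasing operations with indefinite effect).

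\textbf{Necessity for channels.} Suppose $\Phi \in \opset_\II(\hs)$ is a channel. Then $Z = \oneapp$ in \eq{eq:operation-implementation} (otherwise $\Phi$ would not be trace preserving), so $\Phi(\bigcdot) = \tra[\ee(\bigcdot \otimes \xi)]$ with $\xi > \zero$ and $\ee$ rank non-decreasing. I would argue that $\rho \otimes \xi$ has rank $\rank{\rho} \cdot \dim(\ha)$ since $\xi$ is strictly positive; then $\rank{\ee(\rho \otimes \xi)} \geqslant \rank{\rho} \cdot \dim(\ha)$ because $\ee$ is rank non-decreasing. Finally, partial tracing cannot decrease the rank by more than a factor of $\dim(\ha)$ --- concretely, if $\sigma \in \s(\hs \otimes \ha)$ then $\rank{\trs[\sigma]}$ or rather $\rank{\tra[\sigma]} \geqslant \rank{\sigma} / \dim(\ha)$, which follows from writing $\sigma$ in terms of an orthonormal basis of eigenvectors and noting each contributes a vector in $\hs$ to the range of $\tra[\sigma]$, spanning a space of dimension at most $\dim(\ha)$ times larger than the true rank. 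Combining these, $\rank{\Phi(\rho)} \geqslant \rank{\rho}$, so $\Phi$ is rank non-decreasing.

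\textbf{Sufficiency for channels.} Conversely, suppose $\Phi$ acting in $\hs$ is rank non-decreasing. By \propref{prop:complete-rank-non-dec}, $\Phi \otimes \idch\sub{\aa}$ is rank non-decreasing for any finite auxiliary space. The plan is to exhibit a dilation using a rank non-decreasing (not necessarily bistochastic) interaction with a strictly positive apparatus state. The natural candidate: take $\ha$ with $\dim(\ha) = \dim(\hs)$, fix any strictly positive $\xi \in \s(\ha)$, and build $\ee$ as a composition of (i) a rank non-decreasing ``implementation'' of $\Phi$ on the system tensored with something that discards or prepares on $\ha$, and (ii) a unitary swap. The difficulty is that a generic rank non-decreasing channel need not have a Stinespring-type dilation with strictly positive apparatus and rank non-decreasing (rather than unitary) coupling in an obvious way; I expect the construction to route through the observation that $\Phi$ itself, acting in $\hs$, can be realised trivially by the process $(\ha, \xi, \ee, \oneapp)$ where $\ee(\bigcdot \otimes \xi) \coloneq \Phi(\trs[\bigcdot \otimes \xi \text{ over } \ha]) \otimes \xi$ --- but this $\ee$ is not defined on all of $\lo(\hs \otimes \ha)$ unless we extend it, e.g. as $\ee(X) = (\Phi \otimes \Psi)(X)$ for a suitable rank non-decreasing channel $\Psi$ acting in $\ha$ with $\xi$ a fixed point. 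Taking $\Psi = \idch\sub{\aa}$ gives $\ee = \Phi \otimes \idch\sub{\aa}$, which is rank non-decreasing by \propref{prop:complete-rank-non-dec}, has $\xi$ in its (output) image appropriately, and yields $\tra[\ee(\rho \otimes \xi)] = \Phi(\rho) \tr[\xi] = \Phi(\rho)$. So $\ee = \Phi \otimes \idch\sub{\aa}$ with $Z = \oneapp$ works, and this is the heart of the sufficiency direction.

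\textbf{Rank non-decreasing operations with indefinite effect.} Now let $\Phi$ be a rank non-decreasing operation compatible with $\zero < E < \onesys$. The idea is to convert $\Phi$ into a channel on a larger space and then post-select. Since $E$ is indefinite and hence strictly positive with $\|E\| < 1$, I would use the fact (to be checked) that a rank non-decreasing operation $\Phi$ with $\Phi^*(\one) = E > \zero$ can be completed to a rank non-decreasing channel $\tilde\Phi$ on $\hs$, for instance $\tilde\Phi(\bigcdot) \coloneq \Phi(\bigcdot) + \Phi'(\bigcdot)$ where $\Phi'$ is compatible with $\onesys - E > \zero$ and chosen so that $\tilde\Phi$ remains rank non-decreasing --- e.g. $\Phi'(\bigcdot) = \tr[(\onesys - E)\bigcdot]\,\omega$ for a strictly positive $\omega$, which is itself strictly positive and whose addition can only help the rank. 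Then, by the channel case just proved, $\tilde\Phi \in \opset_\II(\hs)$ via some process $(\ha, \xi, \ee, \oneapp)$. To recover $\Phi$ itself I need to post-select on an appropriate apparatus effect $Z$; the natural move is to enlarge $\ha$ by one extra qubit onto which the ``which-operation'' information ($\Phi$ vs $\Phi'$) is written, and then $Z$ projects onto the $\Phi$-branch. The main obstacle here, and the part I would spend the most care on, is arranging that the enlarged interaction channel on $\hs \otimes \ha$ remains \emph{rank non-decreasing} (not merely strictly positive) while still correctly branching --- a generic controlled operation that writes classical information into an ancilla tends to create rank-deficient outputs on rank-deficient inputs. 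I expect this to require exploiting the indefiniteness of $E$ (so that both branch weights are bounded away from $0$ and $1$) together with choosing the ancilla's initial state strictly positive so that the branching ancilla never becomes pure, analogous to the construction used for \thmref{thm:weak-third} in \app{app:weak-third-law}; the indefiniteness of $E$ is presumably exactly what makes the rank bookkeeping go through, which is why the statement is restricted to indefinite effects.
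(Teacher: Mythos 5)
The channel half of your argument is sound and is essentially the paper's proof: for sufficiency you arrive at exactly the construction used there, $\ee = \Phi \otimes \idchapp$ with $Z = \oneapp$ and \propref{prop:complete-rank-non-dec} supplying rank non-decrease of the extension; for necessity your chain $\rank{\rho}\dim(\ha) = \rank{\rho\otimes\xi} \leqslant \rank{\ee(\rho\otimes\xi)} \leqslant \rank{\Phi(\rho)}\dim(\ha)$ is the paper's, with your elementary support argument standing in for the subadditivity bound $\rank{\sigma}\leqslant\rank{\tra[\sigma]}\,\rank{\trs[\sigma]}$ of \lemref{lemma:rank-partial-trace}. One point to tighten: that a channel forces $Z = \oneapp$ does not follow from trace preservation alone; you need that $\ee$, being rank non-decreasing and hence strictly positive, gives $\ee(\rho\otimes\xi)>\zero$ for $\rho>\zero$, after which $\tr[(\onesys\otimes Z)\,\ee(\rho\otimes\xi)]=1$ pins $Z=\oneapp$.

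The third claim is where your proposal has a genuine gap. You correctly identify the crux --- building a branching interaction that records which of $\Phi$, $\Phi'$ occurred while remaining rank non-decreasing --- but you do not construct it, and your guess about how indefiniteness rescues the rank bookkeeping (keeping the branch ancilla mixed, branch weights bounded away from $0$ and $1$) points in the wrong direction. The paper's interaction, $\ee(A\otimes B)=\sum_{x}\ii_x(A)\otimes\tr[B]\proj{x}$ on an ancilla of dimension $N=|\xx|$ with $Z_x=\proj{x}$, writes the branch label into \emph{orthogonal pure} ancilla states, so the ancilla does become pure on each branch. Rank non-decrease holds anyway because the branches are orthogonal and there are exactly $\dim(\ha)$ of them: for any state $\rho\sub{\s\aa}$ with marginals $\rho\sub{\s}$, $\rho\sub{\aa}$,
\begin{align*}
\rank{\ee(\rho\sub{\s\aa})} = \sum_{x}\rank{\ii_x(\rho\sub{\s})} \geqslant N\,\rank{\rho\sub{\s}} = \rank{\rho\sub{\s}}\dim(\ha) \geqslant \rank{\rho\sub{\s}}\,\rank{\rho\sub{\aa}} \geqslant \rank{\rho\sub{\s\aa}}\, ,
\end{align*}
the last step again being \lemref{lemma:rank-partial-trace}. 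The actual role of indefiniteness is also different from what you suppose: an operation can be rank non-decreasing only if its effect is strictly positive (\corref{app-cor:strict-positive-for-any-effect}), so you need \emph{both} $E>\zero$ and $\onesys-E>\zero$ in order that every branch operation --- the given $\Phi$ and a complementary $\Phi'$, e.g.\ $\Phi'(\bigcdot)=\tr[(\onesys-E)\,\bigcdot]\,\omega$ with $\omega>\zero$ --- can be taken rank non-decreasing; that is precisely the condition $\zero<E<\onesys$. Your completion $\tilde\Phi=\Phi+\Phi'$ is fine, but routing through the channel case and then bolting on a recording qubit leaves the decisive rank computation undone; the direct construction above is what closes it.
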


We note that not all operations in $\opset_\II(\hs)$ are rank non-decreasing:  an operation is rank non-decreasing only if it is compatible with a strictly positive effect (\corref{app-cor:strict-positive-for-any-effect}), whereas every effect, including those with a non-trivial kernel, admit an operation in $\opset_\II(\hs)$. Moreover,  not all rank non-decreasing operations exist in $\opset_\II(\hs)$. This is  because every norm-1 and strictly positive effect admits a rank non-decreasing operation that has  non-vanishing fixed points, whereas  the only operations in $\opset_\II(\hs)$ that have non-vanishing fixed points are channels, i.e., operations compatible with the unit effect. To show  the latter claim, we note that an $E$-compatible operation $\Phi$ has non-vanishing fixed points only if $E$ is a norm-1 effect, and only if $\Phi$ has fixed states;  a state $\rho$ is a fixed point of $\Phi$ only if $\tr[E \rho] = \tr[\Phi(\rho)] = \tr[\rho] = 1$. But we obtain the following result:

\

\begin{lemma}\label{lemma:opII-rank-inc}
 Let $E$ be a non-trivial norm-1 effect, and let $\Phi \in \opset_\II(\hs)$ be an $E$-compatible operation. Then for all $\rho \in \s(\hs)$ the following holds:
 \begin{align*}
     \tr[E \rho] = 1 \implies \rank{\Phi(\rho)} > \rank{\rho} \, .
 \end{align*}
\unskip\nobreak\hfill $\square$ \end{lemma}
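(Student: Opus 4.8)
The plan is to unpack a class-$\II$ process $(\ha,\xi,\ee,Z)$ for $\Phi$ as in \eq{eq:operation-implementation}, with $\xi$ strictly positive and $\ee$ a rank non-decreasing channel acting in $\hs\otimes\ha$, and to show that on inputs $\rho$ with $\tr[E\rho]=1$ the post-selection effect $Z$ is immaterial, thereby reducing the claim to a rank estimate for a rank non-decreasing channel followed by a partial trace. The first observation is that non-triviality of $E$ forces $Z\ne\oneapp$: if $Z=\oneapp$, then, since $\ee^*$ is unital, $E=\Phi^*(\onesys)=\tra[\ee^*(\onesys\otimes\oneapp)(\onesys\otimes\xi)]=\tra[\onesys\otimes\xi]=\onesys$, which is trivial. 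Let $Q$ be the orthogonal projection onto the eigenvalue-$1$ eigenspace of $Z$; then $Q\ne\oneapp$, so $\rank{Q}<\dim(\ha)$.

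Next I fix $\rho\in\s(\hs)$ with $\tr[E\rho]=1$ and put $\sigma\coloneq\ee(\rho\otimes\xi)$, a state on $\hs\otimes\ha$. Since $\tr[(\onesys\otimes Z)\sigma]=\tr[\Phi(\rho)]=\tr[E\rho]=1=\tr[\sigma]$, we get $\tr[(\onesys\otimes(\oneapp-Z))\sigma]=0$ with both operators positive semidefinite, hence $(\onesys\otimes(\oneapp-Z))\sigma=0$; equivalently $(\onesys\otimes Q)\sigma=\sigma$, i.e. $\supp(\sigma)\subseteq\hs\otimes Q\ha$ (in particular $\rank{Q}\geqslant1$). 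As $Z$ acts as the identity on the range of $Q$, this gives $(\onesys\otimes Z)\sigma=\sigma$, and therefore
\[
\Phi(\rho)=\tra[(\onesys\otimes Z)\sigma]=\tra[\sigma]=\tra[\ee(\rho\otimes\xi)].
\]

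For the rank count, set $\kk\coloneq Q\ha$. Since $\supp(\sigma)\subseteq\hs\otimes\kk$, we may treat $\sigma$ as a positive operator on $\hs\otimes\kk$, and $\tra[\sigma]$ is then its partial trace over the factor $\kk$. If $\ket\phi\in\hs$ is orthogonal to $\supp(\tra[\sigma])$, then $\tr[(\proj\phi\otimes\one\sub{\kk})\sigma]=\bra\phi\tra[\sigma]\ket\phi=0$, so $(\proj\phi\otimes\one\sub{\kk})\sigma=0$ by positivity, i.e. $\mathrm{ran}(\sigma)\subseteq\phi^\perp\otimes\kk$; intersecting over all such $\ket\phi$ yields $\supp(\sigma)\subseteq\supp(\tra[\sigma])\otimes\kk$, whence $\rank{\sigma}\leqslant\rank{\tra[\sigma]}\cdot\dim(\kk)=\rank{\tra[\sigma]}\cdot\rank{Q}$. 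Combining this with the rank non-decrease of $\ee$ and the strict positivity of $\xi$, which give $\rank{\sigma}\geqslant\rank{\rho\otimes\xi}=\rank{\rho}\,\dim(\ha)$,
\[
\rank{\Phi(\rho)}=\rank{\tra[\sigma]}\;\geqslant\;\frac{\rank{\sigma}}{\rank{Q}}\;\geqslant\;\frac{\rank{\rho}\,\dim(\ha)}{\rank{Q}}\;>\;\rank{\rho},
\]
the strict inequality using $\rank{Q}<\dim(\ha)$ and $\rank{\rho}\geqslant1$; hence $\rank{\Phi(\rho)}>\rank{\rho}$, as claimed.

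I expect the point requiring the most care to be the partial-trace rank bound $\rank{\sigma}\leqslant\rank{\tra[\sigma]}\cdot\rank{Q}$ --- the content being that the support of a positive bipartite operator lies in a ``full product'' subspace $\supp(\tra[\sigma])\otimes\kk$ with $\supp(\tra[\sigma])$ the support of its first marginal --- which uses positivity of $\sigma$ together with the elementary identity $\bigcap_i(W_i\otimes\kk)=\big(\bigcap_i W_i\big)\otimes\kk$ for subspaces $W_i\subseteq\hs$, and the fact that, because $\sigma$ is supported in $\hs\otimes\kk$, the partial trace over $\ha$ agrees with the partial trace over $\kk$. The two remaining ingredients --- the vanishing of the post-selection on inputs with $\tr[E\rho]=1$, and the exclusion of $Z=\oneapp$ from non-triviality of $E$ --- are routine once the process is written out.
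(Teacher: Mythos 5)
Your proof is correct and follows essentially the same route as the paper's (item (ii) of \lemref{lemma:measurement-strong-third-law}): exclude $Z=\oneapp$, observe that $\ee(\rho\otimes\xi)$ is supported in the eigenvalue-$1$ eigenspace of $\onesys\otimes Z$ so the post-selection is immaterial, then sandwich $\rank{\ee(\rho\otimes\xi)}$ between $\rank{\rho}\dim(\ha)$ (rank non-decrease plus strict positivity of $\xi$) and $\rank{\Phi(\rho)}$ times a factor strictly smaller than $\dim(\ha)$. The only cosmetic difference is that you prove the partial-trace rank bound $\rank{\sigma}\leqslant\rank{\tra[\sigma]}\rank{Q}$ from scratch, where the paper invokes the weak subadditivity of the R\'enyi-zero entropy (\lemref{lemma:rank-partial-trace}) applied to the conjugate channel $\Lambda$.
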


The above Lemma shows that for any $E$-compatible $\Phi \in \opset_\II(\hs)$,  unless $E = \onesys$, then $\Phi(\rho) \ne \rho $ for any state $\rho$. For example, consider  the L\"uders operation $\Phi^L(\bigcdot) \coloneq \sqrt{E} \bigcdot \sqrt{E}$. Such an operation is strictly positive, and rank non-decreasing (rank-preserving),  if and only if  $E$ is a strictly positive effect. Therefore,  all L\"uders operations compatible with  $E > \zero$ exist in $\opset_\I(\hs)$.  Now, if $E$ is a  norm-1 effect,   then the L\"uders operation does not disturb any state $\rho$ satisfying $\tr[E \rho] = 1$. As such,  for any strictly positive norm-1 effect  $\zero < E \leqslant \onesys$ and $E \ne \onesys$,  the L\"uders operation  exists in $\opset_\I(\hs)$ but not in $\opset_\II(\hs)$. On the other hand,    if $\zero < E < \onesys$ then the L\"uders operation does not have any non-vanishing fixed points and, as  discussed above, it also exists in   $\opset_\II(\hs)$.  See \app{app:strong-third-law} for further details and proof of the theorem and lemma.

\

\begin{theorem}[Channels that are fully consistent with thermodynamics ]\label{thm:channel-III-sp-fixed-state}
All channels in $\opset_\III(\hs)$ have a strictly positive fixed state.
\unskip\nobreak\hfill $\square$ \end{theorem}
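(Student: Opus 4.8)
The plan is to realise a strictly positive fixed state of $\Phi$ as a \emph{steady state} of the dynamics, obtained by time-averaging. Fix a process $(\ha,\xi,\ee,\oneapp)$ realising $\Phi$ as in \eq{eq:operation-implementation}, so that $\xi>\zero$, $\ee$ is bistochastic, and $\Phi(\bigcdot)=\tra[\ee(\bigcdot\otimes\xi)]$. Since $\opset_\III(\hs)$ is a convex monoid, the averaged channels $\Phi_N\coloneq\frac1N\sum_{n=0}^{N-1}\Phi^{n}$ all lie in $\opset_\III(\hs)\subseteq\opset_\I(\hs)$, and by \thmref{thm:weak-third} (a channel being compatible with $\onesys\ne\zero$) every channel in $\opset_\I(\hs)$ is strictly positive; hence $\Phi_N(\onesys)>\zero$, so each $\rho_N\coloneq\Phi_N(\onesys)/\dim(\hs)=\Phi_N\bigl(\onesys/\dim(\hs)\bigr)$ is a \emph{strictly positive} state. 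By the mean ergodic theorem $\Phi_N$ converges to a channel $\Phi_\infty$, so $\rho_N\to\rho_\infty\coloneq\Phi_\infty\bigl(\onesys/\dim(\hs)\bigr)$, and from $\Phi(\rho_N)-\rho_N=\bigl(\Phi^{N}(\onesys)-\onesys\bigr)/(N\dim(\hs))\to\zero$ one gets $\Phi(\rho_\infty)=\rho_\infty$. It remains to show $\rho_\infty>\zero$.

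This last step is the crux, and it is precisely where the bistochasticity of $\ee$ — rather than mere strict positivity or rank non-decrease — must be used, since a limit of strictly positive states need not be strictly positive. I would close it in one of two ways. (i) Prove that $\opset_\III(\hs)$ is topologically closed (plausible once one bounds the dimension of the dilating system $\ha$ required to realise a channel in $\opset_\III(\hs)$); then $\Phi_\infty=\lim_N\Phi_N\in\opset_\III(\hs)\subseteq\opset_\I(\hs)$ is strictly positive, and $\rho_\infty>\zero$ follows at once. (ii) Produce an $N$-independent lower bound on the least eigenvalue of $\rho_N$: unitality of $\ee$ already yields $\Phi(\onesys)=\tra[\ee(\onesys\otimes\xi)]\geqslant\lambda\sub{\min}(\xi)\,\tra[\ee(\onesys\otimes\oneapp)]=\lambda\sub{\min}(\xi)\dim(\ha)\,\onesys$, whence $\Phi^{n}(\onesys)\geqslant\bigl(\lambda\sub{\min}(\xi)\dim(\ha)\bigr)^{n}\onesys$; but this bound decays, and obtaining a non-decaying one seems to demand the full strength of bistochasticity, e.g.\ that $\ee$ (and all of its bistochastic, ancilla-tensored extensions) contracts the spectrum in the majorization order, $\ee(\Omega)\prec\Omega$, together with the fact that $\ee^{*}$ is again a channel.

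An equivalent, dual formulation of the same obstacle: let $\mu$ be a fixed state of maximal support $P\coloneq\supp\mu$. For $\ket\psi$ orthogonal to $P\hs$, the identities $\bra\psi\mu\ket\psi=\bra\psi\Phi^{n}(\mu)\ket\psi=\tr[\Phi^{*n}(\proj\psi)\,\mu]=0$ force $\supp\Phi^{*n}(\proj\psi)\subseteq(\onesys-P)\hs$; summing over an orthonormal basis of $(\onesys-P)\hs$ and using that the unital completely positive map $\Phi^{*}$ is norm-contractive gives $\Phi^{*}(\onesys-P)\leqslant\onesys-P$. If bistochasticity of $\ee$ can be leveraged to promote this to the \emph{equality} $\Phi^{*}(\onesys-P)=\onesys-P$, then $P$ belongs to the multiplicative domain of $\Phi^{*}$, so $\Phi$ block-diagonalises along $P$; if $P\ne\onesys$ the restriction of $\Phi$ to the nonzero subspace $(\onesys-P)\hs$ would be a channel, hence (Schauder--Tychonoff) have a fixed state disjoint from $\supp\mu$, contradicting maximality, so $\mu>\zero$. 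Either way the single genuine obstacle is identical: upgrading the finite-stage strict positivity (equivalently, the inequality ``$\leqslant$'') to its limiting form, and this is the step that consumes the bistochasticity hypothesis.
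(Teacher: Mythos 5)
Your reduction is sound as far as it goes: the Ces\`aro averages $\Phi_N$ do lie in $\opset_\III(\hs)$, each $\rho_N$ is a strictly positive state, and the telescoping argument makes $\rho_\infty$ a fixed state. But the proposal stops exactly at the theorem's actual content, namely $\rho_\infty>\zero$, and neither of your two routes closes the gap. Route (i) is a dead end: by \propref{rem:full-choi-rank} the closure of $\opset_\III(\hs)$ is all of $\opset(\hs)$, and since there are channels outside $\opset_\III(\hs)$ (e.g.\ constant channels onto a pure state, which are not even strictly positive), the set of channels in $\opset_\III(\hs)$ is dense but \emph{not} closed, so you cannot infer $\Phi_\infty\in\opset_\III(\hs)$ from $\Phi_N\in\opset_\III(\hs)$. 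Route (ii) you concede produces only a decaying eigenvalue bound. Your dual formulation correctly isolates the one fact that must be proved --- that for the maximal-support fixed projection $P$ the inequality $\Phi^*(\onesys-P)\leqslant\onesys-P$ is actually an equality --- but asserting that bistochasticity of $\ee$ ``can be leveraged'' to obtain this is a restatement of the theorem, not a proof. The step cannot be soft: it genuinely fails one level down the hierarchy, since the rank non-decreasing channel $\lambda\,\idchsys(\bigcdot)+(1-\lambda)\tr[\bigcdot]\,\proj{\phi}$ has $P=\proj{\phi}$ and $\Phi^*(\onesys-P)=\lambda(\onesys-P)$, strictly below $\onesys-P$.

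The paper's proof (\propref{prop:Kraus-block-form}) supplies precisely this missing equality, and it does so at the level of the dilation rather than of $\Phi$ alone. One passes to the classical action of $\ee$ in a product basis adapted to the block structure of the Kraus operators of $\Phi$; strict positivity of $\xi$ forces any block-upper-triangular structure of the classical action of $\Phi$ to lift to the $(\varphi\psi)$-classical action $\cS$ of $\ee$; and because $\cS$ is a bistochastic \emph{matrix}, \lemref{lem:not-bistochasticity} forces the off-diagonal block to vanish. That combinatorial fact about bistochastic matrices is what upgrades ``$\leqslant$'' to ``$=$''. Irreducibility of the resulting diagonal blocks together with Perron--Frobenius (\lemref{lem:unique-fixed-point}) then yields a full-rank fixed state on each block, and a strictly weighted convex combination gives a strictly positive fixed state of $\Phi$. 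So your skeleton is compatible with the truth and correctly locates the obstacle, but as written the proposal does not prove the theorem.
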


See \app{app:strong-adbatic} (\propref{prop:Kraus-block-form}) for the proof. In particular, all bistochastic channels (such as unitary ones) which preserve the complete mixture exist in $\opset_\III(\hs)$. It follows from above that while all channels in $\opset_\III(\hs)$ are rank non-decreasing, not all rank non-decreasing channels exist in $\opset_\III(\hs)$, since there exists rank non-decreasing channels that perturb all strictly positive states.     For example, consider the channel $ \Phi(\bigcdot) = \lambda\, \idchsys (\,\bigcdot\,) + (1-\lambda)\, \tr[\,\bigcdot\,]\, \proj\phi$ where: $0< \lambda < 1$ ;   $\idchsys$ is the identity channel acting in $\hs$;  and $|\phi\>$ is a unit vector in $\hs$. It is easy to show that $\Phi(\sigma) = \lambda \,  \sigma + (1-\lambda) \proj\phi \geqslant \lambda \, \sigma $ for all states $\sigma$, which implies that $\rank{\Phi(\sigma)} \geqslant \rank{\sigma}$ for all $\sigma$. Therefore,  $\Phi$ is rank non-decreasing, and so it  exists in $\opset_\II(\hs)$. However, $\ff(\Phi) = \co \proj\phi$, and so it does not exist in $\opset_\III(\hs)$ \cite{Mohammady2024}.

The fact that any channel  $\Phi \in \opset_\III(\hs)$ has a strictly positive fixed state $\rho_0 = \Phi(\rho_0) > \zero$ guarantees that the fixed-point set of the dual channel,  $\ff(\Phi^*)$,  is a von Neumann algebra and, in particular, that it is closed under multiplication: for any  $A,B \in \ff(\Phi^*)$ it holds that $AB \in \ff(\Phi^*)$ \cite{Bratteli1998, Arias2002}. This has important consequences for non-disturbing measurements, discussed in the next section. 

Furthermore,  as recently shown in Ref. \cite{Mohammady2025}, a non-trivial effect does not admit a  purity-preserving operation in $\opset_\III(\hs)$. An operation is  purity-preserving when it maps pure inputs to pure outputs, and is completely purity preserving when this holds even when acting locally on an entangled bipartite system, in which case the operation is represented with a single Kraus operator.  That is, all operations in $\opset_\III(\hs)$ that are compatible with a non-trivial effect are represented with at least two Kraus operators, and they take at least some pure input state to a mixed output.  On the other hand, an effect $E$ admits a completely purity preserving operation in $\opset_\I(\hs)$ and $\opset_\II(\hs)$ if and only if $E > \zero$ and $\zero < E < \onesys$, respectively.  In particular, this implies that even for an indefinite effect $\zero < E < \onesys$, while the corresponding L\"uders operation $\Phi^L(\bigcdot) \coloneq \sqrt{E} \bigcdot \sqrt{E}$ does exist in $\opset_\II(\hs)$, it does not exist in $\opset_\III(\hs)$.

\

While the set of operations that are fully thermodynamically consistent, $\opset_\III(\hs)$,  is a proper subset of the set of all operations,  $\opset(\hs)$, we observe the following:

\

\begin{prop}
\label{rem:full-choi-rank}
All  operations  in the interior of $\opset(\hs)$, i.e., operations with a strictly positive Choi operator, exist in $\opset_\III(\hs)$. That is,  $\mathrm{int}\left(\opset(\hs)\right)\subsetneq\opset_\III(\hs)$.    
\unskip\nobreak\hfill $\square$ \end{prop}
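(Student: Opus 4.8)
\emph{Approach.} The plan is to construct, for an arbitrary operation $\Phi$ whose Choi operator $C_\Phi$ is strictly positive, a process $(\ha,\xi,\ee,Z)$ as in \eq{eq:operation-implementation} in which $\xi$ is strictly positive, $\ee$ is bistochastic, and $Z$ is a genuine effect on $\ha$. Write $d\coloneq\dim(\hs)$ and $E\coloneq\Phi^*(\onesys)$. Two observations organise the argument. First, $C_\Phi>\zero$ already forces $\Phi(\sigma)>\zero$ for every state $\sigma$ (the Kraus operators of $\Phi$ span $\lo(\hs)$), so $\Phi$ is rank non-decreasing and hence lies in $\opset_\II(\hs)$ by \thmref{thm:strong-third}; what remains is to upgrade the interaction to a bistochastic one. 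Second, the normalised Choi operator $\omega_\Phi\coloneq C_\Phi/\tr[C_\Phi]$ is a \emph{strictly positive state} on $\hs\otimes\hs$ precisely because $C_\Phi>\zero$ --- this is the single point at which the hypothesis enters, and it is exactly what makes the apparatus preparation thermodynamically admissible.

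\emph{The core construction.} I would take $\ha\cong\hs\otimes\hs$, with one factor designated the ``output leg'' and the other the ``input leg''; set $\xi=\omega_\Phi$; let $\ee$ be the unitary --- hence bistochastic --- channel that swaps $\hs$ with the output leg; and let $Z$ be a multiple of the projection $\proj{\Phi^+}$ onto the maximally entangled vector shared by the input leg and the (post-swap) system register. The Choi--Jamio\l kowski reconstruction of $\Phi$ from $C_\Phi$ --- equivalently, gate teleportation postselected on the trivial Bell outcome --- then yields $\tra[\onesys\otimes Z\,\ee(\,\bigcdot\otimes\xi\,)]=\Phi$ once the overall scalar is absorbed into $Z$; concretely one is led to $Z=d\,\tr[C_\Phi]\,\proj{\Phi^+}$, which is a legitimate effect as long as $d\,\tr[C_\Phi]=d\,\tr[E]\leqslant 1$. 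This already proves the claim for all ``light'' operations, and more generally shows $c\,\Phi\in\opset_\III(\hs)$ for every operation $\Phi$ and every $c\leqslant 1/(d\,\tr[C_\Phi])$.

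\emph{The obstacle.} The genuine difficulty is purely the normalisation of the pointer effect: for a general interior operation $d\,\tr[E]>1$, so the single-copy protocol produces only a scaled-down copy of $\Phi$, and one cannot recover afterwards. Indeed $\opset_\III(\hs)$ is closed under convex combination and composition, yet the weight $\tr[\,\bigcdot\,^*(\onesys)]$ of a convex combination does not exceed the largest weight among the terms, and the weight of $\Phi_2\circ\Phi_1$ does not exceed that of $\Phi_1$; hence a ``heavy'' interior operation cannot be assembled from light pieces together with the available high-weight members of $\opset_\III(\hs)$ --- the bistochastic channels, such as the completely depolarising channel $\Lambda\in\opset_\III(\hs)$ --- by any naive combination. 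I expect this to be the hardest step.

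\emph{How I would resolve it.} The remedy I would pursue is to make the teleportation many-to-one, in the spirit of port-based teleportation: let $\ha$ carry $N$ independent copies $\omega_\Phi^{\otimes N}$ (still strictly positive), let $\ee$ be a bistochastic channel that performs the port-based-teleportation measurement on the system together with those copies --- every POVM admits a bistochastic realisation --- followed by a routing of the selected port into $\hs$, and let $Z$ postselect the event ``some port was selected''. The reduced state of the selected port is fixed by the resource correlations and the corresponding POVM element alone, hence equals $\Phi(\rho)$ irrespective of the state-update rule of the measurement, so this process still reproduces $\Phi$ up to a scalar; and the success weight grows with $N$, so for $N$ large enough the associated pointer effect satisfies $\zero\leqslant Z\leqslant\onesys$. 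Turning this into a proof requires the quantitative estimate that the success weight eventually crosses the admissibility threshold --- which is where $\zero<E<\onesys$, the strict trace-decrease distinguishing the interior from the boundary of $\opset(\hs)$, is used --- and this last step is the one I expect to be most delicate.
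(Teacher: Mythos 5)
Your single-copy gate-teleportation construction is fine as far as it goes: with $\xi=\omega_\Phi>\zero$, a unitary swap, and $Z\propto\proj{\Phi^+}$ you correctly realise $c\,\Phi$ for sufficiently small $c$, and you correctly diagnose that the normalisation of the pointer effect is the obstruction to reaching a general interior operation. But the port-based-teleportation repair does not close the argument, for two reasons. First, the step ``let $\ee$ be a bistochastic channel that performs the PBT measurement \ldots{} followed by a routing of the selected port into $\hs$'' is precisely the kind of claim this paper is devoted to scrutinising and cannot be asserted. ``Every POVM admits a bistochastic realisation'' concerns \emph{measurability} --- reproducing outcome statistics via a pointer effect on a traced-out apparatus --- whereas you need an outcome-\emph{conditioned feedback} (swap port $i$ into $\hs$ when outcome $i$ occurs) performed \emph{inside} $\ee$ before the partial trace. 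The standard way to do this records the PBT outcome in a pure flag register and applies a controlled swap, which is exactly what a strictly positive apparatus preparation forbids; the PBT POVM elements are not orthogonal projections, so you cannot write the controlled swap as a unitary $\sum_i \Pi_i\otimes V_i$ either. Second, the quantitative threshold is not merely ``delicate'' but doubtful as stated: conditioning on PBT success (probability $p_N<1$) and on tracing out the $N-1$ unselected, $\Phi$-processed resource pairs, the protocol yields $c_N\,\Phi$ for some constant you never estimate, and $Z$ is a legitimate effect only if $c_N\geqslant 1$. Nothing in the sketch shows that the trace deficit of $\Phi$ (i.e.\ $\|E\|<1$) wins against the losses from the failure branch and the discarded ports. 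Until both points are settled, the construction proves membership in $\opset_\III(\hs)$ only for the ``light'' operations you already handled.

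It is worth noting that the paper's proof avoids all of this and is essentially two lines, using ingredients you already had on the table. One first shows $\opset_\III(\hs)$ is \emph{dense} in $\opset(\hs)$: take a pure Stinespring dilation $(\ha,\proj{0},\ee,Z)$ of an arbitrary $\Phi$ and replace $\proj{0}$ by $\frac{1}{1+\epsilon}\proj{0}+\frac{\epsilon}{1+\epsilon}\,\Omega$ with $\Omega>\zero$; the resulting process is in class $\III$ and implements $\frac{1}{1+\epsilon}\Phi+\frac{\epsilon}{1+\epsilon}\Phi_2$, which is $\epsilon$-close to $\Phi$. Since $\opset_\III(\hs)$ is convex (which you invoke) and a convex subset that is dense in a compact convex body contains that body's interior, $\mathrm{int}(\opset(\hs))\subset\opset_\III(\hs)$ follows; properness comes from unitary channels, which lie in $\opset_\III(\hs)$ but on the boundary of $\opset(\hs)$. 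Your remark that a ``heavy'' interior operation cannot be assembled as a convex combination of light pieces is true but beside the point: the density-plus-convexity argument does not assemble $\Phi$ from light pieces, it places $\Phi$ in the interior of the closure of a convex set, which for convex sets coincides with the interior of the set itself.
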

\begin{proof}
    We start the proof by showing that for all $\Phi\in\opset(\hs)$ and all  $\epsilon >0$, there exists a $\Phi_1 \in\opset_\III(\hs)$ in the $\epsilon$-neighbourhood of   $\Phi$. By the Stinespring-Naimark-Ozawa dilation theorem, for any $\Phi \in \opset(\hs)$ there exists a process $(\ha, \proj0, \ee, Z)$ where  $\proj0$ is a pure state on $\ha$, $\ee$ is a unitary channel on $\hs \otimes \ha$,  and $Z$ is a projection on $\ha$,   so that
    \begin{align*}
        \Phi(\bigcdot) = \tra[(\onesys \otimes Z)\ \ee(\bigcdot \otimes \proj0)]\,. 
    \end{align*}
    Now consider the same process as above, but replace $\proj0$ with the state  $\xi=\frac{1}{1+\epsilon} \, \proj0 +\frac{\epsilon}{1+\epsilon} \, \Omega$, where $\Omega$ is a strictly positive state and  $\epsilon>0$.  The process $(\ha, \xi, \ee, Z)$   implements the operation  
    \begin{align*}
      \Phi_1(\bigcdot) &= \frac{1}{1+\epsilon}\,\tra [(\onesys \otimes Z)\ \ee( \bigcdot \otimes \proj0)] + \frac{\epsilon}{1+\epsilon} \, \tra [(\onesys \otimes Z)\ \ee( \bigcdot \otimes \Omega)] \\
      & = \frac{1}{1+\epsilon} \Phi(\bigcdot) + \frac{\epsilon}{1+\epsilon} \, \Phi_2(\bigcdot)
    \end{align*}
   where we define $\Phi_2(\bigcdot) \coloneq  \tra [(\onesys \otimes Z)\ \ee( \bigcdot \otimes \Omega)]$. Since $\xi > \zero$ and $\ee$ is bistochastic, $\Phi_1$ exists in $\opset_\III(\hs)$. But $\Phi - \Phi_1 = \epsilon (\Phi_1 - \Phi _2)$, and so  (for any topology induced by a metric, for example by the trace-norm) $\Phi_1$ is in the $\epsilon$-neighbourhood  of $\Phi$.  That $\mathrm{int}\left(\opset(\hs)\right)\subset\opset_\III(\hs)$ follows straightforwardly from the fact that $\opset_\III(\hs)$ is convex and therefore has no punctures. That $\mathrm{int}(\opset(\hs))$ is a proper subset of $\opset_\III(\hs)$ follows from the fact that some boundary points, such as unitary channels, exist in the latter but not in the former. 
\end{proof}

We note that operations with a strictly positive Choi operator map all input states to strictly positive output states. Such operations are clearly rank non-decreasing, and have a fixed state if and only if they are channels. Indeed, all fixed states of  channels with a strictly positive Choi operator are strictly positive.  The above shows that for any $C \in \{\I, \II, \III\}$, while $\opset_C(\hs) \subsetneq \opset(\hs)$, the closure of $\opset_C(\hs) $  is equal to $\opset(\hs)$. Additionally, since $\opset(\hs)$ is a compact convex set, this suggests that   $\opset_C(\hs)$  occupies the entire volume within $\opset(\hs)$.

\subsection{Instruments}\label{sec:results-instruments}

In this section, we shall characterise the instruments that admit a thermodynamically consistent process, and subsequently determine the (non)disturbance properties of instruments within each class.  By \defref{defn:thermo-consistent-process}, we define the following sets of thermodynamically consistent instruments:

\

\begin{definition}[Thermodynamically consistent instruments]
\label{defn:thermo-consistent-instrument}

\

\begin{enumerate}[(I)]

    \item   {\bf Instruments consistent with the weak third law: }  $\inset_\I(\hs)$ is defined as the set of instruments acting in $\hs$ that admit a process $(\ha, \xi, \ee, \Z)$,  as in \eq{eq:instrument-implementation}, such that   $\xi$ is a strictly positive state and $\ee$ is a  strictly positive channel. 

    \item  {\bf Instruments consistent with the strong third law: }   $\inset_\II(\hs)$ is defined as the set of instruments acting in $\hs$ that admit a process $(\ha, \xi, \ee, \Z)$,  as in \eq{eq:instrument-implementation}, such that   $\xi$ is a strictly positive state and $\ee$ is a  rank non-decreasing channel. 

     \item  {\bf Instruments 
     fully consistent with thermodynamics: }   $\inset_\III(\hs)$ is defined as the set of instruments acting in $\hs$ that admit a process $(\ha, \xi, \ee, \Z)$,  as in \eq{eq:instrument-implementation}, such that   $\xi$ is a strictly positive state and $\ee$ is a  bistochastic channel. 
\end{enumerate}
\unskip\nobreak\hfill $\square$
\end{definition}

It is clear that
\begin{align*}
&\inset_{\III}(\hs) \subsetneq \inset_{\II}(\hs) \subsetneq \inset_{\I}(\hs) \subsetneq \inset(\hs) \, ,
\end{align*}
and that for any $C \in \{\I, \II, \III\}$, an instrument $\ii\coloneq \{\ii_x : x \in \xx\}$ exists in $\inset_C(\hs)$ only if each operation $\ii_x$, as well as the channel $\ii_\xx(\bigcdot) \coloneq \sum_{x\in \xx} \ii_x(\bigcdot)$, exists in $\opset_C(\hs)$ defined in  \defref{defn:thermo-consistent-operation}. Indeed, by the discussion from the previous section, we immediately obtain the following:

\

\begin{corollary}\label{cor:instrument-classes}
Let $\ii\coloneq \{\ii_x : x \in \xx\}$ be an instrument acting in $\hs$, compatible with the observable $\E \coloneq \{E_x : x \in \xx\}$. The following hold:
    \begin{enumerate}[(i)]
        \item $\ii$ exists in $\inset_\I(\hs)$ if and only if each operation $\ii_x$ is strictly positive.
        \item $\ii$ exists in $\inset_\II(\hs)$ only if the channel $\ii_\xx$ is rank non-decreasing. On the other hand, if all operations $\ii_x$ are rank non-decreasing and compatible with indefinite effects $\zero < E_x < \onesys$,  then $\ii$ exists in $\inset_\II(\hs)$. 
        \item $\ii$ exists in $\inset_\III(\hs)$ only if the channel $\ii_\xx$ has a strictly positive fixed state, and only if each operation $\ii_x$ compatible with a non-trivial effect $E_x$ is not purity preserving, i.e., only if $\ii_x$ is represented by at least two Kraus operators, and maps some pure input state to a mixed output. 
    \end{enumerate}
\unskip\nobreak\hfill $\square$ \end{corollary}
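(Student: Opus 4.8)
The corollary is meant to follow immediately from the operation-level results already established in \sect{sec:results-operations}, together with the elementary observation that an instrument process $(\ha, \xi, \ee, \Z)$ is, for each fixed outcome $x$, simultaneously a process $(\ha, \xi, \ee, Z_x)$ for the operation $\ii_x$ in the sense of \eq{eq:operation-implementation}, and also a process $(\ha, \xi, \ee, \oneapp)$ for the channel $\ii_\xx$ (taking the pointer effect to be $Z_\xx = \sum_x Z_x = \oneapp$). Since the thermodynamic constraint $C$ restricts only $\xi$ and $\ee$, and not the pointer effect, the same process witnessing $\ii \in \inset_C(\hs)$ witnesses $\ii_x \in \opset_C(\hs)$ for every $x$ and $\ii_\xx \in \opset_C(\hs)$. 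This gives the ``only if'' directions in all three items by restriction.

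\textbf{Item (i).} For the ``if'' direction I would invoke \thmref{thm:weak-third}: if each $\ii_x$ is strictly positive, then each admits a process consistent with the weak third law. The nontrivial point is that these per-outcome processes must be assembled into a \emph{single} instrument process. I would appeal to the same dilation construction used in the proof of \thmref{thm:weak-third} (in \app{app:weak-third-law}): because the sufficiency proof there builds the process by mixing the Naimark--Ozawa unitary dilation with a fixed strictly positive ``filler'' preparation, one can dilate the whole channel $\ii_\xx$ (with a projection-valued pointer recording $x$) and then perturb the pure apparatus state to a strictly positive one via a convex mixture with $\Omega > \zero$; strict positivity of every $\ii_x$ is exactly what survives this perturbation, since the perturbed process implements $\tfrac{1}{1+\epsilon}\ii_x(\bigcdot) + \tfrac{\epsilon}{1+\epsilon}\Phi_x(\bigcdot)$ for auxiliary operations $\Phi_x$, and a convex combination of a strictly positive map with any CP map compatible with the same effect is again strictly positive. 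Hence $\ii \in \inset_\I(\hs)$. The ``only if'' direction is the restriction argument above combined with the ``only if'' part of \thmref{thm:weak-third}.

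\textbf{Items (ii) and (iii).} The ``only if'' statements are again pure restriction: \thmref{thm:strong-third} gives that $\ii_\xx \in \opset_\II(\hs)$ forces $\ii_\xx$ rank non-decreasing; \thmref{thm:channel-III-sp-fixed-state} gives that $\ii_\xx \in \opset_\III(\hs)$ forces $\ii_\xx$ to have a strictly positive fixed state; and the purity-preservation obstruction in $\opset_\III(\hs)$ (the result of Ref.~\cite{Mohammady2025} quoted after \thmref{thm:channel-III-sp-fixed-state}) applied to each $\ii_x$ compatible with a non-trivial effect $E_x$ gives the last clause of (iii). For the ``if'' direction in (ii), when every $E_x$ is indefinite, $\zero < E_x < \onesys$, I would again use the assembled dilation: \thmref{thm:strong-third} says each rank non-decreasing $\ii_x$ compatible with an indefinite effect lies in $\opset_\II(\hs)$, and the same mixing-with-$\Omega$ construction (now with a rank non-decreasing $\ee$) produces one instrument process realising all of them at once, the key check being that a convex combination $\tfrac{1}{1+\epsilon}\ii_x + \tfrac{\epsilon}{1+\epsilon}\Phi_x$ of a rank non-decreasing operation with any CP operation is rank non-decreasing (it dominates a positive multiple of $\ii_x$).

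\textbf{Main obstacle.} The only genuine subtlety is the assembly step: the operation-level theorems are stated outcome-by-outcome, so one must verify that the \emph{same} tuple $(\ha,\xi,\ee,\Z)$ works for all $x$ simultaneously, rather than needing a different apparatus for each outcome. This is handled by noticing that the dilation theorem already produces a joint process for the whole channel $\ii_\xx$ with a projection-valued pointer $\Z$, and that the strictly-positive-apparatus perturbation is a single global modification of $\xi$ that does not depend on $x$; the per-outcome properties (strict positivity, or rank non-decrease under the indefinite-effect hypothesis) are preserved because each perturbed operation convexly dominates a positive multiple of the original $\ii_x$. Everything else is bookkeeping and direct citation of \thmref{thm:weak-third}, \thmref{thm:strong-third}, \thmref{thm:channel-III-sp-fixed-state}, and the purity-preservation result of Ref.~\cite{Mohammady2025}.
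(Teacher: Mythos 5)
Your ``only if'' directions are fine: the restriction argument (a process for $\ii$ is simultaneously a process for each $\ii_x$ and for $\ii_\xx$, since the thermodynamic constraint touches only $\xi$ and $\ee$) combined with \thmref{thm:weak-third}, \thmref{thm:strong-third}, \thmref{thm:channel-III-sp-fixed-state} and the purity-preservation obstruction is exactly how the paper argues, and item (iii) needs nothing more. The problem is the sufficiency direction in items (i) and (ii). The construction you describe --- Naimark--Ozawa dilation of $\ii$ followed by replacing the pure apparatus state with $\xi = \tfrac{1}{1+\epsilon}\proj{0} + \tfrac{\epsilon}{1+\epsilon}\Omega$ --- does \emph{not} implement $\ii$; it implements the perturbed instrument $\tilde\ii_x = \tfrac{1}{1+\epsilon}\ii_x + \tfrac{\epsilon}{1+\epsilon}\Phi_x$. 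Your observation that $\tilde\ii_x$ dominates a positive multiple of $\ii_x$ shows that $\tilde\ii_x$ inherits strict positivity (or rank non-decrease), hence that $\tilde\ii \in \inset_\I(\hs)$; it does not show $\ii \in \inset_\I(\hs)$, which requires an \emph{exact} thermodynamically consistent realisation of $\ii$ itself. This perturbation argument is the one the paper uses in \propref{rem:full-choi-rank}, where it is correctly presented as a density statement (``$\Phi_1$ is in the $\epsilon$-neighbourhood of $\Phi$''), not as a realisation of $\Phi$. You have misattributed it to the sufficiency proof of \thmref{thm:weak-third}.

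The paper's actual construction (\propref{prop:necessary-sufficient-weak-third-law} and \propref{prop:existence-rank-non-dec-strong-third-law}) is exact and already handles all outcomes with a single apparatus: take $\dim(\ha) = N = |\xx|$ with orthonormal basis $\{|x\>\}$, set $\ee(A \otimes B) = \sum_{x} \ii_x(A) \otimes \tr[B] \proj{x}$ and $Z_x = \proj{x}$; then $\tra[\onesys \otimes Z_x\ \ee(\rho \otimes \xi)] = \ii_x(\rho)$ for \emph{any} state $\xi$, so one may take $\xi > \zero$. Strict positivity of $\ee$ follows from $\ee(\onesys \otimes \oneapp) = N\sum_x \ii_x(\onesys) \otimes \proj{x} > \zero$ when each $\ii_x$ is strictly positive, and rank non-decrease of $\ee$ follows from $\rank{\ee(\rho\sub{\s\aa})} = \sum_x \rank{\ii_x(\rho\sub{\s})} \geqslant \rank{\rho\sub{\s}}\dim(\ha) \geqslant \rank{\rho\sub{\s\aa}}$ (\lemref{lemma:rank-partial-trace}) when each $\ii_x$ is rank non-decreasing; the hypothesis $\zero < E_x < \onesys$ in item (ii) is what guarantees each $\ii_x$ \emph{can} be rank non-decreasing at all (\corref{app-cor:strict-positive-for-any-effect}). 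Replacing your perturbation step with this explicit channel closes the gap; your assembly concern is then moot, since the construction is a single joint process from the start.
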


As discussed in the previous section, since we are not restricted in our choice of pointer observable $\Z$, then thermodynamic   consistency  does not preclude the \emph{measurability} of any observable $\E$: For every observable $\E$, we may  construct a trivial measuring process utilising a unitary swap interaction channel  that implements the $\E$-compatible instrument $\ii_x(\bigcdot) = \tr[E_x \, \bigcdot] \xi$. But note that such an instrument is fully disturbing, since the posterior state $\xi$ contains no information at all about the prior state $\rho$. Therefore, we may now ask how consistency with thermodynamics limits the disturbance properties of measurements.

In what follows, we shall consider only instruments compatible with non-trivial observables $\E := \{E_x : x \in \xx\}$, i.e.,  observables  with more than one outcome, $N:= |\xx| >1$, and such that at least some effect in the range of $\E$ is not proportional to the identity. In other words, we consider only observables that provide information about the system to be measured. While informative measurements necessarily disturb at least some observable of the system being measured---no information without disturbance \cite{Busch2009}---some observables admit measurements that are non-disturbing in the sense that they do not disturb some property of the observable being measured.

\
\begin{figure}[tbp]
\centering
\includegraphics[width=0.9\textwidth]{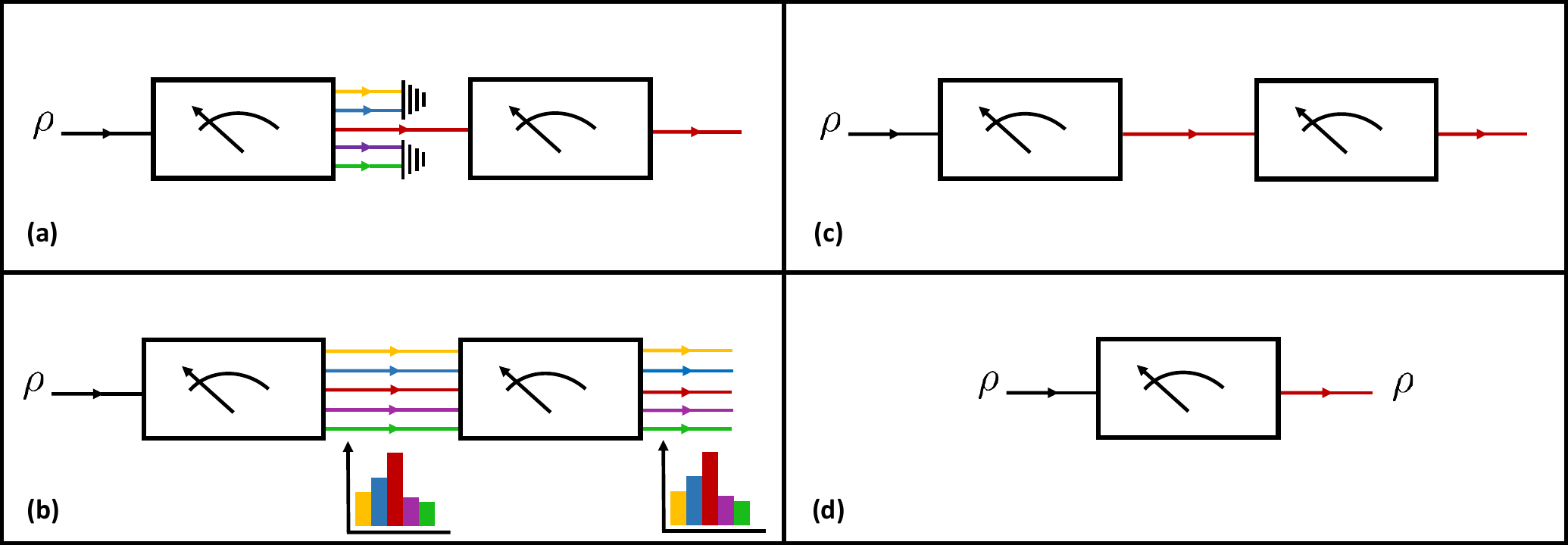}
\caption{{\bf Non-disturbing instruments (colour online).} In every frame (a)-(d), each box represents an instrument that measures the same observable, which takes a quantum system in an arbitrary state $\rho$ as an input from the left, and produces both a classical output (the measurement outcome) and a quantum output (the post-measurement state of the system) on the right. The different measurement outcomes and conditional post-measurement states are represented by the separate coloured arrows exiting from the right. {\bf (a):} The measurement is repeatable if, conditional on obtaining a given outcome in the first measurement (the red arrow that is allowed to enter the second instrument), the second measurement produces the same outcome with probability 1. {\bf (b):} The measurement is of the first kind if consecutive measurements produce the same statistics, represented here by identical histograms. {\bf (c): } The measurement is value reproducible if, whenever the first measurement produces a single outcome with probability 1 (only a single red arrow exits the instrument) the second measurement  produces the same outcome with probability 1. {\bf (d):} The measurement is ideal if, whenever the first measurement produces a single outcome with probability 1, the measurement does not disturb the state of the system (the output state is equal to the input state $\rho$).  For any observable, $(a) \implies (b) \implies (c)$ and $(d) \implies (c)$, but the converse implications do not hold in general. 
}\label{fig:measurements}
\end{figure}

\begin{definition}[Non-disturbing instruments]
An $\E$-compatible instrument $\ii:= \{\ii_x : x \in \xx\}$ acting in $\hs$ is called a
\begin{enumerate}[(a)]
 \item   repeatable measurement of $\E$ if $\E$ is a norm-1 observable, and if

 \begin{align*}
     \tr[E_y \, \ii_x(\rho)] = \delta_{x,y} \tr[E_x \, \rho] \qquad  \forall x,y \in \xx,  \rho \in \s(\hs) \, .
 \end{align*}
Equivalently, if $E_x \in \ff(\ii_x^*)$ for all $ x \in \xx$. In other words, if $\ii$ is a repeatable measurement of $\E$, then consecutive measurements of $\E$ by $\ii$ are guaranteed (with probability 1) to produce the same outcome. 
 \item   first-kind measurement of $\E$ if

  \begin{align*}
     \tr[E_x \, \ii_\xx(\rho)] = \tr[E_x \,  \rho] \qquad \forall x \in \xx,  \rho \in \s(\hs) \, .
 \end{align*}
Equivalently, if $\E := \{E_x : x\in \xx\} \subset \ff(\ii_\xx^*)$. In other words, if $\ii$ is a first-kind measurement of $\E$, then consecutive measurements of $\E$ by $\ii$ are guaranteed to produce the same statistics.

 \item  value-reproducible measurement of $\E$ if $\E$ is a norm-1 observable, and if

  \begin{align*}
     \tr[E_x \, \rho] = 1 \implies \tr[E_x \, \ii_\xx(\rho)] = 1 \qquad \forall x \in \xx,  \rho \in \s(\hs) \, .
 \end{align*}
In other words, if $\ii$ is a value reproducible measurement of $\E$, then if $\E$ has the value $x$ in any state $\rho$---if the outcome $x$ can be predicted to obtain with (probabilistic) certainty in this state---then $\E$ will continue to have value $x$ in the state  obtained after a ``non-selective'' measurement, i.e.,  $\ii_\xx(\rho)$.

 \item ideal measurement of $\E$ if $\E$ is a norm-1 observable, and if

\begin{align*}
     \tr[E_x \, \rho] = 1 \implies \ii_x(\rho) = \rho \qquad \forall x \in \xx,  \rho \in \s(\hs) \, .
\end{align*}
In other words, if $\ii$ is an ideal measurement of $\E$, then $\ii$ does not disturb the state of the measured system whenever an outcome can be predicted to obtain in this state with (probabilistic) certainty. 
 
\end{enumerate}
\unskip\nobreak\hfill $\square$
\end{definition}

\

A repeatable measurement is of the first kind, and a first-kind measurement (of a norm-1 observable) is value reproducible. While the converse implications do not  hold in general, in the case of sharp (projection valued) observables repeatability, first-kindness, and value-reproducibility coincide (see Theorem 10.3 in Ref. \cite{Busch2016a}). On the other hand, an ideal measurement is value reproducible, but a measurement may be value reproducible but not ideal. We note that for any norm-1 observable $\E$,  the corresponding  L\"uders instrument $\ii^L_x(\bigcdot) = \sqrt{E_x}\, \bigcdot\, \sqrt{E_x}$ is an ideal measurement of $\E$. In the case of sharp observables, the ideal measurements are precisely the L\"uders instruments, but unsharp norm-1 observables admit ideal measurements that are not of the L\"uders form.  Moreover, while ideal measurements of sharp observables are always repeatable, a repeatable measurement is guaranteed to be ideal only for the case of  rank-1 sharp observables, i.e., observables  all effects of which  are rank-1 projections; indeed, for sharp rank-1 observables we have $(a) \iff (b) \iff(c) \iff(d)$. Finally, let us remark that for every norm-1 observable there exists an instrument $\ii \in \inset(\hs)$ that may satisfy any of the properties (a)-(d), and while repeatability, value reproducibility, and ideality are  permitted only for norm-1 observables, an indefinite observable (which necessarily lacks the norm-1 property) may admit a measurement of the first kind. For example, the L\"uders instrument for a commutative observable, i.e., an observable all whose effects mutually commute, is a first-kind measurement.  This will be important in what follows.

\

In Ref. \cite{Mohammady2022a} it was shown that when an $\E$-compatible instrument $\ii$ exists in $\inset_\I(\hs)$, i.e., when the premeasurement interaction and apparatus preparation are strictly positive, then repeatability will be ruled out for all observables. Additionally, it was shown that if  the $\E$-channel $\ii_\xx$ has a strictly positive fixed state \cite{Mohammady2024},  then  ideality is also precluded for all observables, while first-kindness is permitted only for indefinite (or completely unsharp) commutative observables. While it may be the case that $\ff(\ii_\xx)$ does not have any strictly positive  states when $\ii \in \inset_\I(\hs)$, as we have seen this condition is guaranteed if $\ii  \in \inset_\III(\hs)$, i.e.,   if the premeasurement channel is not only strictly positive, but is bistochastic: all of the no-go results in Ref. \cite{Mohammady2022a} hold for instruments that are fully consistent with thermodynamics.  But as the following shows, even if   $\ii$ belongs to   $\inset_\II(\hs)$ but is not in $\inset_\III(\hs)$, i.e. if the premeasurement channel is rank non-decreasing but not bistochastic,  so that $\ff(\ii_\xx)$ need not contain any strictly positive states,  then nearly all of the no-go results in Ref. \cite{Mohammady2022a} will still carry over. This demonstrates that the strong third alone is responsible for the thermodynamic inconsistency of almost all types of non-disturbing measurements.  

\

\begin{theorem}\label{thm:non-ddisturbing-thermo-nogo}
Consider an $\E$-compatible instrument $\ii:= \{\ii_x : x\in \xx\}$ acting in $\hs$, and assume that  $\E$ is a non-trivial observable. Assume that $\ii$ belongs to $\inset_C(\hs)$ for  $C \in \{\I, \II, \III\}$ as given in \defref{defn:thermo-consistent-instrument}.  The following hold:
\begin{enumerate}[(i)]

    \item 

    If $\ii \in \inset_\I(\hs)$, then $\ii$ is not repeatable.

    \item 
    
    If   $\ii \in \inset_\I(\hs)$ and $\E$ is projection valued, then $\ii$ is not   first-kind, value reproducible, or ideal.

    \item 

    If $\ii \in \inset_\II(\hs)$, then $\ii$ is not ideal.

    \item 

    If $\ii \in \inset_\II(\hs)$, then $\ii$ is not value reproducible.

    \item 
    
    If $\ii \in \inset_\II(\hs)$, then $\ii$ is first-kind only if $\zero < E_x < \onesys $ for all $x \in \xx$.

    \item 

    If $\ii \in \inset_\III(\hs)$, then $\ii$ is first-kind only if $\zero < E_x < \onesys $ for all $x \in \xx$ and $[E_x, E_y] = \zero$ for all $x,y \in \xx$.

\end{enumerate}
\unskip\nobreak\hfill $\square$ \end{theorem}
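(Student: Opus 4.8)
The plan is to handle the six items by leveraging the structural characterisations from the previous sections, treating them essentially in order of increasing strength of the thermodynamic constraint. For items (i) and (ii), I would recall that these are precisely the no-go results of Ref.~\cite{Mohammady2022a}: if $\ii \in \inset_\I(\hs)$ then every operation $\ii_x$ is strictly positive (by \corref{cor:instrument-classes}(i)), and strict positivity of $\ii_x$ is incompatible with $E_x \in \ff(\ii_x^*)$ for a non-trivial effect $E_x$, since a repeatable measurement would force $\ii_x^*$ to fix a norm-1 effect together with a projection onto its eigenspace, contradicting strict positivity. For (ii) one additionally uses that for projection-valued $\E$, first-kindness, value-reproducibility, and ideality all coincide with repeatability (Theorem 10.3 of \cite{Busch2016a}), so all three are ruled out. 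The key input already available is the equivalence ``$\ii\in\inset_\I \iff$ each $\ii_x$ strictly positive,'' so these items require only citing and lightly re-deriving the earlier argument.

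For items (iii), (iv), and (v) — the heart of the new contribution — the plan is to exploit \lemref{lemma:opII-rank-inc}. If $\ii \in \inset_\II(\hs)$ then each $\ii_x \in \opset_\II(\hs)$ is $E_x$-compatible. Suppose $\E$ has the value $x$ in some state $\rho$, i.e.\ $\tr[E_x\rho] = 1$; since $\E$ is non-trivial, at least one $E_x$ is a non-trivial norm-1 effect, and for such $x$ \lemref{lemma:opII-rank-inc} gives $\rank{\ii_x(\rho)} > \rank{\rho}$ for every $\rho$ with $\tr[E_x\rho]=1$ — in particular $\ii_x(\rho)\neq\rho$, killing ideality (item (iii)), and one such $\rho$ can always be taken to be an eigenstate of $E_x$ with eigenvalue $1$ (which exists by the norm-1 property). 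For value-reproducibility (item (iv)) I would argue that $\tr[E_x\,\ii_\xx(\rho)]=1$ would require $\ii_\xx(\rho)$ to be supported on the eigenspace of $E_x$ for eigenvalue $1$; but $\ii_\xx(\rho) \geqslant \ii_x(\rho)$, and $\ii_x(\rho)$ has strictly larger rank than $\rho$ while being supported in that eigenspace only if $\rho$ itself is, which combined with norm-1-ness of a *different* effect $E_y$ (using non-triviality of $\E$, so $\sum_y E_y = \onesys$ forces the eigenspace to be proper) yields a contradiction — this is the delicate bookkeeping step. For first-kindness (item (v)), the claim is that $\ii\in\inset_\II$ forces every $E_x$ to be indefinite: if some $E_x$ were norm-1 (in particular if $\E$ were not indefinite) then $\E\subset\ff(\ii_\xx^*)$ would make $\ii_\xx$ fix a state supported on the eigenspace where $E_x$ has eigenvalue one; restricting to that eigenspace and using \lemref{lemma:opII-rank-inc} applied to $\ii_x$ contradicts the existence of such a fixed state. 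The main obstacle I anticipate is precisely this reduction in (v): one must show that a first-kind measurement of an observable with at least one norm-1 effect forces the $\ii_\xx$-fixed-state (which exists by Schauder–Tychonoff) to live on a proper subspace in a way that lets \lemref{lemma:opII-rank-inc} bite, and the rank argument there is the subtle part.

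For item (vi) — the strongest constraint — the plan is to combine (v) with the additional structure from \thmref{thm:channel-III-sp-fixed-state}: if $\ii\in\inset_\III(\hs)$ then $\ii_\xx$ has a *strictly positive* fixed state $\rho_0 = \ii_\xx(\rho_0) > \zero$, which by the cited results of \cite{Bratteli1998, Arias2002} makes $\ff(\ii_\xx^*)$ a von Neumann algebra — in particular closed under multiplication. First-kindness gives $\E \subset \ff(\ii_\xx^*)$, so $E_x E_y \in \ff(\ii_\xx^*)$ for all $x,y$; taking the difference $E_xE_y - E_yE_x \in \ff(\ii_\xx^*)$ and using that a fixed-point algebra of a channel with a faithful fixed state is $*$-closed, together with the fact that $\sum_x E_x = \onesys$ and each $E_x$ is indefinite (from (v)), one squeezes out that the $E_x$ must mutually commute; the indefiniteness $\zero < E_x < \onesys$ is inherited directly from (v). The obstacle here is milder — it is the standard argument that the fixed-point von Neumann algebra being a subalgebra containing $\E$ forces $\E$ into a commutative subalgebra — but one must be careful that the commutation follows from algebra-closure rather than being assumed. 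I would close by noting that (vi) is consistent with the earlier remark that L\"uders instruments of indefinite commutative observables are first-kind and, as shown in \thmref{thm:strong-third}, lie in $\opset_\II(\hs)$ (though not $\opset_\III(\hs)$ when compatible with non-trivial effects, by \thmref{thm:channel-III-sp-fixed-state} and the purity-preservation remark).
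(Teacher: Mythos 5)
Items (i)--(iii) of your plan match the paper's proof in substance. (For (i) the paper's version is even more direct: strict positivity of each $\ii_x$ forces $\ii_x(\rho)>\zero$ and hence $\tr[E_y\,\ii_x(\rho)]>0$ for \emph{all} $y$ on any full-rank input, which kills repeatability without any appeal to fixed points of $\ii_x^*$.)

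For (iv) your sketch does not close as written. From $\rank{\ii_x(\rho)}>\rank{\rho}$ together with $\supp\bigl(\ii_\xx(\rho)\bigr)\subseteq P\hs$ you get no contradiction unless $\rho$ already has support equal to the \emph{whole} eigenvalue-$1$ eigenspace $P\hs$ of $E_x$. The paper's fix is exactly this: choose a test state $\sigma$ with $\supp(\sigma)=P\hs$, so that $\rank{\ii_\xx(\sigma)}=\rank{\ii_x(\sigma)}>\rank{\sigma}=\rank{P}$ forces $\supp\bigl(\ii_\xx(\sigma)\bigr)\not\subset P\hs$ and hence $\tr[E_x\,\ii_\xx(\sigma)]<1$. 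Your appeal to norm-$1$-ness of a different effect $E_y$ is not needed and does not repair the step.

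For (v) your route is genuinely different from, and simpler than, the paper's. The paper introduces the minimal support projection $\P$ of $\ff(\ii_\xx)$, the Ces\`aro average $\ii_\av^*$, and the decomposition $\P E_x\P=\oplus_\alpha\lambda_\alpha(x)P_\alpha$ with $0<\lambda_\alpha(x)<1$, then bounds $\|E_x\|<1$ and $\|\onesys-E_x\|<1$. Your argument --- push a state supported in the eigenvalue-$1$ eigenspace of $E_x$ through the averaging channel to obtain a fixed state $\rho$ of $\ii_\xx$ with $\tr[E_x\rho]=1$, note that then $\ii_\xx(\rho)=\ii_x(\rho)$, and contradict \lemref{lemma:opII-rank-inc} --- is valid and avoids the algebraic machinery. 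However, your case analysis is incomplete: ``$\E$ not indefinite'' does \emph{not} imply ``some $E_x$ is norm-$1$''. An effect can fail $\zero<E_x$ by having a zero eigenvalue while no effect of $\E$ has eigenvalue one, and this case is not covered by your reduction. It can be patched in the same spirit by running the identical argument on the non-trivial norm-$1$ effect $\onesys-E_x$ and the $(\onesys-E_x)$-compatible operation $\ii_\xx-\ii_x$, which also lies in $\opset_\II(\hs)$; but as stated your proof of (v) establishes only half of the claim.

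For (vi) there is a genuine gap. The fact that $\ff(\ii_\xx^*)$ is a von Neumann algebra containing $\E$ does not force $\E$ to be commutative: a von Neumann algebra can be noncommutative, and knowing that $E_xE_y-E_yE_x\in\ff(\ii_\xx^*)$ tells you nothing about it vanishing, so no ``squeeze'' using $\sum_xE_x=\onesys$ and indefiniteness will produce commutativity from algebra closure alone. The missing ingredient is the measurement-theoretic inclusion $\ff(\ii_\xx^*)\subset\E'$ (item (i) of \lemref{lemma:effect-factor-decomposition}, equivalently Proposition 4 of Ref.~\cite{Heinosaari2010}), which comes from the multiplicative-domain theorem applied to the dilation map $\Gamma^\ee_\xi$, not from the algebra structure of the fixed-point set. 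Combined with first-kindness, $\E\subset\ff(\ii_\xx^*)\cap\ff(\ii_\xx^*)'$, i.e.\ $\E$ lies in the centre, which is abelian; that is where commutativity actually comes from.
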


\begin{proof}
\begin{enumerate}[(i)]
    \item 
    
    For any outcome $x$,  $E_x \ne \zero$  and $\ii_x$ is a strictly positive operation.  Therefore,  for any strictly positive state $\rho$ it holds that $\ii_x(\rho) > \zero$, and so  $\tr[E_y \ii_x(\rho)]   >0$,  for all $x,y$. As such, $\ii$ cannot be repeatable.

    \item 
    
    For projection valued observables, repeatability, first-kindness, and value reproducibility coincide. By item (i), these are all ruled out. Since ideality implies value reproducibility, then ideality is also ruled out.

    \item 
    
   By  \lemref{lemma:opII-rank-inc}, if  $\tr[E_x \rho] =1$ then $\rank{\ii_x(\rho)} > \rank{\rho}$, and so $\ii_x(\rho) \ne \rho$.  It follows that $\ii$ cannot be ideal.

    \item 
    
    If $\tr[E_x \rho] = 1$, then $\ii_y(\rho) = \zero$ for all $y \ne x$, which implies that $\ii_\xx(\rho) = \ii_x(\rho)$.  Recall that $\tr[E_x \rho]=1$ if and only if $\rho$ has support only in the eigenvalue-1 eigenspace of $E_x$. Let $P$ denote the projection onto this eigenspace, and consider a state  $\sigma$ such that $\supp(\sigma) = P \hs$. Given that $\tr[E_x \sigma] =  1$,   by  \lemref{lemma:opII-rank-inc} it follows that $\rank{\ii_\xx(\sigma)} = \rank{\ii_x(\sigma)} >\rank{\sigma}$. But this implies that  $ \supp(\ii_\xx(\sigma)) \not\subset  P \hs$, so that $\tr[E_x \ii_\xx (\sigma) ] < 1$. As such, $\ii$ cannot be value reproducible.

    \item (Sketch of the proof; for the full proof, see \app{app:fixed-point-measurement}) 
    
By the Schauder–Tychonoff fixed point theorem, $\ff(\ii_\xx)$ contains at least one state. Define $\P$ as the minimal support projection on $\ff(\ii_\xx)$, i.e., for any projection $Q$ such that $Q \rho = \rho \, \,  \forall \rho \in \ff(\ii_\xx)$, it holds that $\P \leqslant Q$. If $\ff(\ii_\xx)$ contains a strictly positive state, then $\P = \onesys$, in which case the statement follows from  Theorem 4.2 of Ref. \cite{Mohammady2022a}. For every effect $E_x$ we may write 
    \begin{align*}
        \P E_x \P &=  \bigoplus_{\alpha}\lambda_\alpha(x) P_\alpha,
    \end{align*}
   where $P_\alpha$ are mutually orthogonal projections such that $\sum_\alpha P_\alpha = \P$,  and  $0 < \lambda_{\alpha}(x) < 1$. There exists a unital CP map  $\ii_\av^*(\bigcdot) = \ii_\av^*( \P \bigcdot \P)$ such that $E_x = \ii_\xx^*(E_x) \iff E_x = \ii_\av^*(E_x)$.  It follows that if $\ii$ is a measurement of the first kind, then for all $x$ it must hold that 
\begin{align*}
    \| E_x\| = \| \ii_\av^*(E_x) \| = \| \ii_\av^*( \P E_x \P) \| \leqslant \| \P E_x \P \| <1 \, ,
\end{align*}
where the first inequality follows from the fact that $\ii_\av^*$ is CP and unital, and the final inequality follows from the fact that $\lambda_{\alpha}(x) <1$. Similarly, we may write
\begin{align*}
    \| \onesys -  E_x\| & = \| \ii_\av^*(\onesys - E_x) \| = \| \ii_\av^*(\P - \P E_x \P) \|  \leqslant \| \P - \P E_x \P \| <1 \, ,
\end{align*}
where the final inequality follows from the fact that $\lambda_{\alpha}(x) >0$. It follows that $E_x$ cannot have eigenvalue 1 or eigenvalue 0, and so $\E$ must be indefinite, i.e., $\zero < E_x < \onesys$. 

\item The requirement that $\zero < E_x < \onesys$ must hold follows immediately from (v) and the fact that $\inset_\III(\hs) \subset \inset_\II(\hs)$. The requirement that $\E$ must be  commutative, i.e., $[E_x, E_y]= \zero$, follows from  \thmref{thm:channel-III-sp-fixed-state} which states that if $\ii \in \inset_\III(\hs)$ then $\ff(\ii_\xx)$ contains a strictly positive state, and so $\ff(\ii_\xx^*)$ is a von Neumann algebra, and Proposition 4 of Ref. \cite{Heinosaari2010}.

\end{enumerate}

\end{proof}

\

\begin{remark}
An $\E$-compatible instrument $\ii \in \inset_\I(\hs)$ cannot be repeatable for any observable $\E$. While ideality, value reproducibility, and first-kindness are also precluded for sharp (projection valued) observables, these may be admitted for observables that are unsharp, but with the norm-1 property so that they are not indefinite.  We may show this using the following example. Let the system be $\hs = \co^3$ with orthonormal basis $\{|\pm\>, |0\>\}$.  Consider the binary norm-1 observable $\E \coloneq \{E_+, E_-\}$  on $\hs$, with value space $\xx = \{+,-\}$, defined by $E_\pm  \coloneq \proj{\pm} + \frac{1}{2} \proj{0}$. Consider the $\E$-compatible instrument $\ii$ with operations 
 \begin{align*}
     \ii_\pm(\bigcdot) \coloneq \<\pm| \bigcdot |\pm\> \proj{\pm} + \<0| \bigcdot |0\> \frac{\onesys}{6}. 
 \end{align*}
 It is easily verified that $\ii_\pm$ are strictly positive operations, and so  $\ii$ exists in $\inset_\I(\hs)$. It is clear that this measurement is ideal, since $\tr[E_\pm \rho ] =1 \iff \rho = \proj{\pm}$, and $\ii_\pm(\proj{\pm}) = \proj{\pm}$.  Indeed, since ideality implies value reproducibility, then $\inset_\I(\hs)$ also contains value reproducible measurements. Finally, note that 
 \begin{align*}
   &\ii_\xx(\bigcdot)  = \sum_{a = \pm} \<a| \bigcdot |a\> \proj{a}  + \<0| \bigcdot |0\> \frac{\onesys}{3} \, ,
     &\ii_\xx^*(\bigcdot) =  \sum_{a = \pm} \<a| \bigcdot |a\> \proj{a} + \frac{1}{3}\tr[\bigcdot ] \proj{0} \, .
 \end{align*}
 It is easily verified that $\ii_\xx^*(E_\pm ) = \proj{\pm} + \frac{1}{3}\tr[E_\pm ] \proj{0} = \proj{\pm} + \frac{1}{2} \proj{0} = E_\pm$, and so this measurement is also first-kind. 
 But recall that any $\ii \in \inset_\I(\hs)$ for which the channel $\ii_\xx$ has a strictly positive fixed state cannot be ideal or value reproducible or, if $\E$ is not indefinite, of the  first kind.  This does not contradict what we observed, since for the instrument defined above,  $\rho = \ii_\xx(\rho)$ only if $\<0| \rho|0\> = 0$. That is, $\ii_\xx$ perturbs all strictly positive states.
\unskip\nobreak\hfill $\square$ \end{remark}

\

\begin{remark}
Consider an $\E$-compatible instrument $\ii \in \inset_\II(\hs)$, and assume that for some outcome $x$, the effect $E_x$ has rank 1, i.e., $E_x = \lambda \proj{\psi}$ for some unit vector $|\psi\>$ in $\hs$.   It follows that $\ff(\ii_\xx^*) = \co \onesys$. That is, $\ii$ disturbs all non-trivial observables. See \app{app:fixed-point-measurement} (\corref{cor:rank-1-disturbance}) for the proof. 
\unskip\nobreak\hfill $\square$ \end{remark}

\

\begin{remark}
For every observable $\E$ that is both indefinite and commutative, there exists a corresponding instrument $\ii$ in $\inset_\II(\hs)$ that is a measurement of the first kind; for every indefinite observable the corresponding L\"uders instrument $\ii^L_x(\bigcdot) = \sqrt{E_x} \bigcdot \sqrt{E_x}$ exists in  $\inset_\II(\hs)$, since every operation of this instrument is rank non-decreasing,   and the L\"uders instrument is a measurement of the first kind if and only if the corresponding observable is commutative. However, recall that the L\"uders instrument does not exist in $\inset_\III(\hs)$, since the operations of such instruments are completely purity-preserving, i.e., represented with a single Kraus operator \cite{Mohammady2025}. Therefore, while indefiniteness and commutativity of an observable are necessary for the existence of a first-kind measurement in $\inset_\III(\hs)$, these are not sufficient. Notwithstanding, there do exist some indefinite and commutative observables which admit a first-kind measurement in $\inset_\III(\hs)$. See Example G.1 in Ref. \cite{Mohammady2022a}, where a specific class of indefinite commutative observables admit a first-kind measurement utilising a strictly positive apparatus preparation and a unitary premeasurement interaction. 
\unskip\nobreak\hfill $\square$ \end{remark}

\

\begin{remark}
  Consider an $\E$-compatible instrument $\ii \in \inset_\III(\hs)$.  Assume that the bistochastic premeasurement interaction $\ee$ used in the implementation of $\ii$ also conserves some additive quantity $H \coloneq \hsys\otimes \oneapp + \onesys \otimes \happ$, where $\hsys \in \lo(\hs)$ and $\happ \in \lo(\ha)$ are self-adjoint operators representing the system and apparatus part of the conserved quantity, respectively.  That is,  $\ee^*(H) = H$. For example, $H$ can be the Hamiltonian, in which case the adiabatic implementation of the premeasurement interaction $\ee$ does not consume any work.  Then $\ii$ is a first-kind measurement only if it additionally holds that $[E_x, \hsys] = \zero$ for all $x \in \xx$. This follows from the fact that $\ii \in  \inset_\III(\hs)$ implies that $\ff(\ii_\xx)$ contains a strictly positive state, which guarantees that $\ff(\ii_\xx^*)$ is a von Neumann algebra, and Theorem 4.1 in Ref. \cite{Mohammady2021a}.
\unskip\nobreak\hfill $\square$ \end{remark}


\section{Discussion}\label{sec:discussion}

This work generalizes and unifies previous works relating to the thermodynamic consistency of quantum operations and measurements. In the conventional framework,  thermodynamically consistent operations---interpreted as operations consuming only finite thermodynamic resources---are considered as those that are implementable via a unitary interaction with an apparatus prepared in a strictly positive, i.e, full-rank, state. However, unitarity of the interaction between system and apparatus simultaneously satisfies several properties; unitary channels are  strictly positive, rank non-decreasing, and bistochastic, properties which (together with a strictly positive apparatus preparation) we identify with ($\I$) the weak third law, ($\II$) the strong third law, and ($\III$) the conjunction of the  second and third laws, respectively.  Therefore, to illuminate  what particular thermodynamic law is responsible for the (un)attainability of a given operation or measurement, we have  introduced the hierarchy of operations and instruments that are  ($\I$) consistent with the weak third law, ($\II$) consistent with the strong third law, and ($\III$) consistent with the second and the third laws, i.e.,  fully consistent with thermodynamics. Note that here, we are considering the possibly non-unitary interaction channels as fundamental objects that are not themselves dilated, so as to avoid issues of infinite regress.  

Each  class in the hierarchy was systematically analysed, with necessary (and in some cases also sufficient) conditions provided for an operation or measurement to belong to the class.   For example, in the case of quantum channels we saw that consistency with the weak and strong third laws is equivalent to the channel being strictly positive and rank non-decreasing, respectively. On the other hand, a channel is fully consistent with thermodynamics only if it is rank non-decreasing \emph{and} does not perturb some strictly positive state; the latter condition can be seen to be a generalisation of a key property of thermal channels, which do not perturb the thermal equilibrium state of the system. In the case of the non-disturbance properties of quantum measurements, we saw that while repeatability is forbidden by the weak third law,  ideality and value reproducibility are forbidden by the strong third law; that is, while some unsharp observables admit ideal or value reproducible measurements that are consistent with the weak third law, no observable admits such measurements in a way that is consistent with the strong third law. On the other hand, while first-kindness demands indefiniteness of the measured observable given the strong third law, that such an observable must also be commutative is necessitated only when the second law is also required to hold.

While we have fully characterised the set of operations consistent with the weak third law, in the sense that we provided both necessary and sufficient conditions for an operation to belong to this class, the other two classes in the hierarchy were not fully characterised: necessary and \emph{partially} sufficient conditions were provided for these. Furthermore, we did not explore whether or not operations that are fully consistent with thermodynamics differ with those that are implementable via a unitary interaction with a strictly positive apparatus preparation. Additionally, we only addressed the question of thermodynamic consistency of a given operation, i.e., implementability of said operation given resources that are finite, albeit arbitrarily large. A physically relevant question is how to quantify the resources that are required for the implementation of a given thermodynamically consistent operation, for example, by means of quantitative trade-off relations. We leave these open problems for future work.

\begin{acknowledgments}
Funding for this project was provided by the IMPULZ program of the Slovak Academy of Sciences under the Agreement on the Provision of Funds No. IM-2023-79 (OPQUT). F. S. also acknowledges funding from project  09I03-03-V04-00777 (QENTAPP). M. H. M. also acknowledges funding from projects  VEGA 2/0164/25 (QUAS) and APVV-22-0570 (DeQHOST).     
\end{acknowledgments}

\appendix


\section{Effects and fixed points of operations}\label{app:definite-effects}
A positive operator  $\zero \leqslant E \leqslant \one$ is called an effect. For any  effect $E$, there exists a  complementary effect $E^c \coloneq \one - E$. 

\

\begin{lemma}\label{lemma:effect-support-probability-1}
    Let $E$ be an effect on a finite-dimensional system $\h$. The following hold: 
    \begin{enumerate}[(i)]
        \item There exists a state $\rho$ such that $\tr[E \rho] =1$ if and only if $\|E\|=1$. 
        \item A state $\rho$ satisfies $\tr[E \rho] =1$ if and only if $E \rho = E \rho E = \rho$. 
        \item A state $\rho$ satisfies $\tr[E \rho] = 1$ if and only if $P \rho = P \rho P = \rho$, where $P$ is the projection onto the eigenvalue-1 eigenspace of $E$. 
    \end{enumerate}
\unskip\nobreak\hfill $\square$ \end{lemma}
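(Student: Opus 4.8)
The plan is to fix a spectral decomposition $E=\sum_i\lambda_i\proj{e_i}$ with $\{|e_i\rangle\}$ an orthonormal eigenbasis and $\lambda_i\in[0,1]$, and to prove the three claims in order, with (iii) built on (ii). For (i): for any state $\rho$ one has $\tr[E\rho]=\sum_i\lambda_i\langle e_i|\rho|e_i\rangle\leqslant\|E\|\sum_i\langle e_i|\rho|e_i\rangle=\|E\|$, so $\tr[E\rho]=1$ together with $\|E\|\leqslant 1$ forces $\|E\|=1$; conversely, if $\|E\|=1$ then in finite dimensions $E$ has a unit eigenvector $|e\rangle$ with eigenvalue $1$ (already noted in the preliminaries), and the state $\rho=\proj{e}$ satisfies $\tr[E\rho]=1$.

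For (ii), the implication $E\rho=\rho\Rightarrow\tr[E\rho]=\tr[\rho]=1$ is immediate. For the converse, note that $\tr[E\rho]=1$ is equivalent to $\tr[(\one-E)\rho]=0$; putting $B:=\sqrt{\one-E}\,\sqrt{\rho}$ gives $\tr[B^*B]=\tr[\sqrt{\rho}\,(\one-E)\,\sqrt{\rho}]=0$, hence $B=\zero$, and then multiplying on the left by $\sqrt{\one-E}$ and on the right by $\sqrt{\rho}$ yields $(\one-E)\rho=\zero$, i.e. $E\rho=\rho$. Taking adjoints (both $E$ and $\rho$ are self-adjoint) gives $\rho E=\rho$, so $E\rho E=\rho E=\rho$.

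For (iii), the direction ``$\Leftarrow$'' is easy: since $P$ is the spectral projection of $E$ onto its eigenvalue-$1$ eigenspace we have $EP=P$, so $P\rho=\rho$ implies $E\rho=EP\rho=\rho$ and thus $\tr[E\rho]=1$. For ``$\Rightarrow$'', by (ii) we already have $E\rho=\rho$, hence $E^n\rho=\rho$ for every $n\in\nat$; in finite dimensions $E^n\to P$ in operator norm, because the eigenvalues $\lambda_i<1$ vanish in the limit while those equal to $1$ are fixed, so $\rho=\lim_n E^n\rho=P\rho$, and taking adjoints $\rho P=\rho$, giving $P\rho P=\rho$.

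All the steps are short; the one that deserves a moment's care is the limit $E^n\to P$ in (iii) — one should observe that the only eigenvalue of $E$ of modulus one is $1$ itself (true because $E$ is a positive contraction) and that its spectral projection is precisely $P$ — along with the elementary facts, used twice, that a positive semidefinite operator of vanishing trace is zero and that $B^*B=\zero$ implies $B=\zero$. If one prefers to avoid the limiting argument in (iii), an equivalent route is to diagonalise directly: $\tr[(\one-E)\rho]=\sum_i(1-\lambda_i)\langle e_i|\rho|e_i\rangle=0$ with all summands nonnegative forces $\langle e_i|\rho|e_i\rangle=0$ whenever $\lambda_i<1$, and positivity of $\rho$ then gives $\rho|e_i\rangle=\zero$ for every such $i$, i.e. $(\one-P)\rho=\zero$.
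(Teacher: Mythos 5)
Your proof is correct and follows essentially the same route as the paper's: the norm bound for (i), the trace-of-$B^*B$ argument with $B=\sqrt{\one-E}\sqrt{\rho}$ for (ii), and the limit $E^n\to P$ (the paper writes this as $E^n=P+Q^n$ with $\|Q\|<1$) for (iii). The only cosmetic differences are that you spell out the trivial direction of (i) and offer an optional diagonalisation shortcut for (iii); both are fine.
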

\begin{proof}
\begin{enumerate}[(i)]
    \item The if statement is trivial, so we shall prove the only if statement. For any self-adjoint  $A = A^* \in \lo(\h)$, and any $B\in \lo(\h)$, it holds that $B^* A B \leqslant \|A\| B^*B$. It follows that for any state $\rho$, it holds that  $\tr[E \rho] = \tr[\sqrt{\rho} E \sqrt{\rho}] \leqslant \| E\| \tr[\rho] = \| E\|$. Since $E$ is an effect, then $\| E\| \leqslant 1$. As such, $\tr[E\rho]=1 \implies \|E\|=1$.  

    \item The if statement is trivial, so we shall  prove the only if statement. Assume that $\tr[E \rho] = 1$, which implies that  $\tr[E^c \rho] = 0$. But $\tr[E^c \rho] = \tr[(\sqrt{E^c} \sqrt{\rho})^*(\sqrt{E^c} \sqrt{\rho})]$, which vanishes if and only if   $\sqrt{E^c} \sqrt{\rho} =\zero \implies E^c \rho =\zero$, which gives $\rho = (E + E^c)\rho = E\rho$.  Since $\rho$ and $E$ are self-adjoint, we also have $\rho E = \rho$, and so $E \rho E = E \rho = \rho$. 

 \item We may decompose $E$ as $E = P + Q$, where $Q$ is a positive operator with orthogonal support to $P$, and which satisfies $\| Q\| <1$. Since $EP = P$, that $\tr[E\rho] =1$ if $\rho = P \rho$ immediately follows. Now note that $E^n = P + Q^n$ for any $n\in \nat$. But by (ii), if $\tr[E \rho] = 1$ then it must hold that $E^n \rho = P\rho + Q^n \rho = \rho$ for all $n$. Since $\|Q\|<1$ implies that $\lim_{n\to \infty } Q^n = \zero$, it follows that $P \rho = \rho$. Similarly as in (ii), this implies that $P \rho P = \rho$.
\end{enumerate}
\end{proof}

Recall that an operation that is compatible with the unit effect is a channel. By the Schauder–Tychonoff fixed point theorem \cite{Fixed-points-appl, Tumulka2024}, all channels mapping a system to itself have at least one fixed state. However, there exist operations that are not channels  which nonetheless have non-vanishing fixed points, but only if such operation is compatible with a norm-1 effect. 

\

\begin{lemma}\label{lemma:null-fixed-point}
Let $\Phi : \lo(\h) \to \lo(\h)$ be an $E$-compatible operation, and $\Phi^*$ its dual. The following hold:
 \begin{enumerate}[(i)]
     \item If $\| E \| <1$, then $\ff(\Phi) = \ff(\Phi^*) = \zero$.

     \item  $\ff(\Phi), \ff(\Phi^*)$ contain non-vanishing operators if and only if there exists a projection $P$ such that: (a) $EP = P$, and (b) the operation $\Phi_P(\bigcdot) \coloneq P \Phi( P \bigcdot P) P$ satisfies $\Phi_P(\bigcdot) = \Phi(P \bigcdot P)$.    
 \end{enumerate}
\unskip\nobreak\hfill $\square$ 
\end{lemma}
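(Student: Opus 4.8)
The plan is to prove the two claims separately. For part~(i), observe that $\Phi^*$ is a positive map with $\Phi^*(\one)=E$ and $\|E\|<1$, hence a strict contraction on self-adjoint operators: if $A=A^*\in\ff(\Phi^*)$ then $-\|A\|\,\one\leqslant A\leqslant\|A\|\,\one$, and applying $\Phi^*$ gives $-\|A\|\,E\leqslant A\leqslant\|A\|\,E$, so that $\|A\|=\sup_{\|\psi\|=1}\lvert\langle\psi|A|\psi\rangle\rvert\leqslant\|A\|\,\|E\|<\|A\|$ unless $A=\zero$. Since $\ff(\Phi^*)$ is closed under involution it is spanned by its self-adjoint elements, so $\ff(\Phi^*)=\{\zero\}$; then $\ff(\Phi)=\{\zero\}$ because $\dim\ff(\Phi)=\dim\ff(\Phi^*)$ (or directly, because $\Phi$ contracts the trace norm of positive operators by the factor $\|E\|$). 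Part~(ii) is more involved, and I treat its two directions in turn.

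For the ``if'' direction of part~(ii), suppose a nonzero projection $P$ satisfies $EP=P$ and $P\Phi(PXP)P=\Phi(PXP)$ for all $X\in\lo(\h)$. The second condition says precisely that $\Phi$ maps the subalgebra of operators supported in $P\h$ into itself; call this restriction $\Psi$. Then $\Psi$ is CP, and it is trace preserving on $\lo(P\h)$, since for $Y=PYP$ we have $\tr[\Psi(Y)]=\tr[\Phi^*(\one)Y]=\tr[EY]=\tr[PEP\,Y]=\tr[PY]=\tr[Y]$, using $EP=P\implies PEP=P$. Thus $\Psi$ is a channel on $P\h$, and by the Schauder--Tychonoff fixed point theorem it has a fixed state $\rho_0$; regarded as an operator on $\h$, $\rho_0$ is a nonzero element of $\ff(\Phi)$, and then $\ff(\Phi^*)$ is nonzero too by the dimension equality.

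For the ``only if'' direction, suppose $\ff(\Phi^*)$ (equivalently $\ff(\Phi)$, by the dimension equality) contains a nonzero operator; being closed under involution, it contains a nonzero self-adjoint $A$, which I rescale to have $\|A\|=1$ and, after replacing $A$ by $-A$ if needed, to have $1\in\mathrm{spec}(A)$. Let $P$ be the spectral projection of $A$ for the eigenvalue $1$, so $P\ne\zero$ and $AP=PA=P$. From $A\leqslant\one$ and positivity of $\Phi^*$ we get $A=\Phi^*(A)\leqslant\Phi^*(\one)=E$, hence $P=PAP\leqslant PEP\leqslant P$, so $PEP=P$; since $\one-E\geqslant\zero$ and $P(\one-E)P=\zero$, this forces $(\one-E)P=\zero$, i.e.\ $EP=P$, which is condition~(a). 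For condition~(b), let $\rho$ be any state supported in $P\h$. Then $\tr[\Phi(\rho)]=\tr[E\rho]=\tr[PEP\rho]=\tr[P\rho]=1$, so $\Phi(\rho)$ is itself a state, while $\tr[A\Phi(\rho)]=\tr[\Phi^*(A)\rho]=\tr[A\rho]=1$ (using $AP=P$). Since $\one-A\geqslant\zero$, $\Phi(\rho)\geqslant\zero$, and the standard fact that $\tr[XY]=0$ for positive $X,Y$ forces $XY=\zero$, the vanishing of $\tr[(\one-A)\Phi(\rho)]$ gives $(\one-A)\Phi(\rho)=\zero$, i.e.\ $\Phi(\rho)$ is supported in $\ker(\one-A)=P\h$; extending by linearity over positive operators supported in $P\h$ then yields $P\Phi(PXP)P=\Phi(PXP)$ for all $X$, which is condition~(b).

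The step I expect to be the main obstacle is this ``only if'' direction: one must produce the correct projection. The two ideas that make it work are to normalise the fixed point so that $\|A\|=1$ (making $\one-A$ positive) and to take $P$ to be the \emph{top} eigenprojection of $A$; with that choice, the key observation is that $\Phi$ is automatically trace preserving on states supported in $P\h$, so $\Phi(\rho)$ is a genuine state on which $A$ attains its maximal value $1$, which pins $\supp\Phi(\rho)$ inside $P\h$. The remaining pieces --- compressing $E$ to the $P$-block, the elementary $\tr[XY]=0$ argument, and the appeals to Schauder--Tychonoff and to $\dim\ff(\Phi)=\dim\ff(\Phi^*)$ --- are routine bookkeeping.
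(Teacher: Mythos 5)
Your proof is correct, but the route differs from the paper's in two of the three pieces. For part (i) the paper passes to a fixed \emph{state} of $\Phi$ (via the fact, quoted from Wolf, that a non-vanishing fixed point of an operation forces a fixed state) and reads off $\|E\|=1$ from $\tr[E\rho_0]=\tr[\rho_0]=1$; you instead run an operator-norm contraction argument directly on self-adjoint elements of $\ff(\Phi^*)$, which avoids the fixed-state reduction entirely and only needs closure under involution plus the dimension equality. The ``if'' direction of (ii) is essentially identical to the paper's (restrict to $\lo(P\h)$, check unitality of the dual via $PEP=P$, invoke Schauder--Tychonoff). The ``only if'' direction is where you genuinely diverge: the paper again extracts a fixed state $\rho_0$ of $\Phi$, takes $P=\supp(\rho_0)$, and pins down supports with the order argument $\rho_0\geqslant\lambda\sigma\implies\rho_0=\Phi(\rho_0)\geqslant\lambda\Phi(\sigma)$, whereas you take $P$ to be the eigenvalue-$1$ eigenprojection of a normalised self-adjoint fixed point $A$ of $\Phi^*$ and control supports through $\tr[(\one-A)\Phi(\rho)]=0$ together with the $\tr[XY]=0\implies XY=\zero$ fact. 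Your construction may yield a different projection than the paper's, but the lemma only asserts existence, so this is immaterial; what your version buys is independence from the nonzero-fixed-point-implies-fixed-state theorem (you use Schauder--Tychonoff only in the ``if'' direction, where it is unavoidable), at the cost of a slightly less transparent link to the invariant state that the paper exploits downstream. The only loose spot is the parenthetical alternative in (i) claiming that trace-norm contraction on positive operators directly kills $\ff(\Phi)$ --- that shortcut silently reuses the fixed-state reduction you were avoiding --- but since your main argument goes through the dimension equality, nothing depends on it.
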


\begin{proof}
\emph{(i)}:  By complete positivity,  it trivially holds that $\Phi(\zero) = \Phi^*(\zero) = \zero$. If $\ff(\Phi)$ contains a non-vanishing fixed point, then it must contain a fixed state \cite[Theorem 6.5]{Wolf2012}. Assume that $\Phi(\rho_0) = \rho_0$ for some state $\rho_0$. This implies that 
\begin{align}
\label{appeq:fixed-state}
    \tr[E \rho_0] = \tr[\Phi(\rho_0)] = \tr[\rho_0] = 1 \, ,
\end{align}
which, by item (i) of \lemref{lemma:effect-support-probability-1},  implies that $\| E \| = 1$ must hold. Therefore, if $\| E \| <1$, then $\ff(\Phi) = \zero$. Since $\dim(\ff(\Phi)) = \dim(\ff(\Phi^*))$, then $\ff(\Phi^*) = \zero$ also holds. 

\

\emph{(ii)}:   To prove the only if statement, we shall show that if $\Phi$ has a fixed point, then the projection $P$ with the stated properties (a)-(b) exists. To this end, we first note that, as mentioned in item (i), if $\Phi$ has any non-vanishing fixed points, it must also have at least one fixed state  $\rho_0$ satisfying Eq.~\eqref{appeq:fixed-state}. Let  $P$ be the support projection for this state. 
By \lemref{lemma:effect-support-probability-1}, this implies that  $\|E\|=1$ and $EP=P$. That is, $P \leqslant \tilde P$, where  $\tilde P$ is the projection onto the eigenvalue-1 eigenspace of $E$. We thus have (a). Now, recall that for any  state $\sigma$ satisfying $P\sigma=\sigma$, there exists $\lambda>0$ such that $\rho_0 \geqslant \lambda\sigma$. By positivity and linearity of $\Phi$, it follows that $\rho_0=\Phi(\rho_0) \geqslant \lambda\Phi(\sigma)$ and so $P\Phi(\sigma)=\Phi(\sigma)$. That is to say, for any state $\sigma$ it holds that  $\supp(\sigma) \subseteq P \h \implies \supp(\Phi(\sigma)) \subseteq P \h $, and so     $\Phi(PAP)=P\Phi(PAP)P =: \Phi_P(A)$ for any $A$. We thus have (b).

Now we shall show the if statement. Assume that $P$ exists satisfying conditions (a)-(b). Recall that an operation is trace preserving, i.e., is a channel, if and only if its dual is unital. Observe that $\Phi_P(\bigcdot) = \Phi( P \bigcdot P) \iff \Phi_P^*(\bigcdot) = P \Phi^*(\bigcdot) P$, and recall that $\Phi^*(\one) = E$. It clearly follows that 
\begin{align*}
    \Phi_P^*(\one) =  P \Phi^*(\one) P = P E P = P \, . 
\end{align*}
Unless $P = \one$ then $\Phi_P$ is not a channel acting in $\h$. But, when we restrict $\Phi_P$ from $\lo(\h) \to \lo(\h)$ to $\lo(P \h) \to \lo(P\h)$, and note that the unit in $P\h$ is $P$, we see that $\Phi_P^*(P) = \Phi_P^*(\one) = P$. It follows that the restricted   $\Phi_P$ is a channel. By the Schauder–Tychonoff fixed point theorem  there exists at least one state $\rho_0 = P \rho_0$ such that $\Phi_P(\rho_0) = \rho_0$. But, this implies that $\rho_0 = \Phi_P(\rho_0) = \Phi(P \rho_0 P) = \Phi(\rho_0)$.

\end{proof}


\section{Strictly positive and rank non-decreasing CP maps}\label{app:completely-rank-non-decreasing}

Let $\Phi : \lo(\h) \to \lo(\kk)$ be a CP map. $\Phi$ is strictly positive if $A > \zero \implies \Phi(A) >\zero$. On the other hand, if $\kk = \h$, then $\Phi$ is  rank non-decreasing if $\rank{\Phi(\A)} \geqslant \rank{\A}$ for all  $A \geqslant \zero$. The composition of two strictly positive (or rank non-decreasing) CP maps is also strictly positive (rank non-decreasing).  While a rank non-decreasing CP map is clearly strictly positive, there exist strictly positive CP maps acting in $\h$ that are not rank non-decreasing \cite[Appendix B]{Mohammady2022a}. For any pair of operators $C_1, C_2 \in \lo(\h)$, we define   the $(C_1, C_2)$-operator scaling of $\Phi : \lo(\h) \to \lo(\h)$, and its dual,  as 
 \begin{align}\label{eq:operator-scaling}
  &\Phi_{C_1,C_2} ( \bigcdot )  \coloneq C_1 \Phi( C_2\, \bigcdot\, C_2 ^* ) C_1^* \, ,   &[\Phi_{C_1,C_2}]^* (\bigcdot) & = C_2^* \Phi^*( C_1^*\, \bigcdot\, C_1  ) C_2 \, .
 \end{align}
These are clearly both CP maps acting in $\h$.  Further, we define
\begin{align}\label{eq:DS-scaling}
 DS(\Phi_{C_1, C_2}) \coloneq \tr[(\Phi_{C_1,C_2}(\one\sub{\h}) - \one\sub{\h})^2] + \tr[([\Phi_{C_1,C_2}]^*(\one\sub{\h}) - \one\sub{\h})^2] \, .   
\end{align}
Now we recall a useful result, shown in Theorem 4.6 of Ref. \cite{Gurvits2003}.

\

\begin{lemma}\label{lemma:iff-rank-non-dec}
 Let $\Phi$ be a CP map acting in a finite dimensional system $\h$. $\Phi$ is rank non-decreasing if and only if for all $\epsilon >0$, there exists $C_1, C_2 \in \lo(\h)$ such that $DS(\Phi_{C_1, C_2}) \leqslant \epsilon^2$.
\unskip\nobreak\hfill $\square$ \end{lemma}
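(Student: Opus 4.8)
The cleanest route is through Gurvits's notion of the \emph{capacity} of a completely positive map, $\mathrm{cap}(\Phi) \coloneq \inf\{\det \Phi(X) : X > \zero,\ \det X = 1\}$ (write $d \coloneq \dim(\h)$), together with the reformulation of rank non-decreasingness in terms of \emph{shrunk subspaces}: call a subspace $V \subseteq \h$ shrunk for $\Phi$ if $\rank{\Phi(P_V)} < \dim V$, where $P_V$ projects onto $V$. Since any $X \geqslant \zero$ obeys $X \geqslant \lambda\, P_{\supp(X)}$ for some $\lambda > 0$, the map $\Phi$ is rank non-decreasing if and only if it has no shrunk subspace. The plan is to establish the equivalence of (a) $\Phi$ is rank non-decreasing; (b) $\mathrm{cap}(\Phi) > 0$; (c) $\inf_{C_1,C_2} DS(\Phi_{C_1,C_2}) = 0$; the lemma is the equivalence (a) $\Leftrightarrow$ (c), and routing through (b) separates the combinatorial content from the analytic (Sinkhorn-type) content.

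For (c) $\Rightarrow$ (a) I would argue directly, without capacity. If $DS(\Psi) \leqslant \epsilon^2$ then, since the operator norm is dominated by the Hilbert--Schmidt norm, $\|\Psi(\one) - \one\| \leqslant \epsilon$ and $\|\Psi^*(\one) - \one\| \leqslant \epsilon$. If such a $\Psi$ had a shrunk subspace $V$ of deficiency $\delta \coloneq \dim V - \rank{\Psi(P_V)} \geqslant 1$, then $\Psi(P_V)$ has at most $\dim V - \delta$ nonzero eigenvalues, each at most $\|\Psi(\one)\| \leqslant 1 + \epsilon$, while $\tr[\Psi(P_V)] = \tr[P_V \Psi^*(\one)] \geqslant \dim V\,(1 - \epsilon)$; comparing forces $(\dim V - \delta)(1+\epsilon) \geqslant \dim V\,(1-\epsilon)$, i.e. $\delta \leqslant 2\epsilon\, d$, impossible for $\delta \geqslant 1$ once $\epsilon < 1/(2d)$. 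Hence a sufficiently-near-doubly-stochastic CP map has no shrunk subspace. Given (c), choose $C_1, C_2$ with $DS(\Phi_{C_1,C_2}) < 1/(4d^2)$: invertibility of $\Phi_{C_1,C_2}(\one) = C_1 \Phi(C_2 C_2^*) C_1^*$ and of $[\Phi_{C_1,C_2}]^*(\one) = C_2^* \Phi^*(C_1^* C_1) C_2$ forces $C_1, C_2$ invertible, and for any $V$, writing $W \coloneq C_2^{-1} V$, one has $\lambda P_V \leqslant C_2 P_W C_2^* \leqslant \Lambda P_V$ for some $0 < \lambda \leqslant \Lambda$, hence $\rank{\Phi_{C_1,C_2}(P_W)} = \rank{\Phi(P_V)}$; since $\Phi_{C_1,C_2}$ has no shrunk subspace this gives $\rank{\Phi(P_V)} \geqslant \dim V$ for all $V$, i.e. $\Phi$ is rank non-decreasing.

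The substantive direction (a) $\Rightarrow$ (c) factors as (a) $\Rightarrow$ (b) $\Rightarrow$ (c). The implication (b) $\Rightarrow$ (a) is an easy sanity check: a shrunk $V$ with $\rank{\Phi(P_V)} = r < k \coloneq \dim V$ kills the capacity, since $X_t \coloneq t\, P_V + t^{-k/(d-k)} P_{V^\perp}$ has $\det X_t = 1$ while $\Phi(X_t)$ is a rank-$r$ operator of size of order $t$ plus a vanishing remainder, so $\det \Phi(X_t)$ is of order $t^{\,d(r-k)/(d-k)} \to 0$. The converse (a) $\Rightarrow$ (b) is the genuine obstacle: one must upgrade ``$\det\Phi(X) > 0$ for every $X > \zero$ with $\det X = 1$'' to a uniform lower bound, i.e. rule out a minimizing sequence escaping to infinity. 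This is the positive-operator analogue of Hall's / the Edmonds--Rado theorem; the standard argument extracts from a hypothetical $X_n$ with $\det X_n = 1$ and $\det\Phi(X_n) \to 0$ a limiting flag $\{0\} = V_0 \subsetneq V_1 \subsetneq \dots \subsetneq V_m = \h$ along which the eigenvalues of $X_n$ separate into distinct exponential scales, and then uses rank non-decreasingness on the $P_{V_j}$ to produce a ``defect Hall inequality'' contradicting $\det\Phi(X_n) \to 0$. I would cite Theorem~4.6 of Ref.~\cite{Gurvits2003} (and its supporting lemmas) for this step rather than reprove it.

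For (b) $\Rightarrow$ (c) I would run the operator Sinkhorn iteration: starting from $\Phi$, alternately ``row-normalize'', $\Psi \mapsto \Psi(\one)^{-1/2}\Psi(\bigcdot)\Psi(\one)^{-1/2}$, and ``column-normalize'', $\Psi \mapsto \Psi([\Psi^*(\one)]^{-1/2}\bigcdot[\Psi^*(\one)]^{-1/2})$. Each normalization is a $(C_1,C_2)$-scaling with $C_1, C_2$ invertible, the inverse square roots being well defined because $\mathrm{cap}(\Phi) > 0$ keeps every marginal invertible along the orbit (via $\det\Psi(\one) \geqslant \mathrm{cap}(\Psi)$ and the scaling covariance $\mathrm{cap}(\Psi_{C_1,C_2}) = |\det(C_1 C_2)|^2\,\mathrm{cap}(\Psi)$), so every iterate $\Phi^{(n)}$ is a scaling of $\Phi$. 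Tracking the potential $\det \Phi^{(n)}(\one)$ evaluated just after each column step, the concavity of $\det^{1/d}$ on positive operators (equivalently a mixed-discriminant/AM--GM inequality) shows it is nondecreasing and bounded above by $1$, hence convergent; vanishing of the per-step increment then forces $\Phi^{(n)}(\one) \to \one$ and $[\Phi^{(n)}]^*(\one) \to \one$, i.e. $DS(\Phi^{(n)}) \to 0$, which is (c). The main obstacle throughout is the combinatorial step (a) $\Rightarrow$ (b); everything else is either an elementary determinant estimate or the textbook Sinkhorn-convergence bookkeeping.
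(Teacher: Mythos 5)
Your proposal is correct in substance, but it is worth noting that the paper does not prove this lemma at all: it is stated as a recalled result with a bare citation to Theorem~4.6 of Ref.~\cite{Gurvits2003}. What you have written is essentially a reconstruction of Gurvits's own proof architecture --- capacity, shrunk subspaces, and operator Sinkhorn scaling --- so the comparison is between ``cite and move on'' and ``sketch the actual argument.'' Your treatment of (c)~$\Rightarrow$~(a) is fully self-contained and correct: the eigenvalue-counting bound $\delta \leqslant 2\epsilon d$ for a near-doubly-stochastic map, the forced invertibility of $C_1, C_2$, and the observation that $\lambda P_V \leqslant C_2 P_W C_2^* \leqslant \Lambda P_V$ makes ranks invariant under scaling are all sound, and this is the only direction of the lemma that requires no heavy machinery. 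Your (b)~$\Rightarrow$~(a) determinant estimate via $X_t = t P_V + t^{-k/(d-k)} P_{V^\perp}$ is also correct (the exponent $d(r-k)/(d-k)$ checks out by counting the maximal power of $t$ in the mixed-discriminant expansion). For the two genuinely hard pieces you defer appropriately: (a)~$\Rightarrow$~(b) to Gurvits's Edmonds--Rado-type argument, and the Sinkhorn convergence (b)~$\Rightarrow$~(c) to the standard potential-function bookkeeping --- though your description of the latter is the loosest part of the sketch (the usual potential is the capacity itself, which increases by the factor $1/\det[\Psi^*(\one)] \geqslant 1$ at each normalization and is bounded above, with the per-step defect controlling $DS$ quantitatively; your phrasing in terms of $\det\Phi^{(n)}(\one)$ alone would need to be tightened to make the monotonicity claim precise). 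What your route buys is a clear separation of the elementary content (scaling covariance of ranks, determinant estimates) from the single combinatorial obstruction that genuinely requires Ref.~\cite{Gurvits2003}; what the paper's route buys is brevity.
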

As an immediate corollary, we see that a bistochastic channel is rank non-decreasing; if $\Phi(\one) = \Phi^*(\one) = \one$, then $DS(\Phi_{\one,\one}) = 0$.

\

\begin{lemma}\label{lemma:equivalent-dual-positive-rank-non-dec}
    Let $\Phi : \lo(\h) \to \lo(\kk)$ be a CP map. The following hold:
\begin{enumerate}[(i)]
    \item  The following statements are equivalent: (a) $\Phi$ is strictly positive; (b) there exists $   \lo(\h) \ni  B \geqslant \zero $ such that $\Phi(B) > \zero$; (c) for all $A \in \lo(\kk)$ it holds that $\Phi^*(A^*A) = \zero \iff A = \zero$.
    \item If $\kk = \h$, then  $\Phi$ is rank non-decreasing if and only if $\Phi^*$ is rank non-decreasing.
\end{enumerate}    
\unskip\nobreak\hfill $\square$ \end{lemma}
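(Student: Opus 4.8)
The plan is to prove the two parts of \lemref{lemma:equivalent-dual-positive-rank-non-dec} separately, using the trace duality $\tr[\Phi^*(A) B] = \tr[A \Phi(B)]$ as the main bridge between $\Phi$ and $\Phi^*$ in each case.

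\textbf{Part (i).} I would prove the cycle of implications (a)$\implies$(b)$\implies$(c)$\implies$(a). The implication (a)$\implies$(b) is immediate: since $\h$ is finite-dimensional it contains the strictly positive operator $\one\sub{\h}$, so take $B = \one\sub{\h}$. For (b)$\implies$(c), suppose $\Phi(B) > \zero$ for some $B \geqslant \zero$ and suppose $\Phi^*(A^*A) = \zero$. Then $\tr[\Phi^*(A^*A) B] = \tr[A^*A \, \Phi(B)] = \zero$; since $\Phi(B) > \zero$ and $A^*A \geqslant \zero$, the fact (stated in the preliminaries) that $\tr[X \, C^*C] = \zero \iff C = \zero$ for $X > \zero$ forces $A = \zero$ (writing $A^*A = C^*C$ with $C = |A|$, or directly applying the fact with the roles of the positive operator and the square arranged appropriately). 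The converse direction of (c), that $A = \zero \implies \Phi^*(A^*A) = \zero$, is trivial by linearity. For (c)$\implies$(a), let $A > \zero$ in $\lo(\h)$; I must show $\Phi(A) > \zero$, equivalently that $\langle \psi | \Phi(A) | \psi \rangle > 0$ for every nonzero $\psi \in \kk$. Since $A > \zero$ we have $A \geqslant \lambda \one\sub{\h}$ for some $\lambda > 0$, so by positivity of $\Phi$ it suffices to show $\Phi(\one\sub{\h}) > \zero$. Suppose not: then $\langle \psi | \Phi(\one\sub{\h}) | \psi \rangle = \zero$ for some unit vector $\psi$, i.e.\ $\tr[\proj{\psi} \, \Phi(\one\sub{\h})] = \zero$, which by duality equals $\tr[\Phi^*(\proj{\psi}) \, \one\sub{\h}] = \tr[\Phi^*(\proj{\psi})] = \zero$. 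Since $\Phi^*$ is CP, $\Phi^*(\proj{\psi}) \geqslant \zero$, and a positive operator of zero trace is zero, so $\Phi^*(\proj{\psi}) = \Phi^*(\ketbra{\psi}{\psi}) = \zero$; taking $A = \ketbra{\psi}{\psi}$ (so $A^*A = \proj{\psi}$ up to normalisation, or rather apply (c) to the rank-one $A = \ketbra{\phi}{\psi}$ with appropriate $\phi$) contradicts (c). I would take a little care here to phrase (c) so that $\Phi^*(\proj{\psi}) = \zero$ directly contradicts it — writing $\proj{\psi} = A^*A$ with $A = \proj{\psi}$ works since $\proj{\psi}$ is a projection.

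\textbf{Part (ii).} Here the natural tool is \lemref{lemma:iff-rank-non-dec}: $\Phi$ is rank non-decreasing iff for all $\epsilon > 0$ there exist $C_1, C_2$ with $DS(\Phi_{C_1,C_2}) \leqslant \epsilon^2$. By symmetry it suffices to prove one direction, say $\Phi$ rank non-decreasing $\implies$ $\Phi^*$ rank non-decreasing. Fix $\epsilon > 0$ and choose $C_1, C_2$ with $DS(\Phi_{C_1,C_2}) \leqslant \epsilon^2$. I would then observe from the definition \eqref{eq:operator-scaling} that $[\Phi_{C_1,C_2}]^* = [\Phi^*]_{C_2^*, C_1^*}$ — that is, the operator scaling of $\Phi^*$ by the pair $(C_2^*, C_1^*)$ — which is a direct computation unwinding the two formulas in \eqref{eq:operator-scaling}. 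Then from the definition \eqref{eq:DS-scaling} of $DS$, which is manifestly symmetric under swapping the primal map with its dual (it is the sum of the two deviation-from-unital terms, one for the map and one for its dual), we get $DS([\Phi^*]_{C_2^*, C_1^*}) = DS(\Phi_{C_1,C_2}) \leqslant \epsilon^2$. Hence $\Phi^*$ satisfies the criterion of \lemref{lemma:iff-rank-non-dec} and is rank non-decreasing. The reverse implication follows identically, or simply by noting $(\Phi^*)^* = \Phi$.

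\textbf{Main obstacle.} The genuinely delicate point is part (i), specifically getting the logic of statement (c) to line up cleanly with the strict-positivity argument: one must be careful that "$\Phi^*(A^*A) = \zero \iff A = \zero$ for all $A$" is applied to the right rank-one operator and that the passage from "$\Phi(\one\sub{\h})$ not strictly positive" to a concrete vanishing of $\Phi^*$ on a projection is airtight. Part (ii) is essentially bookkeeping once one spots that $DS$ and operator scaling are built to be dual-symmetric, so I would not expect difficulty there beyond writing the index-chasing carefully.
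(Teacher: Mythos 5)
Your proof is correct and follows essentially the same route as the paper: part (i) rests on the trace duality together with the facts that $A \geqslant \lambda B$ for some $\lambda>0$ when $A>\zero$, $B\geqslant\zero$, and that $\tr[X\,A^*A]=0 \iff A=\zero$ for $X>\zero$; part (ii) is exactly the paper's argument via \lemref{lemma:iff-rank-non-dec} and the identities $[\Phi_{C_1,C_2}]^*=[\Phi^*]_{C_2^*,C_1^*}$ and $DS([\Phi^*]_{C_2^*,C_1^*})=DS(\Phi_{C_1,C_2})$. The only cosmetic difference is that you organise (i) as a cycle (a)$\implies$(b)$\implies$(c)$\implies$(a) while the paper proves (a)$\iff$(b) and (a)$\iff$(c) separately.
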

\begin{proof}
$(i): $  (a) $\implies$ (b) is trivial. To show (b) $\implies$ (a), let us note that for any $A > \zero$ and $B \geqslant \zero$ on $\h$, there exists $\lambda > 0$ such that $A \geqslant  \lambda B$. Assume that $\Phi(B) > \zero$ for some $B \geqslant \zero$. It follows from positivity and linearity of $\Phi$ that $\Phi(A) \geqslant  \lambda \Phi(B) > \zero$ for all $A>\zero$. 
Now let us show that (a) $\implies$ (c). For any $\rho > \zero$ on $\h$, it holds that $\tr[\rho \, \Phi^*(A^*A) ] = 0 \iff \Phi^*(A^*A) = \zero$. Assume that $\Phi$ is strictly positive, so that for any $\rho > \zero$, we have that $\Phi(\rho) > \zero$. It follows that $\tr[\rho \, \Phi^*(A^*A)] = \tr[\Phi(\rho) A^*A] = 0 \iff A = \zero$.  As such,   $\Phi^*(A^*A) = \zero \iff A = \zero$.  
Now we shall show (c) $\implies$ (a).  Since $\Phi^*$ is a positive map, $\Phi^*(A^*A) \geqslant \zero$. Assume that $\Phi^*(A^*A) = \zero \iff A = \zero$.  For any strictly positive $\rho$ on $\h$ it holds that $\tr[\Phi(\rho) A^*A] = \tr[\rho \Phi^*(A^*A)] = 0 \iff A = \zero$, which implies that $\Phi(\rho) >\zero$, and so $\Phi$ is strictly positive.  

$(ii): $ By \eq{eq:operator-scaling}, we observe that $\Phi^*_{C_2^*, C_1^*} = [\Phi_{C_1, C_2}]^*$ and $[\Phi^*_{C_2^*, C_1^*}]^* = \Phi_{C_1, C_2}$, and so by \eq{eq:DS-scaling} it holds that $DS(\Phi^*_{C_2^*, C_1^*}) = DS(\Phi_{C_1, C_2})$. The statement follows trivially from \lemref{lemma:iff-rank-non-dec}.
\end{proof}

 Note that while the dual of a rank non-decreasing operation is rank non-decreasing, the dual of a strictly positive operation need not be strictly positive. This has the following implication:

\

\begin{corollary}
\label{app-cor:strict-positive-for-any-effect}
    For any effect $E \ne \zero$, there exists an $E$-compatible strictly positive operation. On the other hand, an $E$-compatible operation is rank non-decreasing only if $E > \zero$. But if $E > \zero$, then there exists an $E$-compatible rank non-decreasing operation. 
\unskip\nobreak\hfill $\square$ \end{corollary}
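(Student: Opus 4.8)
The plan is to treat the three assertions separately, each by an explicit construction or a short structural argument resting on \lemref{lemma:equivalent-dual-positive-rank-non-dec}. For the first assertion, I would exhibit the measure-and-prepare operation $\Phi(\bigcdot) \coloneq \tr[E\,\bigcdot]\,\omega$, where $\omega > \zero$ is any fixed strictly positive state. This map is manifestly CP and trace non-increasing, since $\tr[\Phi(\rho)] = \tr[E\rho] \leqslant 1$; computing its dual gives $\Phi^*(A) = \tr[\omega A]\,E$, so $\Phi^*(\one) = E$ and $\Phi$ is $E$-compatible. To see that $\Phi$ is strictly positive I would invoke part (i) of \lemref{lemma:equivalent-dual-positive-rank-non-dec}: it is enough to produce a single $B \geqslant \zero$ with $\Phi(B) > \zero$, and $B = \one$ does the job because $\Phi(\one) = \tr[E]\,\omega$ with $\tr[E] > 0$, the latter holding precisely because $E \geqslant \zero$ and $E \ne \zero$. (This also shows why $E \ne \zero$ is indispensable: if $E = \zero$ then every $E$-compatible operation is the zero map, which is not strictly positive.)

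For the ``only if'' part of the second assertion, suppose $\Phi$ is an $E$-compatible operation acting in $\h$ that is rank non-decreasing. By part (ii) of \lemref{lemma:equivalent-dual-positive-rank-non-dec}, $\Phi^*$ is then rank non-decreasing as well, and applying it to the full-rank operator $\one \geqslant \zero$ gives $\rank{E} = \rank{\Phi^*(\one)} \geqslant \rank{\one} = \dim(\h)$. Since any operator on $\h$ has rank at most $\dim(\h)$, this forces $\rank{E} = \dim(\h)$, i.e.\ $E > \zero$. For the converse, assume $E > \zero$ and take the L\"uders operation $\Phi^L(\bigcdot) \coloneq \sqrt{E}\,\bigcdot\,\sqrt{E}$, which is CP (single Kraus operator) and trace non-increasing since $\tr[\Phi^L(\rho)] = \tr[E\rho] \leqslant 1$; being self-dual it satisfies $(\Phi^L)^*(\one) = \sqrt{E}\,\one\,\sqrt{E} = E$, so it is $E$-compatible. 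As $E > \zero$ makes $\sqrt{E}$ invertible, one has $\rank{\sqrt{E}\,A\,\sqrt{E}} = \rank{A}$ for every $A \geqslant \zero$, so $\Phi^L$ is rank-preserving and, \emph{a fortiori}, rank non-decreasing.

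There is no real obstacle here: each of the three parts collapses to verifying that a natural candidate is a bona fide operation (CP and trace non-increasing) with the correct compatible effect $\Phi^*(\one) = E$, after which strict positivity follows from the image criterion in \lemref{lemma:equivalent-dual-positive-rank-non-dec}(i) and the rank statements from \lemref{lemma:equivalent-dual-positive-rank-non-dec}(ii). If anything needs care, it is merely the bookkeeping distinguishing an operation from a channel and checking the value of the dual at the identity in each construction.
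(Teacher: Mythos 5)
Your proof is correct and follows essentially the same route as the paper: the same measure-and-prepare operation $\tr[E\,\bigcdot\,]\,\omega$ with $\omega>\zero$ witnesses the first claim, and the ``only if'' direction likewise goes through the duality statement of \lemref{lemma:equivalent-dual-positive-rank-non-dec}(ii) applied to $\Phi^*(\one)=E$. The only (immaterial) difference is in the third claim, where you exhibit the L\"uders operation $\sqrt{E}\,\bigcdot\,\sqrt{E}$ as the rank non-decreasing witness, whereas the paper reuses $\tr[E\,\bigcdot\,]\,\sigma$ with both $E$ and $\sigma$ strictly positive; both verifications are one line, and the L\"uders example is in any case invoked elsewhere in the paper for the same purpose.
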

\begin{proof}
  For any effect $E \ne \zero$, the operation $\Phi(\bigcdot) \coloneq \tr[E\,  \bigcdot] \,\sigma$ is compatible with $E$, and is strictly positive when $\sigma > \zero$.   Note that $\Phi^*(\bigcdot) = \tr[\sigma \, \bigcdot] \, E$, which is not strictly positive unless $E$ is. On the other hand, $\Phi$ is rank non-decreasing if and only if $\Phi^*$ is. As such, if $\Phi$ is rank non-decreasing, then $E = \Phi^*(\one) > \zero$. Indeed, we see that if both $E$ and $\sigma$ are strictly positive, then both   $\Phi(\bigcdot) \coloneq \tr[E\,  \bigcdot] \,\sigma$ and $\Phi^*(\bigcdot) = \tr[\sigma \, \bigcdot] \, E$ are rank non-decreasing. 
\end{proof}
\begin{remark}
    Note that while every rank non-decreasing operation must be compatible with a strictly positive effect, every strictly positive effect  admits an operation that is not strictly positive, let alone rank non-decreasing; consider $\Phi(\bigcdot) = \tr[E\, \bigcdot ]\, \sigma$, which is compatible with $E > \zero$ but is strictly positive (and rank non-decreasing) only if $\sigma > \zero$.
\unskip\nobreak\hfill $\square$ \end{remark}

It is trivial that if $\Phi$ is a bistochastic channel acting in $\h$, then the extension $\Phi \otimes \idch$ where $\idch$ is the  identity channel acting in some space $\rr$ is also bistochastic. Similarly, if  $\Phi : \lo(\h) \to \lo(\kk)$ is a strictly positive CP map, then the extension $\Phi \otimes \idch$ is strictly positive,  which follows from the fact that $\Phi \otimes \idch\left(\one\sub{\h}\otimes\one\sub{\rr}\right)=\Phi(\one\sub{\h})\otimes\one\sub{\rr} > \zero$ if $\Phi(\one\sub{\h})> \zero$.  In other words, bistochastic channels are completely bistochastic, and strictly positive CP maps are completely strictly positive.  Now we shall show that the same property holds for rank non-decreasing operations.

\

\begin{prop}\label{prop:complete-rank-non-dec}
    Let  $\Phi$ be a rank non-decreasing CP map acting in a finite dimensional system $\h$. For any finite dimensional system $\rr$,  and $\idch$ the identity channel acting in  $\rr$,   $\Phi \otimes \idch$ is a rank non-decreasing CP map acting in $\h \otimes \rr$. 
\unskip\nobreak\hfill $\square$ \end{prop}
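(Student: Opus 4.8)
The plan is to reduce the statement to the Gurvits characterisation recorded in \lemref{lemma:iff-rank-non-dec}: to show that $\Phi \otimes \idch$ is rank non-decreasing on $\h \otimes \rr$ it suffices, for every $\epsilon > 0$, to \emph{exhibit} scaling operators $D_1, D_2 \in \lo(\h \otimes \rr)$ making the quantity $DS$ of \eq{eq:DS-scaling} at most $\epsilon^2$. Since \lemref{lemma:iff-rank-non-dec} places no restriction on the scaling operators, the natural thing to do is to take them of product form, $D_i = C_i \otimes \one\sub{\rr}$, where $C_1, C_2 \in \lo(\h)$ are scaling operators that already work for $\Phi$ itself (which exist because $\Phi$ is rank non-decreasing). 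We are thus using only the easy direction of the criterion.

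The key step is a short computation showing that product scalings behave multiplicatively for $DS$. Using \eq{eq:operator-scaling} together with $\idch(C_2 \one\sub{\rr} C_2^*) = \one\sub{\rr}$ and $(\Phi \otimes \idch)(X \otimes \one\sub{\rr}) = \Phi(X) \otimes \one\sub{\rr}$, one finds
\[ (\Phi \otimes \idch)_{C_1 \otimes \one\sub{\rr},\, C_2 \otimes \one\sub{\rr}}(\one\sub{\h \otimes \rr}) = \Phi_{C_1,C_2}(\one\sub{\h}) \otimes \one\sub{\rr} , \]
and, since $(\Phi \otimes \idch)^* = \Phi^* \otimes \idch$ and by the formula for the dual scaling in \eq{eq:operator-scaling},
\[ \big[(\Phi \otimes \idch)_{C_1 \otimes \one\sub{\rr},\, C_2 \otimes \one\sub{\rr}}\big]^*(\one\sub{\h \otimes \rr}) = [\Phi_{C_1,C_2}]^*(\one\sub{\h}) \otimes \one\sub{\rr} . \]
Subtracting the identity and using the factorisation $\tr[(X \otimes \one\sub{\rr})^2] = \dim(\rr)\,\tr[X^2]$ for the two terms in \eq{eq:DS-scaling} gives
\[ DS\big((\Phi \otimes \idch)_{C_1 \otimes \one\sub{\rr},\, C_2 \otimes \one\sub{\rr}}\big) = \dim(\rr)\, DS(\Phi_{C_1,C_2}) . \]

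The argument then closes immediately: given $\epsilon > 0$, apply \lemref{lemma:iff-rank-non-dec} to $\Phi$ with tolerance $\epsilon / \sqrt{\dim(\rr)}$ to obtain $C_1, C_2 \in \lo(\h)$ with $DS(\Phi_{C_1,C_2}) \leqslant \epsilon^2 / \dim(\rr)$; the displayed identity then yields $DS\big((\Phi \otimes \idch)_{C_1 \otimes \one\sub{\rr},\, C_2 \otimes \one\sub{\rr}}\big) \leqslant \epsilon^2$, and since $\epsilon$ was arbitrary, the ``if'' direction of \lemref{lemma:iff-rank-non-dec} applied in $\h \otimes \rr$ shows $\Phi \otimes \idch$ is rank non-decreasing. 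There is no genuine obstacle here; the only points requiring a little care are that \lemref{lemma:iff-rank-non-dec} does not demand the scaling operators be invertible or otherwise special (so restricting to product scalings is legitimate), and that $\dim(\rr) < \infty$ by hypothesis, so the rescaling of $\epsilon$ is harmless.
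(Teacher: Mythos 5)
Your argument is correct and coincides with the paper's own proof: both use \lemref{lemma:iff-rank-non-dec} with product scalings $D_i = C_i \otimes \one\sub{\rr}$, derive the identity $DS\big((\Phi \otimes \idch)_{D_1, D_2}\big) = \dim(\rr)\, DS(\Phi_{C_1,C_2})$, and then choose $C_1, C_2$ with tolerance $\epsilon^2/\dim(\rr)$. No gaps.
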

\begin{proof}
By \lemref{lemma:iff-rank-non-dec},  $\Phi \otimes \idch$ is rank non-decreasing if and only if for all $\epsilon > 0$, there exists $D_1, D_2 \in \lo(\h \otimes \rr)$ such that $DS((\Phi \otimes \idch)_{D_1, D_2}) \leq \epsilon^2$, with the operator scaling $(\Phi \otimes \idch)_{D_1, D_2}$ defined as \eq{eq:operator-scaling}. Let us define 
\begin{align*}
    &D_1 \coloneq C_1  \otimes \one\sub{\rr} \, , &D_2 \coloneq C_2 \otimes \one\sub{\rr} \, .
\end{align*}
We observe that  
\begin{align*}
(\Phi \otimes \idch)_{D_1, D_2}(\one\sub{\h} \otimes \one\sub{\rr}) & =   \Phi_{C_1,C_2}(\one\sub{\h}) \otimes \one\sub{\rr} \, ,
\end{align*}
and similarly 
\begin{align*}
[(\Phi \otimes \idch)_{D_1, D_2}]^*(\one\sub{\h} \otimes \one\sub{\rr}) & =   [\Phi_{C_1,C_2}]^*(\one\sub{\h}) \otimes \one\sub{\rr}\, .
\end{align*}
It is easy to verify that 
\begin{align*}
DS\left((\Phi \otimes \idch)_{D_1, D_2}\right) & = \tr[\left((\Phi \otimes \idch)_{D_1, D_2}(\one\sub{\h} \otimes \one\sub{\rr}) - \one\sub{\h} \otimes \one\sub{\rr}\right)^2] + \tr[\left([(\Phi \otimes \idch)_{D_1, D_2}]^*(\one\sub{\h} \otimes \one\sub{\rr}) - \one\sub{\h} \otimes \one\sub{\rr}\right)^2] \\
& = \tr[\left(\Phi_{C_1,C_2}(\one\sub{\h}) - \one\sub{\h}\right)^2 \otimes \one\sub{\rr}] + \tr[( [\Phi_{C_1,C_2}]^*(\one\sub{\h}) - \one\sub{\h})^2 \otimes \one\sub{\rr}] \\
& = \dim(\rr)\, \tr[(\Phi_{C_1,C_2}(\one\sub{\h}) - \one\sub{\h})^2 ] + \dim(\rr)\,\tr[( [\Phi_{C_1,C_2}]^*(\one\sub{\h}) - \one\sub{\h}))^2] \\
& =  \dim(\rr)\, DS(\Phi_{C_1,C_2}) .
\end{align*}
By assumption, $\Phi$ is rank non-decreasing. Therefore, by \lemref{lemma:iff-rank-non-dec}, for every $\epsilon >0$ we may choose  $C_1, C_2 \in \lo(\h)$ so that  $DS(\Phi_{C_1,C_2}) \leqslant \epsilon^2 / \dim(\rr)$. In such a case, we have $DS((\Phi \otimes \idch)_{D_1, D_2}) \leqslant \epsilon^2$, and so $\Phi \otimes \idch$ is rank non-decreasing. 
\end{proof}

\

\begin{lemma}\label{lemma:conditional-projection-rank-non-dec}
    Let $\{\Phi^{(i)}\}$ be rank non-decreasing CP maps acting in $\h$. Then 
    \begin{align*}
        \Lambda(  \bigcdot\sub{\h} \otimes \bigcdot\sub{\rr}) \coloneq \sum_i   \Phi^{(i)} (\bigcdot\sub{\h}) \otimes P_i \bigcdot\sub{\rr} P_i \, ,
    \end{align*}
where $\{P_i\}$ are rank-1 orthocomplete projections on $\rr$, is a rank non-decreasing CP map acting in $\h \otimes \rr$.     
\unskip\nobreak\hfill $\square$ \end{lemma}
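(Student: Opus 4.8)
The plan is to apply \lemref{lemma:iff-rank-non-dec} to $\Lambda$ directly, guided by the structure of the proof of \propref{prop:complete-rank-non-dec}: given $\epsilon > 0$, I need to exhibit operators $D_1, D_2 \in \lo(\h \otimes \rr)$ whose operator scaling drives $DS(\Lambda_{D_1, D_2})$ below $\epsilon^2$. The natural guess is to take $D_1$ and $D_2$ to be \emph{block-diagonal} with respect to the decomposition $\rr = \bigoplus_i P_i \rr$; that is, $D_j \coloneq \sum_i C_1^{(i)} \otimes P_i$ (for $j=1$) and $D_j \coloneq \sum_i C_2^{(i)} \otimes P_i$ (for $j=2$), where $C_1^{(i)}, C_2^{(i)} \in \lo(\h)$ are the scaling operators that \lemref{lemma:iff-rank-non-dec} supplies for each $\Phi^{(i)}$ (since each $\Phi^{(i)}$ is rank non-decreasing by hypothesis). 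Because the $P_i$ are rank-1, orthogonal, and sum to $\one\sub{\rr}$, the conjugation $D_2 (\,\bigcdot\,) D_2^*$ acts blockwise and does not mix different $i$-sectors, and the same is true of the post-composition by $D_1 (\,\cdot\,) D_1^*$; I expect this to collapse cleanly.

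The key computational step is then to check that
\begin{align*}
  \Lambda_{D_1,D_2}(\one\sub{\h}\otimes\one\sub{\rr}) &= \sum_i \Phi^{(i)}_{C_1^{(i)},C_2^{(i)}}(\one\sub{\h}) \otimes P_i \, , &
  [\Lambda_{D_1,D_2}]^*(\one\sub{\h}\otimes\one\sub{\rr}) &= \sum_i [\Phi^{(i)}_{C_1^{(i)},C_2^{(i)}}]^*(\one\sub{\h}) \otimes P_i \, ,
\end{align*}
using that $P_i \one\sub{\rr} P_i = P_i$ and $\sum_i P_i = \one\sub{\rr}$, together with the trace-duality definition of the dual. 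Subtracting $\one\sub{\h}\otimes\one\sub{\rr} = \sum_i \one\sub{\h}\otimes P_i$, squaring, and taking the trace, the orthogonality of the $P_i$ and $\tr[P_i] = 1$ give $DS(\Lambda_{D_1,D_2}) = \sum_i DS\big(\Phi^{(i)}_{C_1^{(i)},C_2^{(i)}}\big)$. Choosing, for each $i$, the scaling operators so that $DS\big(\Phi^{(i)}_{C_1^{(i)},C_2^{(i)}}\big) \leqslant \epsilon^2 / \dim(\rr)$ — possible since $\Phi^{(i)}$ is rank non-decreasing — yields $DS(\Lambda_{D_1,D_2}) \leqslant \epsilon^2$, and \lemref{lemma:iff-rank-non-dec} then gives the claim. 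Complete positivity of $\Lambda$ is immediate, being a sum of tensor products of CP maps ($\bigcdot\sub{\rr} \mapsto P_i \bigcdot\sub{\rr} P_i$ is CP).

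The main obstacle is purely bookkeeping: verifying that the $(D_1,D_2)$-scaling of $\Lambda$ really does block-decompose as claimed, i.e.\ that cross-terms $C_a^{(i)} \otimes P_i \cdot (\cdot) \cdot C_a^{(j)*} \otimes P_j$ with $i \neq j$ vanish. This hinges on $P_i P_j = \delta_{ij} P_i$ and on the fact that $\Lambda$ already carries the matching projectors $P_i$ on the $\rr$-factor, so the only subtlety is making sure the conjugation by $D_2$ happens \emph{before} $\Lambda$ is applied and the conjugation by $D_1$ \emph{after}, and that the $\rr$-factors line up at each stage. Once the block structure is in place, the rest is a one-line reduction to \propref{prop:complete-rank-non-dec}-style arithmetic. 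I would also note in passing that the rank-1 hypothesis on the $P_i$ is used only to get $\tr[P_i]=1$ (hence the clean factor-counting); for higher-rank $P_i$ one would simply pick up weights $\dim(P_i\rr)$, and the argument still goes through, so the statement as given is the essential case.
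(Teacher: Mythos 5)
Your proposal is correct and follows essentially the same route as the paper's proof: the same block-diagonal choice $D_j = \sum_i C_j^{(i)} \otimes P_i$, the same reduction $DS(\Lambda_{D_1,D_2}) = \sum_i DS\bigl(\Phi^{(i)}_{C_1^{(i)},C_2^{(i)}}\bigr)$ via orthogonality of the $P_i$, and the same $\epsilon^2/\dim(\rr)$ budget per block before invoking \lemref{lemma:iff-rank-non-dec}. Your closing remark that rank-1 is inessential (higher-rank $P_i$ just introduce weights $\tr[P_i]$) is also accurate, since the $DS$ functional only probes the image of the identity.
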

\begin{proof}
Let $\{C_1^{(i)}\}$ and $\{C_2^{(i)}\}$ be operators on $\h$, and define  
   \begin{align*}
       &D_1 = \sum_i  C_1^{(i)} \otimes P_i \, , &D_2 = \sum_i C_2^{(i)} \otimes P_i \, .
   \end{align*}
We observe that 
\begin{align*}
    \Lambda_{D_1, D_2} (\one\sub{\h} \otimes \one\sub{\rr}) &= \sum_i \Phi^{(i)}_{C_1^{(i)}, C_2^{(i)}}(\one\sub{\h}) \otimes  P_i \, , \\
    [\Lambda_{D_1, D_2}]^* (\one\sub{\h} \otimes \one\sub{\rr}) &= \sum_i  [\Phi^{(i)}_{C_1^{(i)}, C_2^{(i)}}]^*(\one\sub{\h}) \otimes P_i \, .
\end{align*}
Now note that $(\sum_i A^{(i)} \otimes P_i - \one\sub{\h} \otimes \one\sub{\rr})^2 = \sum_i   (A^{(i)} - \one\sub{\h})^2 \otimes P_i$. As such, 
\begin{align*}
    DS(\Lambda_{D_1,D_2}) = \sum_i DS\left(\Phi^{(i)}_{C_1^{(i)}, C_2^{(i)}}\right). 
\end{align*}
Since $\{\Phi^{(i)}\}$ are rank non-decreasing, by \lemref{lemma:iff-rank-non-dec} for every $\epsilon >0$ we may choose the operators $\{C_1^{(i)}\}$ and $\{C_2^{(i)}\}$  so that  $DS(\Phi^{(i)}_{C_1^{(i)}, C_2^{(i)}}) \leqslant \epsilon^2/ \dim(\rr)$ for all $i$. In such a case, we have that $DS(\Lambda_{D_1,D_2}) \leqslant \epsilon^2$. As such, $\Lambda$ is rank non-decreasing. 
\end{proof}


\section{Geometric properties}\label{app:convexity}

Here, we wish to show that for any thermodynamic constraint  $C \in \{\I, \II, \III\}$ as per  \defref{defn:thermo-consistent-operation}, the set of operations $\opset_C(\hs)$ is convex. 

\

\begin{prop}
    Consider the set of  operations, $\opset_C(\hs)$,  for the thermodynamic constraint $C \in \{\I, \II, \III\}$ as per  \defref{defn:thermo-consistent-operation}. These sets are convex. That is, for any  pair of operations $\Phi_i \in \opset_C(\hs)$, $i=1,2$, and any $0 \leqslant \lambda \leqslant 1$, there exists a process obeying the constraint $C$ which realises the operation $\Phi(\bigcdot) \coloneq \lambda \,  \Phi_1 (\bigcdot) + (1-\lambda) \Phi_2(\bigcdot)$.
\unskip\nobreak\hfill $\square$ \end{prop}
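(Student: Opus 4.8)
The plan is to realise the convex combination by a single \emph{controlled} process in which an auxiliary ``flag'' register selects which of the two given processes is executed. The case $\lambda \in \{0,1\}$ is trivial, so assume $0 < \lambda < 1$. By hypothesis there are processes $(\mathcal{H}_i, \xi_i, \ee_i, Z_i)$, $i = 1,2$, realising $\Phi_i$ via \eq{eq:operation-implementation}, with each $\xi_i$ strictly positive and each $\ee_i$ strictly positive, rank non-decreasing, or bistochastic according to whether $C = \I$, $\II$, or $\III$. I would introduce a control qubit $\mathcal{H}_c = \co^2$ with orthonormal basis $\{|1\>,|2\>\}$, take the combined apparatus to be $\ha \coloneq \mathcal{H}_c \otimes \mathcal{H}_1 \otimes \mathcal{H}_2$ prepared in $\xi \coloneq (\lambda\,\proj{1} + (1-\lambda)\,\proj{2}) \otimes \xi_1 \otimes \xi_2$, and take the pointer effect $Z \coloneq \proj{1} \otimes Z_1 \otimes \one_{\mathcal{H}_2} + \proj{2} \otimes \one_{\mathcal{H}_1} \otimes Z_2$. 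Writing $\bar\imath$ for the complement of $i$ in $\{1,2\}$, the interaction $\ee$ acting in $\hs \otimes \ha$ is taken to be the channel with Kraus operators $K^{(i)}_k \otimes \proj{i}$ (identity on the remaining factors), where $\{K^{(i)}_k\}$ are Kraus operators of $\ee_i$; equivalently, for $X \in \lo(\hs \otimes \mathcal{H}_1 \otimes \mathcal{H}_2)$ and $Y \in \lo(\mathcal{H}_c)$,
\begin{align*}
  \ee(X \otimes Y) = \sum_{i=1,2} \big(\ee_i \otimes \idch_{\mathcal{H}_{\bar\imath}}\big)(X) \otimes \proj{i}\, Y\, \proj{i} \, ,
\end{align*}
so that $\ee$ dephases the control and then, conditioned on its value $i$, applies $\ee_i$ to $\hs \otimes \mathcal{H}_i$.

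The first and routine step is to check that $(\ha, \xi, \ee, Z)$ realises $\Phi = \lambda\Phi_1 + (1-\lambda)\Phi_2$: because the control part of $\xi$ is diagonal, feeding in $\sigma \otimes \xi$ produces, branch by branch, $\lambda\, \ee_1(\sigma \otimes \xi_1) \otimes \proj{1} \otimes \xi_2$ and $(1-\lambda)\, \ee_2(\sigma \otimes \xi_2) \otimes \proj{2} \otimes \xi_1$; multiplying by $\one_{\hs} \otimes Z$, tracing over $\ha$, and using $\tr[\xi_i]=1$ collapses each branch to $\Phi_i(\sigma)$ via \eq{eq:operation-implementation}. One checks directly that $\zero \leqslant Z \leqslant \one$ and that $\ee$ is trace preserving, both immediate from the $Z_i$ being effects and the $\ee_i$ channels.

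It then remains to verify that the constructed process obeys the constraint $C$. Strict positivity of $\xi$ is clear, since for $0 < \lambda < 1$ the control block $\lambda\proj{1} + (1-\lambda)\proj{2}$ is strictly positive on $\co^2$ and a tensor product of strictly positive states is strictly positive. The substance is that $\ee$ inherits the relevant property of the $\ee_i$. Evaluating $\ee(\one_{\hs} \otimes \one_{\ha}) = \sum_i (\ee_i \otimes \idch)(\one) \otimes \proj{i}$ disposes of two cases at once: if each $\ee_i$ is bistochastic then $(\ee_i \otimes \idch)(\one) = \one$, so $\ee(\one) = \one$ and $\ee$ is bistochastic ($C = \III$); if each $\ee_i$ is strictly positive then $(\ee_i \otimes \idch)(\one) > \zero$ (strict positivity is preserved under tensoring with the identity), so $\ee(\one)$ is a direct sum of positive-definite blocks over the control and hence strictly positive, whence $\ee$ is strictly positive by \lemref{lemma:equivalent-dual-positive-rank-non-dec}(i) ($C = \I$). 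The case $C = \II$ is the crux, and I expect it to be the main obstacle: here one observes that $\ee$ is exactly of the form of the map $\Lambda$ in \lemref{lemma:conditional-projection-rank-non-dec}, with the rank-$1$ orthocomplete projections taken to be $\{\proj{i}\}$ on the control register and with $\Phi^{(i)} \coloneq \ee_i \otimes \idch_{\mathcal{H}_{\bar\imath}}$; since $\ee_i$ is rank non-decreasing, \propref{prop:complete-rank-non-dec} gives that each $\Phi^{(i)}$ is rank non-decreasing, and then \lemref{lemma:conditional-projection-rank-non-dec} yields that $\ee$ is rank non-decreasing. It is precisely to license this last step that the complete-rank-non-decreasingness results of \app{app:completely-rank-non-decreasing} were developed; everything else is bookkeeping.
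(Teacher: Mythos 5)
Your proposal is correct and follows essentially the same route as the paper's own proof: a classical control register prepared in $\lambda\proj{1}+(1-\lambda)\proj{2}$, a conditioned interaction of the form $\sum_i(\ee_i\otimes\idch)(\bigcdot)\otimes P_i\bigcdot P_i$, a block pointer effect, and the same invocation of \propref{prop:complete-rank-non-dec} together with \lemref{lemma:conditional-projection-rank-non-dec} to handle the rank non-decreasing case. The only differences are cosmetic (ordering of tensor factors and the explicit Kraus-operator description).
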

\begin{proof}
 Note that the case of  $\lambda =0 $ or $\lambda = 1$ is trivial, so we shall consider only $0 < \lambda < 1$. Let $(\h\sub{\aa_i}, \xi_i, \ee_i, Z_i)$, $i = 1,2$,  be a process, under constraint $C$, that   implements $\Phi_i$ as in \eq{eq:operation-implementation}. Now consider the process $(\ha, \xi, \ee, Z)$. Let us choose $\ha = \h\sub{\aa_1} \otimes \h\sub{\aa_2} \otimes \rr$,   with $\rr = \co^2$, which has an orthonormal basis $\{|1\>, |2\>\}$. Denote $P_i \equiv \proj{i}$. Prepare this system in state $\sigma = \lambda P_1 + (1- \lambda) P_2$, which is strictly positive, so that   $\xi = \xi_1 \otimes \xi_2 \otimes \sigma$ is strictly positive.

Choose the channel $\ee$ acting in $\hs  \otimes \h\sub{\aa_1} \otimes \h\sub{\aa_2} \otimes \rr $ as
\begin{align*}
    \ee( \bigcdot\sub{\s+ \aa_1 + \aa_2 } \otimes \bigcdot\sub{\rr}  ) =    \ee_1 \otimes \idch\sub{\aa_2} (\bigcdot\sub{\s+ \aa_1 + \aa_2}) \otimes P_1 \bigcdot\sub{\rr} P_1 + \ee_2 \otimes \idch\sub{\aa_1} (\bigcdot\sub{\s+ \aa_1 + \aa_2}) \otimes P_2 \bigcdot\sub{\rr} P_2 \, . 
\end{align*}
Note that 
\begin{align*}
    \ee(\onesys \otimes \one\sub{\aa_1} \otimes \one\sub{\aa_2} \otimes \one\sub{\rr}) = \ee_1(\onesys \otimes \one\sub{\aa_1}) \otimes \one\sub{\aa_2} \otimes P_1 +   \ee_2(\onesys \otimes \one\sub{\aa_2}) \otimes \one\sub{\aa_1} \otimes P_2\, .
\end{align*}
It is simple to verify that if $\ee_i$ are bistochastic, then $\ee$ is bistochastic, whereas if $\ee_i$ are strictly positive, then $\ee$ is strictly positive. Finally,  if $\ee_i$ are rank non-decreasing, then by \propref{prop:complete-rank-non-dec} and  \lemref{lemma:conditional-projection-rank-non-dec}, $\ee$ is rank non-decreasing. It follows that $(\ha, \xi, \ee, Z)$ is subject to the thermodynamic constraint $C$. 

Now choose the effect $Z =  Z_1 \otimes \one\sub{\aa_2} \otimes P_1 +  \one\sub{\aa_1} \otimes Z_2 \otimes P_2$. The process therefore  implements an operation $\Phi$ through
\begin{align*}
  \Phi(\bigcdot) &\coloneq   \tra [(\onesys \otimes Z)\ \ee(\,\bigcdot\, \otimes \xi)]  \\
    & =  \lambda \tra [(\onesys \otimes Z)\  \ee_1(\,\bigcdot\, \otimes  \xi_1) \otimes \xi_2 \otimes P_1] + (1- \lambda) \tra [(\onesys \otimes Z)\  \ee_2 (\,\bigcdot\, \otimes  \xi_2 ) \otimes \xi_1 \otimes P_2] \\
    & =   \lambda \tr\sub{\aa_1} [(\onesys \otimes Z_1)\ \ee_1 (  \,\bigcdot\, \otimes \xi_1)] + (1-\lambda )\tr\sub{\aa_2} [(\onesys \otimes Z_2)\ \ee_2 (  \,\bigcdot\, \otimes \xi_2)] \\
    & = \lambda \, \Phi_1(\,\bigcdot\,) + (1-\lambda) \Phi_2(\,\bigcdot\,) \, .
\end{align*}
Therefore, $\opset_C(\hs)$ is convex which concludes the proof.
\end{proof}

Now, we shall show that for any $C \in \{\I, \II, \III\}$, $\opset_C(\hs)$ is closed under composition.

\

\begin{prop}
Consider the set of  operations $\opset_C(\hs)$ for  the thermodynamic constraint $C \in \{\I, \II, \III\}$ as per  \defref{defn:thermo-consistent-operation}. These sets are closed under composition. That is, for any pair of  operations $\Phi_i \in \opset_C(\hs)$, $i = 1,2$,  there exists a process $(\ha, \xi, \ee, Z)$ obeying the constraint $C$ which realises the operation $\Phi(\,\bigcdot\,) \coloneq \Phi_2 \circ \Phi_1 (\,\bigcdot\,)$.    
\unskip\nobreak\hfill $\square$ \end{prop}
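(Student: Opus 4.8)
The plan is to realise $\Phi_2\circ\Phi_1$ by a single process that runs the two given processes \emph{in sequence} on a joint apparatus, encoding the ``both succeeded'' event as a product pointer effect. Concretely, let $(\h\sub{\aa_i},\xi_i,\ee_i,Z_i)$, $i=1,2$, be processes obeying the constraint $C$ that implement $\Phi_1$ and $\Phi_2$ as in \eq{eq:operation-implementation}. I would take $\ha\coloneq\h\sub{\aa_1}\otimes\h\sub{\aa_2}$, prepare it in $\xi\coloneq\xi_1\otimes\xi_2$, choose the pointer effect $Z\coloneq Z_1\otimes Z_2$, and define the interaction channel
\[
  \ee\coloneq(\ee_2\otimes\idch\sub{\aa_1})\circ(\ee_1\otimes\idch\sub{\aa_2})\,,
\]
where $\ee_1$ is understood to act on $\hs\otimes\h\sub{\aa_1}$ (trivially on $\h\sub{\aa_2}$) and $\ee_2$ on $\hs\otimes\h\sub{\aa_2}$ (trivially on $\h\sub{\aa_1}$). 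The physical picture is that the system first interacts with the first apparatus and then with the second; post-selecting the first apparatus by $Z_1$ and the second by $Z_2$ should leave exactly $\Phi_2\circ\Phi_1$ on the system.

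The one genuinely computational step is verifying \eq{eq:operation-implementation} for this process. I would expand $\ee(\rho\otimes\xi_1\otimes\xi_2)$: the first channel yields $\ee_1(\rho\otimes\xi_1)\otimes\xi_2$; writing $\ee_1(\rho\otimes\xi_1)$ in a product basis of $\hs\otimes\h\sub{\aa_1}$ and letting $\ee_2$ act only on the $\hs\otimes\h\sub{\aa_2}$ factors, the partial trace over $\h\sub{\aa_1}$ against $\onesys\otimes Z_1$ collapses the first apparatus and, by definition of $\Phi_1$, leaves $\ee_2(\Phi_1(\rho)\otimes\xi_2)$ on $\hs\otimes\h\sub{\aa_2}$; the remaining trace over $\h\sub{\aa_2}$ against $\onesys\otimes Z_2$ then gives $\Phi_2(\Phi_1(\rho))$ by definition of $\Phi_2$. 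This is routine linear-algebra bookkeeping once the tensor factors are tracked carefully, and I would not grind through it in detail.

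It then remains to check that $(\ha,\xi,\ee,Z)$ actually obeys the constraint $C$. The apparatus state $\xi=\xi_1\otimes\xi_2$ is strictly positive because a tensor product of strictly positive states is strictly positive. For the interaction, $\ee$ is a composition of the two maps $\ee_i\otimes\idch$, so it suffices that each of the three classes (strictly positive, rank non-decreasing, bistochastic) is (i) preserved under tensoring with an identity channel and (ii) closed under composition. Fact (i) is immediate for strictly positive and bistochastic maps and is precisely \propref{prop:complete-rank-non-dec} for rank non-decreasing maps, while fact (ii) is recorded in \app{app:completely-rank-non-decreasing} for all three classes. Hence $\ee$ lies in the appropriate class whenever $\ee_1$ and $\ee_2$ do, and the process realises $\Phi_2\circ\Phi_1$ under the constraint $C$. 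I expect the ``hard part'' here to be purely notational---keeping straight which subsystem each channel and each effect acts on in the sequential composition---rather than conceptual, since all the structural ingredients (closure under composition, complete rank non-decrease) are already in hand.
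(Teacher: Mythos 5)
Your construction is exactly the paper's: the same joint apparatus $\h\sub{\aa_1}\otimes\h\sub{\aa_2}$ with $\xi_1\otimes\xi_2$, the same sequential interaction $(\ee_2\otimes\idch\sub{\aa_1})\circ(\ee_1\otimes\idch\sub{\aa_2})$, the same product pointer effect $Z_1\otimes Z_2$, and the same appeal to closure of each channel class under composition and under tensoring with the identity (via \propref{prop:complete-rank-non-dec} for the rank non-decreasing case). The proposal is correct and essentially identical to the paper's proof.
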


\begin{proof}

Let $(\h\sub{\aa_i}, \xi_i, \ee_i, Z_i)$, $i = 1,2$,  be a process, under constraint $C$, that   implements $\Phi_i$ as in \eq{eq:operation-implementation}. Now consider the process $(\ha, \xi, \ee, Z)$. Choose $\ha = \h\sub{\aa_1} \otimes \h\sub{\aa_2}$, $\xi = \xi_1 \otimes \xi_2$, and $\ee = (\idch\sub{\aa_1} \otimes\, \ee_2 )\circ(\ee_1 \otimes \idch\sub{\aa_2})$. Clearly, $\xi$ is strictly positive. On the other hand, if $\ee_i$ are strictly positive, rank non-decreasing, or bistochastic, then so too is $\ee$. The process $(\ha, \xi, \ee, Z)$ is therefore consistent with constraint $C$. Finally, choosing  $Z = Z_1 \otimes Z_2$, we get
\begin{align*}
\Phi(\,\bigcdot\,) & = \tra[ (\onesys \otimes Z)\ \ee(\,\bigcdot\, \otimes Z)] \\
  &= \tr\sub{\aa_1+\aa_2}\left[\onesys \otimes Z_1 \otimes Z_2\ \big(\idch\sub{\aa_1} \otimes \ee_2 \big)\circ \big(\ee_1 \otimes \idch\sub{\aa_2} \big) \big(\,\bigcdot\, \otimes\, \xi_1 \otimes \xi_2 \big) \right] \\
    &= \tr\sub{\aa_1+\aa_2}\left[\onesys \otimes Z_1 \otimes Z_2\ \big(\idch\sub{\aa_1} \otimes \ee_2 \big) \big(\ee_1 \big(\,\bigcdot\, \otimes\, \xi_1  \big) 
 \otimes \xi_2 \big) \right] \\ 
    & = 
    \tr\sub{\aa_2} \left[\onesys \otimes  Z_2 \  \ee_2 \big(\tr\sub{\aa_1}[\onesys \otimes Z_1\ \ee_1 (\,\bigcdot\, \otimes \xi_1)] \otimes \xi_2 \big) \right] \\
    & = \Phi_2 \circ \Phi_1 (\,\bigcdot\,)\, .
\end{align*}
\end{proof}


\section{Restriction maps, and conjugate channels }
\label{app:restriction-map}

In this section, we shall introduce some concepts and notation that will be frequently employed in the subsequent proofs. Let us introduce the unital CP map $\Gamma^\ee_\xi : \lo(\hs \otimes \ha) \to \lo(\hs)$, defined as

\begin{align}\label{eq:gamma-E}
\Gamma^\ee_\xi := \Gamma_\xi \circ \ee^*\, .    
\end{align}
Here, $\Gamma_\xi : \lo(\hs \otimes \ha) \to \lo(\hs)$ is a unital CP map, referred to as a conditional expectation, or restriction map, with respect to $\xi$, which reads
\begin{align*}
    \Gamma_\xi (\bigcdot) \coloneq \tra[(\onesys \otimes \xi) \,   \,\bigcdot\,]\, . 
\end{align*}

Using \eq{eq:gamma-E}, we may write the dual  operations of the instrument $\ii$ implemented by $(\ha, \xi, \ee, \Z)$, defined in \eq{eq:instrument-implementation},    as well as the dual  of the operations $\Phi$ implemented by $(\ha, \xi, \ee, Z)$, defined in \eq{eq:operation-implementation}, as
\begin{align}
&\ii_x^*(\bigcdot) =  \Gamma_\xi^\ee(\,\bigcdot\, \otimes Z_x) \, ,  &\Phi^*(\bigcdot) = \Gamma_\xi^\ee(\,\bigcdot\, \otimes Z)\,  .
\end{align}

\

Now let us introduce the channel  $\Lambda: \lo(\hs) \to \lo(\ha)$, and its dual $\Lambda^*: \lo(\ha) \to \lo(\hs)$, defined as 
\begin{align}\label{eq:conjugate-channel}
&\Lambda(\bigcdot) \coloneq \trs[\ee(\,\bigcdot\, \otimes \xi)] \, ,  
&\Lambda^*(\bigcdot) \coloneq \Gamma_\xi^\ee(\onesys \otimes\, \bigcdot\,) \, .
\end{align}
$\Lambda$ is referred to as the conjugate channel (also called complementary channel \cite{Holevo2007})  to the $\E$-channel $\ii_\xx(\bigcdot) \equiv  \tra[\ee(\,\bigcdot\, \otimes \xi)]$.  $\Lambda(\rho)$ is the state of the auxiliary system after it has interacted with the system, when the system is initially prepared in the state $\rho$. It is easily verified that for an operation $\Phi(\bigcdot) = \tra[(\onesys \otimes Z)\ \ee(\,\bigcdot\, \otimes \xi)]$ compatible with the effect $E$, it holds that $\tr[E \rho] = \tr[\Phi(\rho)] = \tr[Z \ \Lambda(\rho)]$. Indeed,  $E = \Lambda^*(Z)$. 

\

\begin{lemma}\label{lemma:Gamma-E-positive}
Let  $\xi$ be a strictly positive state on $\ha$ and $\ee$ be a strictly positive channel acting in $\hs \otimes \ha$.  The following hold: 

\begin{enumerate}[(i)]
    \item For all $B \in \lo(\hs \otimes \ha)$, it holds that $\Gamma^\ee_\xi(B^*B) = \zero \iff B = \zero$ where $\Gamma^\ee_\xi$ is defined in \eq{eq:gamma-E}.

    \item    $\Lambda$ defined in \eq{eq:conjugate-channel} is strictly positive, and so  $E = \zero \iff Z = \zero$.
\end{enumerate}
\unskip\nobreak\hfill $\square$ \end{lemma}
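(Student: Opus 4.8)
The plan is to prove the two items of \lemref{lemma:Gamma-E-positive} in sequence, with item (i) doing most of the work and item (ii) following as a short corollary.

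\textbf{Item (i).} The key observation is that $\Gamma^\ee_\xi = \Gamma_\xi \circ \ee^*$, so I would factor the condition $\Gamma^\ee_\xi(B^*B) = \zero$ through the two maps separately. First, I would establish the corresponding statement for the restriction map $\Gamma_\xi$ alone: for $C \in \lo(\hs \otimes \ha)$, $\Gamma_\xi(C^*C) = \tra[(\onesys \otimes \xi) C^*C] = \zero$ iff $C^*C$ (equivalently $C$) vanishes when $\xi > \zero$. This is the standard faithfulness fact, provable by writing $\Gamma_\xi(C^*C) = \trs\big[(\onesys\otimes\sqrt\xi)\,C^*C\,(\onesys\otimes\sqrt\xi)\big]$ after noting $\onesys\otimes\xi = (\onesys\otimes\sqrt\xi)(\onesys\otimes\sqrt\xi)$ and using cyclicity of the partial trace appropriately; then $\Gamma_\xi(C^*C) = \zero$ forces $\tr\big[\Gamma_\xi(C^*C)\big] = \|C(\onesys\otimes\sqrt\xi)\|_2^2 = 0$, hence $C(\onesys\otimes\sqrt\xi) = \zero$, and since $\sqrt\xi$ is invertible this gives $C = \zero$. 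Second, for the map $\ee^*$: since $\ee$ is a strictly positive channel, by \lemref{lemma:equivalent-dual-positive-rank-non-dec}(i), statement (c), we have $\ee^*(B^*B) = \zero \iff B = \zero$. Now I would chain these: apply item (i)-for-$\Gamma_\xi$ with $C^*C \coloneq \ee^*(B^*B)$, which is positive semidefinite (write it as $D^*D$ for some $D$, or just note $\Gamma_\xi$ faithful on all positive operators), to conclude $\Gamma_\xi(\ee^*(B^*B)) = \zero \iff \ee^*(B^*B) = \zero \iff B = \zero$. The reverse implication $B = \zero \Rightarrow \Gamma^\ee_\xi(B^*B) = \zero$ is immediate from linearity and $\Gamma^\ee_\xi(\zero) = \zero$.

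\textbf{Item (ii).} I would show $\Lambda$ is strictly positive by showing its dual satisfies the faithfulness condition (c) of \lemref{lemma:equivalent-dual-positive-rank-non-dec}(i), then invoke that lemma. From \eq{eq:conjugate-channel}, $\Lambda^*(\bigcdot) = \Gamma^\ee_\xi(\onesys \otimes \bigcdot)$. For $A \in \lo(\ha)$, $\Lambda^*(A^*A) = \Gamma^\ee_\xi(\onesys \otimes A^*A) = \Gamma^\ee_\xi\big((\onesys \otimes A)^*(\onesys \otimes A)\big)$, so by item (i) this vanishes iff $\onesys \otimes A = \zero$, i.e.\ iff $A = \zero$. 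Hence $\Lambda$ is strictly positive. The final claim $E = \zero \iff Z = \zero$ then follows from $E = \Lambda^*(Z)$: for a positive operator, $Z$ is an effect so $Z = (\sqrt Z)^*\sqrt Z$, and $E = \Lambda^*(Z) = \Lambda^*\big((\sqrt Z)^*\sqrt Z\big) = \zero$ iff $\sqrt Z = \zero$ iff $Z = \zero$ by the faithfulness condition just established for $\Lambda^*$.

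\textbf{Main obstacle.} The genuinely non-routine step is item (i)'s first half --- the faithfulness of the restriction map $\Gamma_\xi$ on positive operators when $\xi > \zero$. One must be careful that $\tra[(\onesys\otimes\xi)\, P] = \zero$ for $P \geqslant \zero$ really does force $P = \zero$, which relies on $\xi$ being invertible (full rank) and not merely nonzero; the cleanest route is the Hilbert--Schmidt-norm argument sketched above, rewriting the partial trace so that positivity makes the vanishing of the trace equivalent to the vanishing of a squared norm. Everything downstream --- the composition with $\ee^*$, and all of item (ii) --- is then a mechanical application of already-established facts.
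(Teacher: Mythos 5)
Your proof is correct, and for item (i) it takes a slightly different route from the paper's. The paper observes that $\Gamma^\ee_\xi = \Gamma_\xi \circ \ee^*$ is the dual of the channel $\ee \circ \Upsilon_\xi$, where $\Upsilon_\xi : \rho \mapsto \rho \otimes \xi$ is strictly positive whenever $\xi > \zero$; since the composition of strictly positive channels is strictly positive, item (i) then drops out of a single application of the equivalence (a) $\iff$ (c) in \lemref{lemma:equivalent-dual-positive-rank-non-dec}. You instead keep the factorisation $\Gamma_\xi \circ \ee^*$ in the Heisenberg picture and prove faithfulness of each factor separately: the faithfulness of $\ee^*$ on positive operators comes from the same lemma, while the faithfulness of $\Gamma_\xi$ is established by hand via the Hilbert--Schmidt norm identity $\tr\bigl[(\onesys\otimes\sqrt\xi)\,C^*C\,(\onesys\otimes\sqrt\xi)\bigr] = \|C(\onesys\otimes\sqrt\xi)\|_2^2$ and invertibility of $\sqrt\xi$. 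The two arguments are equivalent in substance --- your hands-on computation for $\Gamma_\xi$ is precisely a direct verification that $\Upsilon_\xi^* = \Gamma_\xi$ satisfies condition (c), i.e.\ that $\Upsilon_\xi$ is strictly positive --- so the paper's version is a little more economical, while yours is more self-contained and makes explicit where full rank of $\xi$ is used. Your item (ii) coincides with the paper's, and you usefully spell out the step $E = \Lambda^*(Z) = \zero \iff Z = \zero$ that the paper dismisses as trivial. Two cosmetic points: the partial trace in your display for $\Gamma_\xi$ should be $\tra$, not $\trs$ (the restriction map traces out the apparatus), and when invoking cyclicity of the partial trace you should note that it applies here because the operator being cycled is of the form $\onesys \otimes M$, i.e.\ acts nontrivially only on the traced-out factor.
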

\begin{proof}
$(i)$:  Note that $\Gamma_\xi^\ee \coloneq \Gamma_\xi \circ \ee^*$ is dual to the channel $\ee \circ \Upsilon_\xi$, where $\Upsilon_\xi : \lo(\hs) \to \lo(\hs \otimes \ha), \rho \mapsto \rho \otimes \xi$.  $\Upsilon_\xi$ is a strictly positive channel for   any strictly positive state $\xi$. It follows that the composition $\ee \circ \Upsilon_\xi$ is a strictly positive channel. The statement follows from item (i) of  \lemref{lemma:equivalent-dual-positive-rank-non-dec}.

$(ii)$: By item $(i)$  it holds that $\Lambda^*(A^*A) = \Gamma^\ee_\xi(\onesys \otimes A^*A) = \zero \iff A = \zero$, and so by \lemref{lemma:equivalent-dual-positive-rank-non-dec} it follows that $\Lambda$ is strictly positive. That $E = \zero \iff Z = \zero$ follows trivially from the fact that $E = \Lambda^*(Z)$. 

\end{proof}


\section{The weak third law}\label{app:weak-third-law}

In this section, we  obtain necessary and sufficient conditions for an  operation to be consistent with the weak third law, i.e., operations that admit a process $(\ha, \xi, \ee, Z)$  as per \eq{eq:operation-implementation}, where $\xi$ is a strictly positive state on $\ha$ and $\ee$ is a strictly positive channel acting in $\hs \otimes \ha$. That is, we shall characterise $\opset_\I( \hs)$ as per \defref{defn:thermo-consistent-operation}.

\

\begin{lemma}\label{lemma:measurement-weak-third-law}
Let $E \ne \zero$ be an effect on $\hs$, and  $\Phi$ be an $E$-compatible operation in $\opset_\I(\hs)$, as per \defref{defn:thermo-consistent-operation}. Then 
\begin{enumerate}[(i)]

\item 

$\Phi$ is a strictly positive operation.

\item 

Let  $P$ be the  projection on the support  of $E$. For every  state $\rho$ on $\hs$ such that $P \rho P$  has full rank in $P \hs$, $\Phi(\rho)$ has  full rank in $\hs$. 
    \end{enumerate}
\unskip\nobreak\hfill $\square$ \end{lemma}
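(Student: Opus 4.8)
The plan is to reduce everything to the unital CP map $\Gamma^\ee_\xi = \Gamma_\xi \circ \ee^*$ introduced in \app{app:restriction-map}, together with its ``definiteness'' property from \lemref{lemma:Gamma-E-positive}. Since $\Phi \in \opset_\I(\hs)$ it admits a process $(\ha, \xi, \ee, Z)$ with $\xi > \zero$ and $\ee$ strictly positive, and $\Phi^*(\bigcdot) = \Gamma^\ee_\xi(\bigcdot \otimes Z)$. Because $E \ne \zero$, \lemref{lemma:Gamma-E-positive}(ii) gives $Z \ne \zero$, hence $\sqrt{Z} \ne \zero$; and by \lemref{lemma:Gamma-E-positive}(i), $\Gamma^\ee_\xi(B^*B) = \zero \iff B = \zero$ for all $B \in \lo(\hs \otimes \ha)$.

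For item (i), I would invoke the equivalence (a)$\iff$(c) of \lemref{lemma:equivalent-dual-positive-rank-non-dec}(i): $\Phi$ is strictly positive if and only if $\Phi^*(A^*A) = \zero \iff A = \zero$ for every $A \in \lo(\hs)$. Writing $A^*A \otimes Z = (A \otimes \sqrt{Z})^*(A \otimes \sqrt{Z})$, one gets $\Phi^*(A^*A) = \Gamma^\ee_\xi\big((A \otimes \sqrt{Z})^*(A \otimes \sqrt{Z})\big)$, which by \lemref{lemma:Gamma-E-positive}(i) vanishes iff $A \otimes \sqrt{Z} = \zero$, i.e.\ (since $\sqrt{Z} \ne \zero$) iff $A = \zero$. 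Hence $\Phi$ is strictly positive.

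For item (ii), I would first note that ``$P \rho P$ has full rank in $P\hs$'' is equivalent to $P\hs \cap \ker \rho = \{0\}$. Then I would test $\Phi(\rho)$ against an arbitrary unit vector $|\phi\> \in \hs$: by trace duality $\<\phi| \Phi(\rho) |\phi\> = \tr[\rho\, \Phi^*(\proj\phi)]$ with $\Phi^*(\proj\phi) = \Gamma^\ee_\xi(\proj\phi \otimes Z) \geqslant \zero$. The crux is to locate $\supp(\Phi^*(\proj\phi))$: since $\proj\phi \leqslant \onesys$ and $\Phi^*$ is positive, $\Phi^*(\proj\phi) \leqslant \Phi^*(\onesys) = E$, and for positive operators $X \leqslant Y$ one has $\ker Y \subseteq \ker X$, hence $\supp(\Phi^*(\proj\phi)) \subseteq \supp(E) = P\hs$; moreover $\Phi^*(\proj\phi) \ne \zero$ by item (i). Thus $\supp(\Phi^*(\proj\phi))$ is a nonzero subspace of $P\hs$, which by hypothesis intersects $\ker\rho$ trivially; since for $X,\rho \geqslant \zero$ one has $\tr[\rho X] = 0 \iff \supp X \subseteq \ker\rho$, it follows that $\tr[\rho\, \Phi^*(\proj\phi)] > 0$. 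As $|\phi\>$ was arbitrary, $\Phi(\rho) > \zero$, i.e.\ $\Phi(\rho)$ has full rank in $\hs$.

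I expect the only real obstacle to be bookkeeping: translating ``full rank in $P\hs$'' precisely into $P\hs \cap \ker\rho = \{0\}$, and pinning down the elementary fact that $\zero \leqslant X \leqslant Y$ forces $\ker Y \subseteq \ker X$ (equivalently $\supp X \subseteq \supp Y$), which is exactly what drives the support inclusion $\supp(\Phi^*(\proj\phi)) \subseteq \supp(E)$. Note that (ii) actually contains (i) as the special case $\rho > \zero$ (then $\ker\rho = \{0\}$), so it is natural to establish (i) first and reuse it inside the proof of (ii).
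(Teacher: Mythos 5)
Your proof is correct and follows essentially the same route as the paper's: item (i) is the paper's argument verbatim, via $\Phi^*(A^*A)=\Gamma^\ee_\xi\bigl((A\otimes\sqrt{Z})^*(A\otimes\sqrt{Z})\bigr)$ together with \lemref{lemma:Gamma-E-positive} and \lemref{lemma:equivalent-dual-positive-rank-non-dec}, and item (ii) likewise reduces to showing that $\Phi^*$ sends nonzero positive operators to nonzero positive operators supported in $P\hs$ and then pairing with $\rho$. The only (harmless) difference is how that support inclusion is obtained: the paper invokes the factorisation $\Phi^*(\bigcdot)=\sqrt{E}\,\Xi^*(\bigcdot)\sqrt{E}$ for some channel $\Xi$, whereas you derive it more elementarily from $\zero\leqslant\Phi^*(\proj{\phi})\leqslant\Phi^*(\onesys)=E$ and the fact that $\zero\leqslant X\leqslant Y$ forces $\supp X\subseteq\supp Y$.
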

\begin{proof}
 \begin{enumerate}[(i)]
     \item

Since $E \ne \zero$ it holds that $Z \ne \zero$.  Now note that $\Phi^*(A^*A) = \Gamma_\xi^\ee(A^*A \otimes Z)$, where $\Gamma^\ee_\xi$ is defined in \eq{eq:gamma-E}. It follows from \lemref{lemma:Gamma-E-positive} that $\Phi^*(A^*A) = \zero$ if and only if $A^*A \otimes Z  = \zero$, which holds if and only if $A = \zero$. The statement follows from \lemref{lemma:equivalent-dual-positive-rank-non-dec}.

\item 

We may always write $\Phi^*(\bigcdot) = \sqrt{E } \ \Xi^*(\bigcdot) \sqrt{E }$ for some channel $\Xi$ acting in $\hs$. It follows that $\Phi^*(\bigcdot) = P \Phi^*(\bigcdot)P $, and so for any $A \in \lo(\hs)$, $\Phi^*(A^*A) = P \Phi^*(A^* A) P \in \lo(P \hs)$.  
   By item (i),  given   any $\rho$ for which $P \rho P $ has full-rank in $P \hs$, it follows that $\tr[\Phi^*(A^* A) \rho] = 0 \iff A = \zero$.  By writing  $\tr[A^* A \, \Phi(\rho)] = \tr[\Phi^*(A^* A) \rho]$, it follows that $\tr[A^* A \Phi(\rho)] =0 \iff A = \zero$,  and so $\Phi(\rho)$ must have  full rank in $\hs$. 
 \end{enumerate}   
\end{proof}

Note that item (ii) implies that if $\rank{E} = 1$, then for any state $\rho$ such that $\tr[E \rho] >0$, including a pure state, it will hold that $\rank{\Phi(\rho)} = \dim(\hs)$.

\

\begin{prop}\label{prop:necessary-sufficient-weak-third-law}
    An  operation $\Phi$, compatible with $E \ne \zero$,  exists in $\opset_\I(\hs)$ if and only if $\Phi$ is strictly positive. Similarly, an $\E$-compatible instrument $\ii \coloneq \{\ii_x : x \in \xx\}$ such that  $E_x \ne \zero$ for all $x$ exists in $\inset_\I(\hs)$ if and only if  $\ii_x$ is strictly positive for all $x$.
\unskip\nobreak\hfill $\square$ \end{prop}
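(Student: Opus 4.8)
The ``only if'' direction is already essentially done: \lemref{lemma:measurement-weak-third-law}(i) shows that any $E$-compatible operation in $\opset_\I(\hs)$ is strictly positive, and for an instrument each $\ii_x$ is itself an $E_x$-compatible operation realised by the process $(\ha, \xi, \ee, Z_x)$, so it too is strictly positive. Hence the plan is to concentrate on the ``if'' direction: given a strictly positive $E$-compatible operation $\Phi$ (respectively, an $\E$-compatible instrument all of whose operations are strictly positive), construct an explicit process $(\ha, \xi, \ee, Z)$ (respectively $(\ha, \xi, \ee, \Z)$) obeying constraint $\I$.

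\textbf{Step 1: the canonical dilation.} First I would invoke the Stinespring--Naimark--Ozawa dilation to write $\Phi(\bigcdot) = \tra[(\onesys \otimes Z_0)\, \uu(\bigcdot \otimes \proj{0})\uu^*]$ with $\proj0$ pure on some auxiliary $\h\sub{\aa_0}$, $\uu$ unitary, and $Z_0$ a projection; for an instrument, do this jointly so that all $\ii_x$ arise from one unitary $\uu$ and a projection-valued $\{Z_{0,x}\}$. The obstruction is precisely that $\proj0$ is not strictly positive, so the process fails constraint $\I$.

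\textbf{Step 2: thickening the apparatus state.} The key idea is to enlarge the apparatus and replace the pure state with a strictly positive one, absorbing the ``purification failure'' into the interaction channel while keeping the latter strictly positive. Concretely, I would take $\ha = \h\sub{\aa_0}\otimes \rr$ for a suitable finite $\rr$, pick any strictly positive state $\omega$ on $\rr$, set $\xi = \proj0 \otimes \omega$ --- still not strictly positive --- and then \emph{correct} this by composing the unitary dilation with a strictly positive channel on the enlarged system that maps the actual strictly positive input $\xi' = \tau$ (some full-rank state, e.g.\ the maximally mixed state on $\ha$) appropriately. A cleaner route: use the fact, from the discussion preceding \thmref{thm:weak-third} and \corref{app-cor:strict-positive-for-any-effect}, that strictly positive channels are closed under composition, and that any operation can be written as a strictly positive pre-processing followed by the unitary dilation. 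Specifically, let $\xi$ be strictly positive on $\ha$, and let $\ee \coloneq \uu(\,\bigcdot\,)\uu^* \circ (\idchsys \otimes \nn)$ where $\nn : \lo(\ha) \to \lo(\h\sub{\aa_0})$ is a strictly positive channel with $\nn(\xi)$ close to (or equal to) $\proj0$ in the relevant sense; since the pre-processing can be made strictly positive and $\uu(\,\bigcdot\,)\uu^*$ is strictly positive, $\ee$ is strictly positive. One must then check $\tra[(\onesys\otimes Z)\,\ee(\rho\otimes\xi)] = \Phi(\rho)$ for the right choice of $Z$ (take $Z = Z_0$ on $\h\sub{\aa_0}$ extended trivially) and the right $\nn$. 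The honest difficulty is that a strictly positive channel cannot \emph{exactly} output the pure $\proj0$; so the construction must instead produce the exact operation $\Phi$ by a slightly different device --- e.g.\ have $\ee$ act as the swap-type channel $\rho\otimes\xi \mapsto \ee'(\rho)\otimes \xi$ composed with the dilation, exploiting that $\Phi$ strictly positive means $\Phi = \Phi \circ \Phi_0$ for some strictly positive $\Phi_0$, and only $\Phi_0$ need be folded into $\ee$ while the ``pure-state'' part of the dilation of $\Phi$ is handled separately but contributes a strictly positive overall map because it is post-composed with the strictly positive $\Phi_0$'s image structure.

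\textbf{Step 3: verification and the instrument case.} Having produced $(\ha, \xi, \ee, Z)$ I would verify (a) $\xi > \zero$ by construction, (b) $\ee$ strictly positive via \lemref{lemma:equivalent-dual-positive-rank-non-dec}(i), using that it is a composition of strictly positive maps, and (c) that \eq{eq:operation-implementation} reproduces $\Phi$ exactly, by a direct computation with the trace duality. For the instrument, the same $\ee$ and $\xi$ work simultaneously for all $x$ because the dilation of the channel $\ii_\xx$ supplies one unitary and a projection-valued pointer $\{Z_{0,x}\}$; one sets $\Z = \{Z_{0,x}\}$ (extended trivially on $\rr$) and checks each $\ii_x$ is recovered, noting $\sum_x Z_x = \oneapp$ so $\ee(\,\bigcdot\,\otimes\xi)$ composed with the non-selective sum is a channel as required by the definition of an instrument. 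I expect Step 2 --- reconciling ``the dilation needs a pure apparatus state'' with ``constraint $\I$ forbids pure apparatus states'' while keeping $\ee$ strictly positive and the implemented map \emph{exactly} $\Phi$ rather than an approximation --- to be the crux; the resolution should lean on the composition-closure of strictly positive channels together with the fact, already used in \lemref{lemma:measurement-weak-third-law}, that a strictly positive operation factors through a strictly positive channel, so that the ``pure'' ingredient is only ever used \emph{after} a strictly positive stage and the overall $\ee$ inherits strict positivity.
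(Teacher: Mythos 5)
Your ``only if'' direction is exactly the paper's: it follows from \lemref{lemma:measurement-weak-third-law}(i) applied to each $\ii_x$ individually. The ``if'' direction, however, has a genuine gap, and you have correctly located it yourself: a strictly positive channel $\nn$ with $\xi>\zero$ can never satisfy $\nn(\xi)=\proj{0}$ exactly (that is precisely what strict positivity forbids), so every variant of your Step 2 either yields only an $\epsilon$-approximation of $\Phi$ (which proves a statement like \propref{rem:full-choi-rank}, not this one) or collapses to something non-general (your swap-type channel $\rho\otimes\xi\mapsto\ee'(\rho)\otimes\xi$ forces $\Phi$ to be proportional to a channel). The unproved assertion that ``any operation can be written as a strictly positive pre-processing followed by the unitary dilation'' is not in the paper and is not how the difficulty is resolved. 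The missing idea is that you should abandon the Stinespring dilation altogether: strict positivity of $\ee$ is an extremely weak demand --- by \lemref{lemma:equivalent-dual-positive-rank-non-dec}(i) it suffices that $\ee(\onesys\otimes\oneapp)>\zero$ --- so there is no need for $\ee$ to be anywhere near unitary.

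The paper's construction is a measure--discard--and--prepare channel built directly from the instrument. Take $\dim(\ha)=N=|\xx|$ with orthonormal basis $\{|x\>\}$, set $Z_x=\proj{x}$, and define
\begin{align*}
\ee(A\otimes B)\coloneq\sum_{x=1}^{N}\ii_x(A)\otimes\tr[B]\,\proj{x}\,,
\end{align*}
which is a channel because $\ii_\xx$ is trace preserving. It discards the apparatus input entirely, so $\tra[\onesys\otimes\proj{x}\ \ee(\bigcdot\otimes\xi)]=\ii_x(\bigcdot)$ for \emph{any} state $\xi$, in particular a strictly positive one; and $\ee(\onesys\otimes\oneapp)=N\sum_x\ii_x(\onesys)\otimes\proj{x}>\zero$ since each $\ii_x$ is strictly positive and the $\proj{x}$ span $\ha$, so $\ee$ is strictly positive. (For a single $E$-compatible operation one completes $E$ to an observable using \corref{app-cor:strict-positive-for-any-effect}; for a channel one may even more simply take $\ee=\Phi\otimes\idchapp$ and $Z=\oneapp$.) This one explicit construction replaces your entire Step 2 and makes Step 3 a two-line verification.
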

\begin{proof}
   The only if statement for both operations and instruments follows from  \lemref{lemma:measurement-weak-third-law}, while the if statement for channels follows trivially by noting that the interaction channel $\ee = \Phi \otimes \idchapp$ is strictly positive if $\Phi$ is, and that by choosing $Z = \oneapp$  such an interaction implements $\Phi$. So we shall now show  the if statement for operations and instruments. Consider the observable $\E \coloneq \{E_x : x \in \xx \}$ for $\xx = \{1,\cdots,N\}$, and the $\E$-compatible instrument $\ii$. Let us identify $E_1$ and $\ii_1$ with the particular  effect $E$ and its $E$-compatible operation $\Phi$, respectively. Since $E_x \ne \zero$, then $\ii_x$ can always be chosen to be strictly positive, see \corref{app-cor:strict-positive-for-any-effect}.   Choose $\ha$ with $\dim(\ha) = |\xx| = N$, and let $\{|x\> : x = 1, \dots , N\}$ be an orthonormal basis for $\ha$.  Let $\ee$ be defined as
    \begin{align*}
        \ee(A \otimes B) = \sum_{x=1}^{N} \ii_x(A) \otimes \tr[B] \proj x
    \end{align*}
    for all $A \in \lo(\hs), B \in \lo(\ha)$. It is readily verified that $\ee$ is a channel, and that if we choose $Z_x=\proj x$, then 
    \begin{align*}
       \ii_x(\bigcdot) =   \tra[\onesys \otimes \proj x\ \ee(\,\bigcdot\, \otimes \xi)] 
    \end{align*}
for any state $\xi$.  All that remains to be shown is that $\ee$ is strictly positive. This is guaranteed to be the case if $\ee(\onesys \otimes \oneapp) > \zero$. But it holds that 
\begin{align*}
  \ee(\onesys \otimes \oneapp) = N  \sum_x \ii_x(\onesys) \otimes \proj x \, .   
\end{align*}
 Since $\ii_x(\onesys) > \zero$  for all $x$, and $\{|x\>\}$ spans $\ha$, it holds that $\ee(\onesys\otimes\oneapp)>\zero$ which completes the proof.
\end{proof}

\section{The strong third law}\label{app:strong-third-law}

In this section, we  obtain necessary and sufficient conditions for an  operation to be consistent with the strong third law, i.e., operations that admit a process $(\ha, \xi, \ee, Z)$ as per \eq{eq:operation-implementation}, where $\xi$ is a strictly positive state on $\ha$ and $\ee$ is a rank non-decreasing  channel acting in $\hs \otimes \ha$. That is, we shall characterise $\opset_\II( \hs)$ as per \defref{defn:thermo-consistent-operation}.  Let us first introduce the following useful result, which follows from the  weak subadditivity of the R\'enyi-zero entropy as given in Lemma 4.3 of Ref. \cite{VanDam2002}:

\

\begin{lemma}\label{lemma:rank-partial-trace}
 For all positive semi-definite operators $\rho$ on $\hs\otimes\ha$, the following holds:
 \begin{align*}
     \rank{\rho} \leqslant \rank{\tra[\rho]} \rank{\trs[\rho]} \, .
 \end{align*}
\unskip\nobreak\hfill $\square$ \end{lemma}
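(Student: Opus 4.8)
The plan is to reduce the inequality to the elementary observation that the support of a bipartite positive operator sits inside the tensor product of the supports of its two marginals. First I would take a spectral decomposition $\rho = \sum_i p_i \proj{\psi_i}$ with all $p_i > 0$, so that $\rank{\rho}$ equals the number of terms and $\supp(\rho) = \spann\{|\psi_i\rangle : i\}$. For each unit vector $|\psi_i\rangle \in \hs \otimes \ha$ I would invoke the Schmidt decomposition $|\psi_i\rangle = \sum_k \sqrt{\lambda_{ik}}\,|a_{ik}\rangle \otimes |b_{ik}\rangle$ with $\lambda_{ik} > 0$; this displays $|\psi_i\rangle$ as an element of $\vv_i \otimes \ww_i$, where $\vv_i \coloneq \spann\{|a_{ik}\rangle : k\}$ and $\ww_i \coloneq \spann\{|b_{ik}\rangle : k\}$ are precisely the supports of the reduced operators $\tra[\proj{\psi_i}]$ and $\trs[\proj{\psi_i}]$, respectively.

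Next, since $\tra[\rho] = \sum_i p_i \, \tra[\proj{\psi_i}]$ is a sum of positive operators with strictly positive coefficients, the support of each summand is contained in the support of the sum; hence $\vv_i \subseteq \supp(\tra[\rho])$ for every $i$, and likewise $\ww_i \subseteq \supp(\trs[\rho])$. Consequently every $|\psi_i\rangle$ lies in $\supp(\tra[\rho]) \otimes \supp(\trs[\rho])$, and therefore
\begin{align*}
    \supp(\rho) \subseteq \supp(\tra[\rho]) \otimes \supp(\trs[\rho]) \, .
\end{align*}
Comparing dimensions of both sides, and using that the dimension of a tensor product is the product of the dimensions, immediately gives $\rank{\rho} \leqslant \rank{\tra[\rho]}\,\rank{\trs[\rho]}$.

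An equivalent and even shorter route is simply to cite Lemma 4.3 of Ref.~\cite{VanDam2002}: writing the R\'enyi-zero entropy as $H_0(\sigma) = \log_2 \rank{\sigma}$, that lemma gives the weak subadditivity bound $H_0(\rho) \leqslant H_0(\tra[\rho]) + H_0(\trs[\rho])$, and exponentiating both sides recovers the claim. I do not expect a genuine obstacle in either route; the only point needing a word of care in the self-contained argument is the step ``support of a summand $\subseteq$ support of the sum'', which holds because $\zero \leqslant p_i\,\tra[\proj{\psi_i}] \leqslant \tra[\rho]$ forces $\ker(\tra[\rho]) \subseteq \ker\bigl(p_i\,\tra[\proj{\psi_i}]\bigr)$.
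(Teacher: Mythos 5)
Your argument is correct and complete. The paper itself does not prove this lemma at all: it simply cites it as a consequence of the weak subadditivity of the R\'enyi-zero entropy, Lemma 4.3 of Ref.~\cite{VanDam2002}, which is exactly your second, one-line route. Your main, self-contained argument is a genuinely different (and more elementary) path: the spectral decomposition of $\rho$ combined with the Schmidt decomposition of each eigenvector yields the support inclusion $\supp(\rho) \subseteq \supp(\tra[\rho]) \otimes \supp(\trs[\rho])$, and the rank inequality follows by comparing dimensions. Every step checks out, including the one you flag as needing care: for positive operators $\zero \leqslant A \leqslant B$ one indeed has $\ker(B) \subseteq \ker(A)$ (since $\langle v, Av\rangle \leqslant \langle v, Bv\rangle = 0$ forces $Av = \zero$), so the support of each summand sits inside the support of the sum. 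What the citation buys is brevity; what your direct argument buys is a proof that is self-contained, makes the underlying geometric fact (support containment in the tensor product of marginal supports) explicit, and avoids importing entropy language for what is really a linear-algebra statement.
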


Using the above, we are able to obtain the following:

\

\begin{lemma}\label{lemma:measurement-strong-third-law}
Let  $\Phi$ be an $E$-compatible operation in $\opset_\II(\hs)$, as per \defref{defn:thermo-consistent-operation}. Then 
 
\begin{enumerate} [(i)]

\item 

If $Z > \zero$, then $E > \zero$, and  $\Phi$ is a rank non-decreasing operation.

\item 

Let $E$ be a non-trivial norm-1 effect. Then for every state $\rho$ on $\hs$ it holds that 
\begin{align*}
  \tr[E \rho] = 1 \implies  \rank{\Phi(\rho)}  > \rank{\rho}.
\end{align*}

\end{enumerate} 
\unskip\nobreak\hfill $\square$ \end{lemma}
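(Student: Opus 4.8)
\emph{Plan.} Throughout I would work with the post-interaction state $\sigma := \ee(\rho \otimes \xi)$ on $\hs \otimes \ha$ and its two reductions: the channel $\Psi(\bigcdot) := \tra[\ee(\bigcdot \otimes \xi)]$ acting in $\hs$, and the conjugate channel $\Lambda(\bigcdot) := \trs[\ee(\bigcdot \otimes \xi)]$, recalling from \app{app:restriction-map} that $\Lambda$ is a channel and $E = \Lambda^*(Z)$. Two standing facts do the heavy lifting. First, since $\ee$ is rank non-decreasing, so is $\ee^*$ by \lemref{lemma:equivalent-dual-positive-rank-non-dec}, and hence both are strictly positive; moreover $\xi > \zero$ gives $\rank{\rho \otimes \xi} = \rank{\rho}\dim(\ha)$, so $\rank{\sigma} \geqslant \rank{\rho}\dim(\ha)$. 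Second, \lemref{lemma:rank-partial-trace} gives $\rank{\sigma} \leqslant \rank{\Psi(\rho)}\,\rank{\Lambda(\rho)}$. Almost everything follows by feeding the first fact into the second and tracking ranks.

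For (i), I would first note that $\Lambda^*$ is strictly positive: writing $\Lambda^*(\bigcdot) = \Gamma_\xi\!\left(\ee^*(\onesys \otimes \bigcdot)\right)$, strict positivity of $\ee^*$ together with the fact that the unital CP map $\Gamma_\xi$ sends strictly positive operators to strictly positive operators ($A > \zero \implies A \geqslant \mu \one \implies \Gamma_\xi(A) \geqslant \mu\onesys > \zero$) yields $Z > \zero \implies E = \Lambda^*(Z) > \zero$. For rank non-decrease I would first observe that $\Psi$ itself is rank non-decreasing: combining $\rank{\sigma} \geqslant \rank{\rho}\dim(\ha)$ with \lemref{lemma:rank-partial-trace} and $\rank{\Lambda(\rho)} \leqslant \dim(\ha)$ gives $\rank{\Psi(\rho)} \geqslant \rank{\rho}$. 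Finally, if $Z > \zero$ then $Z \geqslant \lambda\oneapp$ for some $\lambda > 0$, so $\Phi(\rho) = \tra[(\onesys \otimes Z)\sigma] \geqslant \lambda\,\tra[\sigma] = \lambda\,\Psi(\rho) \geqslant \zero$, whence $\rank{\Phi(\rho)} \geqslant \rank{\Psi(\rho)} \geqslant \rank{\rho}$, i.e.\ $\Phi$ is rank non-decreasing (which, via \corref{app-cor:strict-positive-for-any-effect}, also re-confirms $E > \zero$).

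For (ii), the point is to cash in the hypothesis $\tr[E\rho] = 1$. Since $\tr[Z\Lambda(\rho)] = \tr[E\rho] = 1$, the effect $Z$ is norm-1 (an effect tested to $1$ on a state), and $Z$ cannot be a multiple of $\oneapp$ — otherwise $E = \Lambda^*(Z)$ would be a multiple of $\onesys$, contradicting non-triviality — so $Z$ is a non-trivial norm-1 effect and its eigenvalue-$1$ projection $Q$ is a proper nonzero projection, forcing $1 \leqslant \rank{Q} \leqslant \dim(\ha) - 1$ (in particular $\dim(\ha) \geqslant 2$). By \lemref{lemma:effect-support-probability-1}, $\tr[Z\Lambda(\rho)] = 1$ gives $\Lambda(\rho) = Q\Lambda(\rho)Q$, hence $\rank{\Lambda(\rho)} \leqslant \rank{Q} \leqslant \dim(\ha) - 1$; and since $\trs[\sigma] = \Lambda(\rho)$ is supported in $Q\ha$, we get $\tr[(\onesys \otimes (\oneapp - Q))\sigma] = \tr[(\oneapp - Q)\Lambda(\rho)] = 0$, which by positivity of $\sigma$ forces $(\onesys \otimes Q)\sigma(\onesys \otimes Q) = \sigma$. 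Using $ZQ = QZ = Q$ this upgrades to $(\onesys \otimes Z)\sigma = \sigma$, so $\Phi(\rho) = \tra[(\onesys \otimes Z)\sigma] = \tra[\sigma] = \Psi(\rho)$. Now plug $\rank{\Lambda(\rho)} \leqslant \dim(\ha) - 1$ into $\rank{\rho}\dim(\ha) \leqslant \rank{\sigma} \leqslant \rank{\Psi(\rho)}\,\rank{\Lambda(\rho)}$ to obtain $\rank{\Psi(\rho)} \geqslant \rank{\rho}\,\dim(\ha)/(\dim(\ha) - 1) > \rank{\rho}$, and since $\Phi(\rho) = \Psi(\rho)$ this is exactly $\rank{\Phi(\rho)} > \rank{\rho}$.

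I expect the main obstacle to be the step showing that the post-selection by $Z$ becomes \emph{redundant} when $\tr[E\rho] = 1$ — i.e.\ that $\sigma = \ee(\rho\otimes\xi)$ lives in $\hs \otimes Q\ha$ and therefore $\Phi(\rho) = \Psi(\rho)$. This is the move that replaces the non-positive operator $(\onesys \otimes Z)\sigma$, whose rank is hard to control directly, by a genuine partial trace to which \lemref{lemma:rank-partial-trace} applies; the remaining rank bookkeeping is routine, with the only extra ingredient being the elementary observation that a non-trivial norm-1 $Z$ has $\rank{Q}$ \emph{strictly} below $\dim(\ha)$, which is precisely what turns the rank inequality from ``$\geqslant$'' into the strict ``$>$''. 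A minor point to keep in mind is that the statement is non-vacuous only because $\|E\|=1$ guarantees states with $\tr[E\rho]=1$, and that non-triviality of $E$ is genuinely needed (to force $Z \neq \oneapp$, and implicitly $\dim(\ha)\geqslant 2$).
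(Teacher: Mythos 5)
Your proposal is correct and follows essentially the same route as the paper: both parts hinge on sandwiching $\rank{\ee(\rho\otimes\xi)}$ between $\rank{\rho}\dim(\ha)$ (from rank non-decrease of $\ee$ and $\xi>\zero$) and $\rank{\Phi(\rho)}\rank{\Lambda(\rho)}$ via \lemref{lemma:rank-partial-trace}, with the strict inequality in (ii) coming from $\rank{\Lambda(\rho)}<\dim(\ha)$ once the post-selection by the norm-1 effect $Z$ is shown to be redundant. The only cosmetic difference is in (i), where you pass through $\Phi(\rho)\geqslant\lambda\,\Psi(\rho)$ for $Z\geqslant\lambda\oneapp$, while the paper conjugates $\ee(\rho\otimes\xi)$ by the invertible $\onesys\otimes\sqrt{Z}$ to preserve its rank directly; both are valid.
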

\begin{proof}

\begin{enumerate}[(i)]

\item 

Let us first note that if $Z  > \zero$, then $\tr[E  \rho] = \tr[Z  \Lambda(\rho)] >0$ for all $\rho$, where $\Lambda$ is defined in \eq{eq:conjugate-channel}. It follows that $E  > \zero$, and
\begin{align*}
    \rank{\rho}\dim(\ha) & = \rank{\rho \otimes \xi} \\
    & \leqslant \rank{ \ee( \rho \otimes \xi)} \\
    & = \rank{\onesys \otimes  \sqrt{Z} \,  \ee( \rho \otimes \xi) \,  \onesys \otimes  \sqrt{Z}} \\
    & \leqslant \rank{\Phi(\rho)} \rank{\sqrt{Z} \Lambda(\rho) \sqrt{Z}} \\
    & \leqslant  \rank{\Phi(\rho)} \dim(\ha)
\end{align*}
for all $\rho$, and so $\rank{\rho} \leqslant \rank{\Phi(\rho)}$. Here, in the third line we have used the fact that if $Z > \zero $ then $\rank{\sqrt{Z} \sigma \sqrt{Z}} = \rank{\sigma}$, and in the fourth line we use \eq{eq:operation-implementation} and   \lemref{lemma:rank-partial-trace}.

\item

If $Z = \oneapp$, then $\Phi$ is a channel, which is compatible with a trivial effect $E = \onesys$. Therefore, $Z \ne  \oneapp$. Since $E$ is norm-1 and $\Lambda^*$    defined in \eq{eq:conjugate-channel} is completely positive and unital, then  $1 = \|E \| = \|\Lambda^*(Z )\| \leqslant \| Z \| \leqslant 1$, and so $\|Z  \| = 1$. It follows that $Z$ is a non-trivial norm-1 effect, and the eigenvalue-1 eigenspace of $Z$ is strictly smaller than $\ha$. 

Let $\rho$ be a state which has support only in the eigenvalue-1 eigenspace of $E $, so that $\tr[\Phi(\rho)] = \tr[E  \rho] =1$. By the probability reproducibility condition, it follows that
\begin{align*}
\tr[(\onesys \otimes Z)\ \ee(\rho \otimes \xi)] =     \tr[Z  \Lambda(\rho)] = \tr[\Phi(\rho)] = 1 \, .
\end{align*}
 Since $Z$ is an effect and $\Lambda(\rho)$ is a state, then $\tr[Z  \Lambda(\rho)] = 1$ implies that $\Lambda(\rho) = Z  \Lambda(\rho)$ must  have support only in the eigenvalue-1 eigenspace of $Z $, and so $\rank{\Lambda(\rho)} < \dim(\ha)$. Indeed, $ \ee(\rho \otimes \xi)$ also has support only in the eigenvalue-1 eigenspace of $\onesys \otimes Z$ so that $(\onesys \otimes Z)\, \ee(\rho \otimes \xi) =  \ee(\rho \otimes \xi)$.   It follows that 
\begin{align*}
    \rank{\rho} \dim(\ha) & \leqslant \rank{\ee(\rho \otimes \xi)} =  \rank{(\onesys \otimes Z)\ \ee(\rho \otimes \xi)} \leqslant \rank{\Phi(\rho)} \rank{\Lambda(\rho)} \, ,
\end{align*}
where the final inequality follows from \eq{eq:operation-implementation} and \lemref{lemma:rank-partial-trace}. Therefore, 
\begin{align*}
 \frac{\rank{\Phi(\rho)} }{\rank{\rho}} \geqslant \frac{\dim(\ha)}{\rank{\Lambda(\rho)}} >1 \,.  
\end{align*}

\end{enumerate}
\end{proof}

\begin{corollary}
    A channel $\Phi$ exists in $\opset_\II(\hs)$ if and only if it is rank non-decreasing. 
\unskip\nobreak\hfill $\square$ \end{corollary}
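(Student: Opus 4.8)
\emph{Proof proposal.} The plan is to establish the two implications separately, using \lemref{lemma:measurement-strong-third-law} (proved immediately above) for the non-trivial direction. For sufficiency — every rank non-decreasing channel lies in $\opset_\II(\hs)$ — I would exhibit a witnessing process: pick any finite auxiliary system $\ha$ prepared in a strictly positive state $\xi$ (e.g. the maximally mixed one), set $\ee := \Phi \otimes \idchapp$, and take $Z := \oneapp$, so that no post-selection occurs. By \propref{prop:complete-rank-non-dec}, $\Phi \otimes \idchapp$ is rank non-decreasing whenever $\Phi$ is, so the process $(\ha, \xi, \ee, \oneapp)$ is consistent with the strong third law, and plugging into \eq{eq:operation-implementation} gives $\tra[\onesys \otimes \oneapp\ (\Phi\otimes\idchapp)(\,\bigcdot\, \otimes \xi)] = \Phi(\,\bigcdot\,)\,\tr[\xi] = \Phi(\,\bigcdot\,)$; hence $\Phi \in \opset_\II(\hs)$. (Alternatively one may take $\ha$ one-dimensional with $\ee = \Phi$, dispensing with \propref{prop:complete-rank-non-dec}.)

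For necessity — a channel $\Phi \in \opset_\II(\hs)$ is rank non-decreasing — let $(\ha, \xi, \ee, Z)$ be a process realising $\Phi$ with $\xi > \zero$ and $\ee$ rank non-decreasing. The crux is to show that, because $\Phi$ is trace-preserving, this process necessarily has $Z = \oneapp$; once that is known, the conclusion is exactly \lemref{lemma:measurement-strong-third-law}(i) applied with the strictly positive pointer effect $Z = \oneapp > \zero$. To see that $Z = \oneapp$: a channel is compatible with the unit effect, so $E = \Phi^*(\onesys) = \onesys$; moreover a rank non-decreasing channel is in particular strictly positive, so \lemref{lemma:Gamma-E-positive}(ii) applies and the conjugate channel $\Lambda$ of \eq{eq:conjugate-channel} is strictly positive, with $E = \Lambda^*(Z)$. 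Therefore $\Lambda^*(\oneapp - Z) = \onesys - E = \zero$ with $\oneapp - Z \geqslant \zero$, and applying \lemref{lemma:equivalent-dual-positive-rank-non-dec}(i) to $\Lambda$ — statement (c), with $A = \sqrt{\oneapp - Z}$ — forces $\sqrt{\oneapp - Z} = \zero$, i.e. $Z = \oneapp$.

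The step I expect to require the most care is precisely this reduction to $Z = \oneapp$: it formalises the intuition that a deterministic operation arising from a thermodynamically admissible process cannot secretly hide a non-trivial post-selection, and it hinges on strict positivity of $\Lambda$, which in turn follows from $\ee$ being strictly positive (inherited from being rank non-decreasing) together with $\xi > \zero$. Everything after that is a direct appeal to \lemref{lemma:measurement-strong-third-law}; if one prefers a self-contained estimate, one can instead write, for any state $\rho$, $\rank{\rho}\dim(\ha) = \rank{\rho \otimes \xi} \leqslant \rank{\ee(\rho \otimes \xi)} \leqslant \rank{\tra[\ee(\rho\otimes\xi)]}\,\rank{\trs[\ee(\rho\otimes\xi)]} \leqslant \rank{\Phi(\rho)}\dim(\ha)$, using that $\ee$ is rank non-decreasing, that $\xi$ has full rank $\dim(\ha)$, \lemref{lemma:rank-partial-trace}, and $\tra[\ee(\rho\otimes\xi)] = \Phi(\rho)$ (since $Z = \oneapp$), whence $\rank{\Phi(\rho)} \geqslant \rank{\rho}$.
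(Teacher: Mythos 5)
Your proposal is correct and follows essentially the same route as the paper: the identical witnessing process $(\ha,\xi,\Phi\otimes\idchapp,\oneapp)$ via \propref{prop:complete-rank-non-dec} for sufficiency, and for necessity the same reduction to $Z=\oneapp$ followed by \lemref{lemma:measurement-strong-third-law}(i). The only (cosmetic) difference is how $Z=\oneapp$ is forced: the paper argues directly that $\tr[(\onesys\otimes Z)\,\ee(\rho\otimes\xi)]=1$ with $\ee(\rho\otimes\xi)>\zero$ requires $Z=\oneapp$, whereas you route the same strict-positivity fact through the conjugate channel $\Lambda$ and \lemref{lemma:equivalent-dual-positive-rank-non-dec}(i)(c) — both are valid and equivalent in substance.
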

\begin{proof}
The if statement follows trivially by observing that the interaction channel  $\ee = \Phi \otimes \idchapp$ is rank non-decreasing if $\Phi$ is (see \propref{prop:complete-rank-non-dec}), so that by choosing $Z = \oneapp$ the process implements the channel $\Phi$. The only if statement follows from item (i) of \lemref{lemma:measurement-strong-third-law}, together with the fact that  if $Z = \oneapp$ (and hence $Z > \zero$) then $\Phi$ is a channel, and since $\ee$ is rank non-decreasing, $\Phi$ cannot be a channel if $Z \ne \oneapp$. To see the second claim, note that $\tr[\Phi(\rho)] = 1$ if and only if $\tr[(\onesys \otimes Z)\ \ee(\rho \otimes \xi)] = 1$ which,  when $\rho > \zero$ implying that  $\ee(\rho \otimes \xi) > \zero$, will be satisfied if and only  if $Z = \oneapp$.  
\end{proof}

While item (i) of \lemref{lemma:measurement-strong-third-law} shows that a channel is consistent with the strong third law if and only if it is rank non-decreasing, not all operations consistent with the strong third law are rank non-decreasing, and not all rank non-decreasing operations are consistent with the strong third law. 

\

\begin{prop}\label{prop:go-no-go-FP-op}
 \begin{enumerate}[(i)]
     \item There exist operations $\Phi \in \opset_\II(\hs)$ that are not rank non-decreasing.

     \item Let $\Phi \in \opset_\II(\hs)$ be an $E$-compatible operation.      If $E \ne \onesys$, then $\ff(\Phi) = \ff(\Phi^*) = \zero$.

     \item There are rank non-decreasing operations that exist in $\opset_\I(\hs)$ but not in $\opset_\II(\hs)$.
 \end{enumerate}   
\unskip\nobreak\hfill $\square$ \end{prop}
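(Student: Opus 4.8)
\emph{Part (i).} The plan is to reuse the trivial swap process described in the remark preceding \thmref{thm:weak-third}. Choose any effect $E$ with a non-trivial kernel, i.e.\ $E \not> \zero$, set $\ha = \hs$, let $\ee$ be the unitary swap channel, pick any strictly positive $\xi$, and take the pointer effect $Z = E$. This process is fully consistent with thermodynamics and realises $\Phi(\bigcdot) = \tr[E\,\bigcdot]\,\xi$, so $\Phi \in \opset_\III(\hs) \subseteq \opset_\II(\hs)$. It is not rank non-decreasing: one may either invoke \corref{app-cor:strict-positive-for-any-effect} (a rank non-decreasing operation must be compatible with a strictly positive effect, but $\Phi$ is compatible with $E \not>\zero$), or note directly that for a unit vector $\ket{\psi}$ in $\ker E$ one has $\Phi(\proj{\psi}) = \zero$, so $\rank{\Phi(\proj{\psi})} = 0 < 1 = \rank{\proj{\psi}}$.

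\emph{Part (ii).} I would split on $\|E\|$. If $\|E\| < 1$, then $\ff(\Phi) = \ff(\Phi^*) = \zero$ already by item (i) of \lemref{lemma:null-fixed-point}, with no appeal to thermodynamic consistency. So assume $\|E\| = 1$; since $E \ne \onesys$, $E$ is then a non-trivial norm-1 effect. Suppose, for contradiction, that $\ff(\Phi) \ne \zero$. Since $\Phi$ is completely positive, a non-vanishing fixed point forces the existence of a fixed \emph{state} $\rho_0 = \Phi(\rho_0)$ (as in the proof of \lemref{lemma:null-fixed-point}), whence $\tr[E\rho_0] = \tr[\Phi(\rho_0)] = \tr[\rho_0] = 1$. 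But $\Phi \in \opset_\II(\hs)$ is $E$-compatible with $E$ a non-trivial norm-1 effect, so \lemref{lemma:opII-rank-inc} (equivalently item (ii) of \lemref{lemma:measurement-strong-third-law}) gives $\rank{\Phi(\rho_0)} > \rank{\rho_0}$, contradicting $\Phi(\rho_0) = \rho_0$. Hence $\ff(\Phi) = \zero$, and since $\dim\ff(\Phi) = \dim\ff(\Phi^*)$, also $\ff(\Phi^*) = \zero$.

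\emph{Part (iii).} Every rank non-decreasing operation is strictly positive, hence lies in $\opset_\I(\hs)$ by \thmref{thm:weak-third}. For the separation I would exhibit a single example: take a strictly positive norm-1 effect $E$ with $E \ne \onesys$ — for instance $\hs = \co^2$ with $E = \proj{0} + \tfrac{1}{2}\proj{1}$ — and its L\"uders operation $\Phi^L(\bigcdot) = \sqrt{E}\,\bigcdot\,\sqrt{E}$. Since $E > \zero$, $\sqrt{E}$ is invertible, so $\Phi^L$ is rank-preserving, in particular rank non-decreasing, and therefore $\Phi^L \in \opset_\I(\hs)$. On the other hand, any state $\rho$ supported in the non-zero eigenvalue-$1$ eigenspace of $E$ satisfies $\sqrt{E}\rho\sqrt{E} = \rho$, so $\ff(\Phi^L) \ne \zero$. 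But $\Phi^L$ is compatible with $E = (\Phi^L)^*(\onesys) \ne \onesys$, so part (ii) rules out $\Phi^L \in \opset_\II(\hs)$.

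\emph{Main obstacle.} None of the three parts needs machinery beyond what is already available; the load-bearing step is part (ii), which rests entirely on \lemref{lemma:opII-rank-inc}. The only real care required is bookkeeping on the chosen effect in part (iii), which must simultaneously be strictly positive (so that a rank non-decreasing compatible operation exists at all), have operator norm $1$ (so that $\ff(\Phi^L)$ is non-trivial), and differ from $\onesys$ (so that part (ii) applies); and in part (ii) one must remember to dispose of the $\|E\| < 1$ case by the elementary fixed-point argument, since the rank argument applies only to norm-$1$ effects.
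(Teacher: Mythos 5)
Your proposal is correct and follows essentially the same route as the paper's proof: the swap-process operation $\tr[E\,\bigcdot]\,\xi$ with $E\not>\zero$ for (i), the combination of \lemref{lemma:null-fixed-point} with \lemref{lemma:opII-rank-inc} (i.e.\ item (ii) of \lemref{lemma:measurement-strong-third-law}) for (ii), and the L\"uders operation of a strictly positive, non-trivial norm-1 effect for (iii). The only cosmetic difference is that you instantiate (iii) with a concrete qubit effect rather than stating the general class of examples.
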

\begin{proof}
\textit{(i)}:  An operation $\Phi$ is rank non-decreasing only if it is compatible with a strictly positive effect $E$, see \corref{app-cor:strict-positive-for-any-effect}. But there exist operations in $\opset_\II(\hs)$ that are compatible with effects $E$ that are not strictly positive, and hence are not rank non-decreasing. Consider the process $(\ha, \xi, \ee, Z)$ where $\ha = \hs$ and $\ee$ is a unitary (and hence rank non-decreasing) swap channel. Then for any effect $Z$, the process implements the operation $\Phi(\bigcdot) = \tr[Z \bigcdot] \xi$, which is compatible with the effect $E = Z$. Even though $\xi$ is strictly positive, unless $Z > \zero$ then $\Phi$ is not rank non-decreasing.

\

\textit{(ii)}:    
By \lemref{lemma:null-fixed-point}, $\ff(\Phi)$ contains a state $\rho$ only if $\| E \| = 1$. 
 Now assume that $\|E\|=1$ but $E \ne \onesys$.  Since $\Phi(\rho) = \rho \implies  \tr[E \rho] = \tr[\Phi(\rho)] = \tr[\rho] = 1$, then  $\rho$ is a fixed state of $\Phi$ only if  $\tr[E \rho] = 1$.  But  by item (ii) of \lemref{lemma:measurement-strong-third-law} it holds that $\tr[E \rho ] = 1 \implies \rank{\Phi(\rho)} > \rank{\rho}$, and so $\tr[E \rho] = 1 \implies \Phi(\rho) \ne \rho$.  But as shown in Theorem 6.5 of Ref. \cite{Wolf2012}, if there exists any $ \lo(\hs) \ni A \ne \zero$ such  $\Phi(A) = A$, then there exists a state $\rho$ such that $\Phi(\rho) = \rho$. It follows that $\ff(\Phi) = \zero$.  Finally, since $\dim(\ff(\Phi^*)) = \dim(\ff(\Phi))$, then $\ff(\Phi^*) = \zero$. 

 \

\textit{(iii)}: 
If   $E > \zero$,  then the L\"uders operation $\Phi^L(\bigcdot) \coloneq \sqrt{E} \bigcdot \sqrt{E}$ is rank non-decreasing (in fact it preserves the rank) and hence strictly positive, and thus by \propref{prop:necessary-sufficient-weak-third-law} it exists in $\opset_\I(\hs)$. This is so even if  $\| E\| =1$ and $E \ne \onesys$; but for such an effect, by \lemref{lemma:effect-support-probability-1}, for any state $\rho$ such that $\tr[E \rho] = 1$, it will hold that $\sqrt{E} \rho \sqrt{E} = \rho$. By item (ii), this operation does not exist in $\opset_\II(\hs)$.

\end{proof}

\begin{remark}
    Compare item (ii) of the above with item (ii) of \lemref{lemma:null-fixed-point}. The fact that $\tr[E \rho ] = 1 \implies \rank{\Phi(\rho)} > \rank{\rho}$ implies that for any projection $P$ satisfying $ P = EP$, the operation $\Phi_P(\bigcdot) \coloneq P \Phi(P \bigcdot P) P$ does not equal  $\Phi(P \bigcdot P)$. 
\unskip\nobreak\hfill $\square$ \end{remark}

\

While not all rank non-decreasing operations are consistent with the strong third law, the following shows  that any rank non-decreasing operation compatible with an indefinite effect   is. Note that such operations do not have any non-vanishing fixed points, owing to \lemref{lemma:null-fixed-point}.

\

\begin{prop}\label{prop:existence-rank-non-dec-strong-third-law}
Any rank non-decreasing operation $\Phi$ compatible with an indefinite effect $\zero < E < \onesys$ exists in $\opset_\II(\hs)$. Similarly, any instrument $\ii\coloneq \{\ii_x : x \in \xx\}$ such that $\ii_x$ are rank non-decreasing operations compatible with indefinite effects for all $x$ exists in $\inset_\II(\hs)$.  
\unskip\nobreak\hfill $\square$ \end{prop}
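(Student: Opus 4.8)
The plan is to exhibit one explicit process for the instrument and then reduce the operation case to it. First I would observe that it suffices to treat the instrument case: given a rank non-decreasing operation $\Phi$ compatible with an indefinite effect $E$, the effect $\onesys-E$ is strictly positive (since $E<\onesys$) and indefinite (since $\zero<\onesys-E<\onesys$), so $\Phi_2(\,\bigcdot\,):=\sqrt{\onesys-E}\,\bigcdot\,\sqrt{\onesys-E}$ is rank-preserving (its square root being invertible) and compatible with $\onesys-E$. Then $\ii:=\{\Phi,\Phi_2\}$ is an instrument (as $\Phi^*(\onesys)+\Phi_2^*(\onesys)=\onesys$) whose operations are rank non-decreasing and compatible with indefinite effects, and $\Phi\in\opset_\II(\hs)$ follows at once from $\ii\in\inset_\II(\hs)$.

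For the instrument $\ii=\{\ii_x:x\in\xx\}$ with $N:=|\xx|$, fix Kraus decompositions $\ii_x(\rho)=\sum_k K_{xk}\rho K_{xk}^*$, so that $\sum_{x,k}K_{xk}^*K_{xk}=\onesys$. I would choose $\ha:=\co^N$ with an orthonormal basis $\{g_x:x\in\xx\}$, the strictly positive apparatus state $\xi:=\onesys_\ha/N$, the projection-valued pointer $\Z:=\{Z_x:=\proj{g_x}\}$, and the interaction channel acting in $\hs\otimes\ha$,
\begin{align*}
\ee(\sigma):=\sum_{x,y\in\xx}\sum_k\big(K_{yk}\otimes\ketbra{g_y}{g_x}\big)\,\sigma\,\big(K_{yk}^*\otimes\ketbra{g_x}{g_y}\big)\,.
\end{align*}
One checks $\sum_{x,y,k}\big(K_{yk}^*K_{yk}\big)\otimes\proj{g_x}=\sum_x\onesys\otimes\proj{g_x}=\onesys$, so $\ee$ is a channel; and since $\ketbra{g_y}{g_x}\xi\ketbra{g_x}{g_y}=\tfrac1N\proj{g_y}$ we obtain $\ee(\rho\otimes\xi)=\sum_y\ii_y(\rho)\otimes\proj{g_y}$, hence $\tra[(\onesys\otimes Z_x)\,\ee(\rho\otimes\xi)]=\ii_x(\rho)$ for every state $\rho$. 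Thus the only remaining point—and the crux—is that $\ee$ is rank non-decreasing.

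For this I would discard the terms with $x\ne y$ to get the lower bound $\ee(\sigma)\geqslant\sum_x\ii_x(\sigma_{xx})\otimes\proj{g_x}$, where $\sigma_{xx}:=\bra{g_x}\sigma\ket{g_x}\in\lo(\hs)$ is positive semidefinite. The right-hand side is block diagonal with respect to $\{g_x\}$, so its rank equals $\sum_x\rank{\ii_x(\sigma_{xx})}\geqslant\sum_x\rank{\sigma_{xx}}$ because each $\ii_x$ is rank non-decreasing; and $\sum_x\rank{\sigma_{xx}}\geqslant\rank{\sigma}$ since $\sigma\geqslant\zero$ forces $\ker(\sigma)\supseteq\bigoplus_x\big(\ker(\sigma_{xx})\otimes\ket{g_x}\big)$. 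Hence $\rank{\ee(\sigma)}\geqslant\rank{\sigma}$ (and $\ee$ is then completely rank non-decreasing by \propref{prop:complete-rank-non-dec}), so $(\ha,\xi,\ee,\Z)$ is a process consistent with the strong third law and $\ii\in\inset_\II(\hs)$, which also yields the operation statement via the reduction above.

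The hard part is precisely the design of $\ee$. One cannot use an interaction that dominates a positive multiple of a \emph{unitary} channel: together with $\xi>\zero$ this would constrain the realised operations far too severely—it would, for instance, exclude the L\"uders operation of a non-trivial effect, which nevertheless lies in $\opset_\II(\hs)$—so $\ee$ must be a genuinely non-unitary rank non-decreasing channel. The construction above circumvents this by making $\ee$ dominate the non-unitary, block-diagonal CP map $\sigma\mapsto\sum_x\ii_x(\sigma_{xx})\otimes\proj{g_x}$, whose rank non-decreasing property is inherited for free from that of the $\ii_x$ together with the elementary pinching inequality $\rank{\sigma}\leqslant\sum_x\rank{\sigma_{xx}}$; identifying this lower bound and checking the short chain of rank inequalities is the substantive content of the proof.
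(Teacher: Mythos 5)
Your proposal is correct and is essentially the paper's own proof: the interaction you build from Kraus operators is, as a map, identical to the paper's channel $\ee(A\otimes B)=\sum_x\ii_x(A)\otimes\tr[B]\proj{x}$ (the same one used in \propref{prop:necessary-sufficient-weak-third-law}), with the same pointer $Z_x=\proj{x}$, the same strictly positive $\xi$, and the same reduction of the single-operation case to an instrument whose remaining operations are rank non-decreasing and compatible with indefinite effects. The only deviation is in verifying that $\ee$ is rank non-decreasing: the paper computes $\rank{\ee(\sigma)}=\sum_x\rank{\ii_x(\tra[\sigma])}\geqslant \rank{\tra[\sigma]}\dim(\ha)\geqslant\rank{\tra[\sigma]}\,\rank{\trs[\sigma]}\geqslant\rank{\sigma}$ via \lemref{lemma:rank-partial-trace}, whereas you lower-bound $\ee(\sigma)$ by its pinched part and use $\rank{\sigma}\leqslant\sum_x\rank{\sigma_{xx}}$ --- both steps are valid and equally elementary.
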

\begin{proof}
Consider again the process introduced in   \propref{prop:necessary-sufficient-weak-third-law}. Let $\E \coloneq \{E_x : x\in \xx\}$, with $\xx = \{1, \dots, N\}$,  be an indefinite observable, with $\ii$ an $\E$-compatible instrument. Let $E_1$ and $\ii_1$ be identified with the particular  indefinite effect $E$ and its  $E$-compatible operation $\Phi$, respectively. Since all effects $E_x$ are indefinite, i.e., $\zero < E_x < \onesys$, then all operations $\ii_x$ can be chosen to be rank non-decreasing. See \corref{app-cor:strict-positive-for-any-effect}.    Choose $\ha$ with $\dim(\ha) = |\xx| = N$, and let $\{|x\> : x = 1, \dots , N\}$ be an orthonormal basis for $\ha$.  Let $\ee$ be defined as
    \begin{align*}
        \ee(A \otimes B) = \sum_{x=1}^{N} \ii_x(A) \otimes \tr[B] \proj x
    \end{align*}
    for all $A \in \lo(\hs), B \in \lo(\ha)$. As before, $\ee$ is a channel, and choosing  $Z_x=\proj x$ implements $\ii_x$. All that is left to show is that $\ee$ is rank non-decreasing.

 Let $\rho\sub{\s \aa}$ denote any state in $\s(\hs \otimes \ha)$, with $\rho\sub{\s} := \tra[\rho\sub{\s\aa}]$ and $\rho\sub{\aa} := \trs[\rho\sub{\s\aa}]$ its reduced states. It holds that 
\begin{align*}
\ee(\rho\sub{\s\aa}) = \sum_{x=1}^{N} \ii_x(\rho\sub{\s}) \otimes \proj{x}  \, .  
\end{align*}
But since $\proj x$ are mutually orthogonal rank-1 projections, and $\ii_x$ are rank non-decreasing, we have that 
\begin{align*}
    \rank{\ee(\rho\sub{\s\aa}) } & = \sum_{x=1}^{N} \rank{\ii_x(\rho\sub{\s})} \geqslant \sum_{x=1}^N \rank{\rho\sub{\s}}  = \rank{\rho\sub{\s}} \dim(\ha)  \geqslant \rank{\rho\sub{\s}}  \rank{\rho\sub{\aa}}  \geqslant \rank{\rho\sub{\s\aa}}
\end{align*}
for all $\rho\sub{\s\aa}$, where the final inequality follows from \lemref{lemma:rank-partial-trace}. Therefore,  $\ee$ is rank non-decreasing. Note that by \lemref{lemma:measurement-strong-third-law}, this implies that   the $\E$-channel   $\ii_\xx(\bigcdot) \coloneq \sum_x \ii_x (\bigcdot) \equiv \tra[\ee(\bigcdot \otimes \xi)]$ is  a rank non-decreasing channel.
\end{proof}


\section{Full consistency with thermodynamics}\label{app:strong-adbatic}

A subset of channels that are guaranteed to have a strictly positive fixed state are bistochastic ones, which preserve the complete mixture. It is clear that all bistochastic channels exist in $\opset_\III(\hs)$. This follows from the fact that if $\Phi$ is a bistochastic channel, then the interaction channel  $\ee = \Phi \otimes \idchapp$ is also bistochastic, and that for any strictly positive state preparation $\xi$ it holds that $\Phi(\bigcdot) = \tra[\ee(\bigcdot \otimes \xi)]$, and so the process $(\ha, \xi, \ee, Z = \oneapp)$  will implement the channel $\Phi$. Surprisingly, as we shall soon see,  every channel $\Phi \in \opset_\III(\hs)$ is guaranteed to have a strictly positive state, even if not bistochastic. In order to show this, we first need to introduce some basic concepts regarding the classical action of channels. 

\

\begin{definition}[Classical action]
\label{def:classical-action}
    Let $\Phi$ be a channel acting in $\h$, and let $\varphi \coloneq \{|\varphi_m\>\}$ be an orthonormal basis  that spans $\h$. The $\varphi$-classical action of $\Phi$  is defined as the   matrix $\cT \equiv [\cT _{m,n}]$ with  elements
    \begin{align}
    \label{eq:classical-action}
        \cT_{m,n} \coloneq \<\varphi_m| \Phi(\proj{\varphi_n}) |\varphi_m\> \in [0,1]\, .
    \end{align}    
    \unskip\nobreak\hfill $\square$
\end{definition}
Since  $\Phi$ is trace-preserving, then $\sum_m \cT_{m,n} = 1$. That is, the classical action of a channel is a (column) stochastic matrix. If $\Phi$ is a bistochastic channel, then it also holds that $\sum_n \cT_{m,n} = 1$, and so any $\varphi$-classical action of a bistochastic channel is a bistochastic (or doubly stochastic) matrix. It is straightforward to show that the $\varphi$-classical action $\cT$ can be obtained by
\begin{equation}
\label{eq:T-kraus}
    \cT=\sum_a K_a\odot \overline{K}_a\, ,
\end{equation}
where $\{K_a\}$ is any  Kraus representation of the channel $\Phi$ written as matrices in the $\varphi$ basis, $\overline{A}$ is the $\varphi$-basis matrix representation of an operator $A \in \lo(\h)$ with its elements complex-conjugated,    and $\odot$ denotes the Hadamard (entry-wise) product of two matrices.

\

\begin{definition}[reducible stochastic matrix]
A $d \times d$ stochastic matrix $\cT$ is reducible if and only if there exists a permutation matrix $\Pi$ such that 
\begin{equation}
\label{eq:block_form_matrix}
   \Pi \cT\Pi^{-1}=\left(\begin{array}{@{}c|c@{}}
  \begin{matrix}
     A
  \end{matrix}
  & B \\
\hline
   \zero &
  \begin{matrix}
   C
  \end{matrix}
\end{array}\right)\, ,
\end{equation} 
where $A$ and $C$ are $d_A \times d_A$ and $d_C \times d_C$ square matrices, respectively. Otherwise, $\cT$ is irreducible.
\unskip\nobreak\hfill $\square$
\end{definition}

Note that in the above,  $A$ is itself a stochastic matrix. Moreover, if $\cT$ is the $\varphi$-classical action of a channel $\Phi$, then $\Pi$ can be interpreted as a relabelling of the elements of $\varphi$ so that $\cT$ admits the block structure on the right hand side of \eq{eq:block_form_matrix}.

\

\begin{lemma}
\label{lem:not-bistochasticity}
    Let $\cS$ be a $d \times d$ bistochastic matrix, and $\Pi$ a permutation.  Assume that 
\begin{align*}
    \Pi \cS\Pi^{-1}=\left(\begin{array}{@{}c|c@{}}
  \begin{matrix}
     A
  \end{matrix}
  & B \\
\hline
   \zero &
  \begin{matrix}
   C
  \end{matrix}
\end{array}\right), 
\end{align*}    
where $A$ and $C$ are $d_A \times d_A$ and $d_C \times d_C$ square matrices, respectively. Then $B = \zero$, while  $A$ and $C$ are bistochastic \cite{Perfect1965}.
\unskip\nobreak\hfill $\square$ \end{lemma}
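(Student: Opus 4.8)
The plan is to reduce to the case where $\cS$ is already in block form and then read off the conclusion from elementary row- and column-sum bookkeeping. First I would observe that permutation matrices are bistochastic and that the set of bistochastic matrices is closed under multiplication; hence $\cS' := \Pi \cS \Pi^{-1}$ is again bistochastic, and without loss of generality we may assume $\cS = \cS'$ already has the stated block upper-triangular form, with $A$ of size $d_A \times d_A$, $B$ of size $d_A \times d_C$, $C$ of size $d_C \times d_C$, and the lower-left $d_C \times d_A$ block equal to $\zero$.

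Next I would extract three facts. (i) Since the lower-left block vanishes, each of the last $d_C$ rows of $\cS$ has all its mass inside $C$; as rows of $\cS$ sum to $1$, every row of $C$ sums to $1$, i.e. $C$ is row-stochastic. (ii) Likewise, since the lower-left block vanishes, each of the first $d_A$ columns of $\cS$ has all its mass inside $A$; as columns of $\cS$ sum to $1$, every column of $A$ sums to $1$, i.e. $A$ is column-stochastic. (iii) Summing (ii) over the $d_A$ columns, the total of all entries of $A$ equals $d_A$.

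The crucial step — and essentially the only one requiring an argument — is to show $B = \zero$. I would sum the row-sum identity over the first $d_A$ rows of $\cS$: each such row sums to $1$, so the total of all entries in the first $d_A$ rows is $d_A$; but that total equals $(\text{sum of entries of } A) + (\text{sum of entries of } B)$, which by (iii) is $d_A + (\text{sum of entries of } B)$. Since $B$ is entrywise nonnegative, its entries must sum to $0$, forcing $B = \zero$. I expect this counting to be the main obstacle, though it is mild.

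Finally, with $B = \zero$ in hand, the first $d_A$ rows of $\cS$ lie entirely in $A$ and sum to $1$, so $A$ is also row-stochastic; combined with (ii), $A$ is bistochastic. Symmetrically, the last $d_C$ columns of $\cS$ now lie entirely in $C$ and sum to $1$, so $C$ is column-stochastic, and combined with (i) it is bistochastic. This completes the argument. As an alternative one could invoke the Birkhoff–von Neumann theorem to write $\cS$ as a convex combination of permutation matrices each respecting the block decomposition, but the direct bookkeeping above is shorter.
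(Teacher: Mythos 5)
Your proof is correct and complete. The paper itself gives no proof of this lemma, deferring instead to the citation of Perfect (1965); your row/column-sum counting argument is precisely the standard one, and every step checks out: column-stochasticity of $A$ forces its entries to total $d_A$, so the first $d_A$ row sums leave no mass for the nonnegative block $B$, and once $B=\zero$ the bistochasticity of $A$ and $C$ follows immediately from the remaining row and column sums.
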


\

The definition of reducibility of stochastic matrices, and an inductive application of the above argument, implies that every bistochastic matrix $\cS$ admits a permutation $\Pi$ such that $\Pi \cS \Pi^{-1}=\oplus_\beta \cS_\beta$, where $\cS_\beta$ are irreducible bistochastic matrices. Note that if $\cS$ is irreducible, then the index set $\{\beta\}$ is a singleton, so that $\cS_\beta = \cS$. 

\

\begin{lemma}\label{lemma:bipartite-bistochastic-reducibility}
    Let $\cS$ be a bipartite bistochastic matrix on $\re^{d_\s} \otimes \re^{d_\aa}$, written as
    \begin{align*}
       \cS=\sum_{i,j}  D^{ij}\otimes \hat{e}_i\hat{e}_j^\T\,  
    \end{align*}
 where $D^{ij}$ are non-negative matrices on $\re^{d_\s}$ and $\{\hat e_i\}$ is an orthonormal basis that spans $\re^{d_\aa}$.    Assume that 
  \begin{equation*}
       D^{ij}=\left(\begin{array}{@{}c|c@{}}
     \begin{matrix}
       A^{ij}
     \end{matrix}
     & B^{ij} \\
    \hline\rule{0in}{.14in}
     \zero &
    \begin{matrix}
     C^{ij}
    \end{matrix}
    \end{array}\right)\, \forall i,j \, ,
    \end{equation*}
where the dimensions of the blocks are the same for all $i,j$. The following hold:
\begin{enumerate}[(i)]
     \item $B^{ij} = \zero$ for all $i,j$. 
      \item $ \cS  = \oplus_\beta \cS_\beta$, with $\cS_\beta$ bistochastic matrices. 
\end{enumerate}
 
\nobreak\hfill $\square$ \end{lemma}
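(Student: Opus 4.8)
The plan is to reduce the bipartite statement to the univariate fact \lemref{lem:not-bistochasticity} by choosing a single reindexing of the product basis of $\re^{d_\s}\otimes\re^{d_\aa}$ that simultaneously exploits the common block structure of all the $D^{ij}$. First I would record the partition $\{1,\dots,d_\s\}=S_A\sqcup S_C$ with $|S_A|=d_A$, $|S_C|=d_C$ determined by that block structure: writing the product basis as $\hat f_m\otimes\hat e_i$, the components of $\cS$ are $\cS_{(m,i),(n,j)}=(D^{ij})_{m,n}$, and the hypothesis says exactly that this vanishes whenever $m\in S_C$ and $n\in S_A$, for every $i,j$. Next I would fix a permutation $\pi$ of the index pairs $(m,i)$ that sends all pairs with $m\in S_A$ to the first $d_Ad_\aa$ positions and those with $m\in S_C$ to the last $d_Cd_\aa$ positions (this is possible since $|S_A\times\{1,\dots,d_\aa\}|=d_Ad_\aa$), and let $\Pi$ be the associated permutation matrix.

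By construction the lower-left block of $\Pi\cS\Pi^{-1}$ collects the entries $(D^{ij})_{m,n}$ with $m\in S_C$, $n\in S_A$, all of which are zero, so $\Pi\cS\Pi^{-1}$ is block upper triangular with square diagonal blocks $\mathcal A$ of size $d_Ad_\aa$ and $\mathcal C$ of size $d_Cd_\aa$. Since $\cS$ is bistochastic, \lemref{lem:not-bistochasticity} applies verbatim to $\cS$ and $\Pi$, so the off-diagonal block $\mathcal B$ vanishes and $\mathcal A$, $\mathcal C$ are bistochastic.

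Translating $\mathcal B=\zero$ back through the reindexing, the entries of $\mathcal B$ are precisely the $(D^{ij})_{m,n}$ with $m\in S_A$, $n\in S_C$, i.e. the blocks $B^{ij}$; hence $B^{ij}=\zero$ for all $i,j$, which is item (i). For item (ii), $\mathcal B=\zero$ gives $\Pi\cS\Pi^{-1}=\mathcal A\oplus\mathcal C$; and since the $(1,1)$ and $(2,2)$ blocks of each $D^{ij}$ are $A^{ij}$ and $C^{ij}$, one reads off $\mathcal A=\sum_{i,j}A^{ij}\otimes\hat e_i\hat e_j^\T$ and $\mathcal C=\sum_{i,j}C^{ij}\otimes\hat e_i\hat e_j^\T$, both bistochastic, so $\cS$ (up to the permutation $\Pi$) is the direct sum $\oplus_\beta\cS_\beta$ with $\cS_1=\mathcal A$, $\cS_2=\mathcal C$.

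I do not expect a genuine obstacle here: the only delicate point is the bookkeeping verifying that the chosen product-basis permutation turns the \emph{shared} block structure of all the $D^{ij}$ into a \emph{single} block structure of $\cS$ — once that is set up correctly, everything follows immediately from \lemref{lem:not-bistochasticity}. It is worth noting that the argument uses nothing about the $D^{ij}$ beyond non-negativity and the common zero block, and that iterating it (whenever $\mathcal A$ or $\mathcal C$ is itself reducible in the same sense) yields the finer decomposition of $\cS$ into irreducible bistochastic blocks.
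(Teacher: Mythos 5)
Your proof is correct and follows essentially the same route as the paper's: both identify that $\cS_{(m,i),(n,j)}=(D^{ij})_{m,n}$, so the common block structure of the $D^{ij}$ makes $\cS$ itself block upper-triangular in the (suitably ordered) product basis, and then both invoke \lemref{lem:not-bistochasticity} to kill the off-diagonal block and obtain bistochasticity of the diagonal blocks. The only difference is cosmetic: you introduce an explicit permutation $\Pi$ of the product indices, whereas the paper simply writes $\cS$ directly in the block form $\bigl(\begin{smallmatrix}\mathbb{A} & \mathbb{B}\\ \zero & \mathbb{C}\end{smallmatrix}\bigr)$ with $\mathbb{A}=\sum_{i,j}A^{ij}\otimes\hat e_i\hat e_j^\T$, etc.
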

\begin{proof}
Let us note that 
\begin{align*}
 \cS  =  \left(\begin{array}{@{}c|c@{}}
     \begin{matrix}
       \sum_{i,j} A^{ij}\otimes \hat{e}_i\hat{e}_j^\T
     \end{matrix}
     & \sum_{i,j} B^{ij}\otimes \hat{e}_i\hat{e}_j^\T \\
     \hline\rule{0in}{.16in}
     \zero &
    \begin{matrix}
     \sum_{i,j} C^{ij}\otimes \hat{e}_i\hat{e}_j^\T
    \end{matrix}
    \end{array}\right)  =: \left(\begin{array}{@{}c|c@{}}
     \begin{matrix}
       \mathbb{A}
     \end{matrix}
     & \mathbb{B} \\
    \hline 
     \zero &
    \begin{matrix}
     \mathbb{C}
    \end{matrix}
    \end{array}\right)\,
    \, . 
\end{align*}

 By \lemref{lem:not-bistochasticity}, since $\cS$ is bistochastic then $\mathbb{A}$ and $\mathbb{C}$ are bistochastic, while  $\mathbb{B} =\zero \implies B^{ij} = \zero$ must hold for all $i,j$. It follows that $\cS  = \mathbb{A} \oplus \mathbb{C} \equiv \oplus_\beta \cS_\beta$.
    
\end{proof}

\

\begin{lemma}
\label{lem:unique-fixed-point}
    Let $\Phi$ be a channel acting in $\h$, and let $\varphi$ be an eigen-basis of a fixed state $\rho$ of $\Phi$. Assume that the $\varphi$-classical action  of $\Phi$ is irreducible. Then $\rho$ is strictly positive.
\unskip\nobreak\hfill $\square$ \end{lemma}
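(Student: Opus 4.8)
The plan is to reduce the statement to a standard Perron--Frobenius argument applied to the classical action $\cT$. The key observation is that the diagonal of the fixed-point equation $\rho = \Phi(\rho)$, taken in the eigenbasis $\varphi$, is governed precisely by $\cT$. Concretely, write $\rho = \sum_n p_n \proj{\varphi_n}$ with $p_n \geqslant 0$ and $\sum_n p_n = 1$, and take the $(m,m)$ matrix element of $\Phi(\rho) = \rho$ in the $\varphi$ basis:
\begin{align*}
p_m = \<\varphi_m|\rho|\varphi_m\> = \<\varphi_m| \Phi(\rho)|\varphi_m\> = \sum_n \<\varphi_m|\Phi(\proj{\varphi_n})|\varphi_m\> \, p_n = \sum_n \cT_{m,n}\, p_n \, .
\end{align*}
Hence, writing $\vec p := (p_1,\dots,p_d)^\T$ with $d := \dim(\h)$, the probability vector $\vec p$ is a non-negative eigenvector of $\cT$ with eigenvalue $1$, and $\vec p \neq 0$ since $\sum_m p_m = 1$.

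Next I would recall, from \defref{def:classical-action}, that $\cT$ is a column-stochastic matrix: its entries lie in $[0,1]$ and each column sums to one. Consequently its spectral radius equals $1$ (the all-ones row vector is a left eigenvector with eigenvalue $1$, and stochasticity together with Gershgorin's theorem forces every eigenvalue to have modulus at most $1$). By hypothesis $\cT$ is irreducible, so the Perron--Frobenius theorem applies: the spectral radius $1$ is a simple eigenvalue of $\cT$, the associated eigenvector has strictly positive entries, and---crucially---every non-negative eigenvector of $\cT$ is a positive multiple of this Perron vector. Since $\vec p$ is a non-zero non-negative eigenvector with eigenvalue $1$, it follows that $p_m > 0$ for every $m$.

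Finally, because $\rho$ is diagonal in the $\varphi$ basis with diagonal (and hence eigenvalue) entries $p_m > 0$, all eigenvalues of $\rho$ are strictly positive, i.e. $\rho > \zero$, which proves the claim. The main obstacle, such as it is, is to invoke the correct form of the Perron--Frobenius theorem---specifically the statement that for an irreducible non-negative matrix the only non-negative eigenvectors belonging to the spectral radius are the (strictly positive) Perron vectors, which is exactly what upgrades ``$\vec p \geqslant 0$'' to ``$\vec p > 0$''. All of the quantum-mechanical content of the lemma is already packed into the elementary identity $\cT \vec p = \vec p$ established in the first step.
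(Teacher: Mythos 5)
Your proposal is correct and follows essentially the same route as the paper's proof: diagonalise $\rho$ in the basis $\varphi$, observe that its diagonal probability vector is a non-negative fixed point of the column-stochastic matrix $\cT$, and invoke the Perron--Frobenius theorem for the irreducible $\cT$ to conclude that this fixed point must be the strictly positive Perron vector. Your write-up is, if anything, slightly more explicit than the paper's about which form of Perron--Frobenius is being used.
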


\begin{proof}
    If $\rho $ is diagonalisable with respect to $\varphi$, and if $\Phi(\rho) = \rho$, then $\bm{q}^\rho \equiv  [\bm{q}_m^\rho \coloneq \<\varphi_m| \rho |\varphi_m\> ] $ is a fixed point of $\cT$, i.e., $\cT \bm{q}^\rho = \bm{q}^\rho$, where $\cT$ is the $\varphi$-classical action of $\Phi$. Since $\rho$ is a state, then $\bm{q}^\rho$ is non-vanishing, i.e., $\bm{q}^\rho_m \ne 0$ for some $m$. If  $\cT$ is irreducible, then by the Perron-Frobenius theorem it has a unique non-vanishing fixed point $\bm{p}$, which is strictly positive, i.e., $\bm{p}_m >0$ for all $m$ \cite{D-Sere-Matrices}.  It follows that  $\bm{q}^\rho = \bm{p}$ must hold, and so $\rho$ must be strictly positive. 
\end{proof}

We are now ready to prove our claim, i.e., that any channel $\Phi \in \opset_\III(\hs)$ necessarily has a strictly positive fixed state:

\

\begin{prop}
\label{prop:Kraus-block-form}
    Let $\Phi : \lo(\hs) \to \lo(\hs)$ be a channel acting in $\hs$. Assume that $\Phi$  can be implemented as 
    \begin{equation}
    \label{eq:full-rank-bistochastic-app}
    \Phi(\bigcdot) = \tr\sub{\aa} [\ee ( \bigcdot \otimes \xi )]\, ,
    \end{equation}    
    where $\xi$ is a strictly positive state on $\ha$ and $\ee$ is a bistochastic channel acting in $\hs \otimes \ha$. There exists a ``maximal'' set of orthocomplete projections $\{P_\beta\}$ on $\hs$  such that the following hold:
\begin{enumerate}[(i)]
    \item For each $\beta$,  the  operation   $ \Phi_\beta(\bigcdot) \coloneq  P_\beta\Phi(P_\beta \bigcdot P_\beta) P_\beta$ satisfies $\Phi_\beta(\bigcdot) = \Phi(P_\beta \bigcdot P_\beta)$, and the restriction of $\Phi_\beta$ from $\lo(\hs) \to \lo(\hs)$ to $\lo(P_\beta \hs) \to \lo(P_\beta \hs)$, also denoted $\Phi_\beta$, is a channel. 
    \item For any state $\rho$ on $\hs$ such that $[\rho,P_\beta] = \zero$ for all $\beta$, it holds that 
    \begin{align*}
        \Phi(\rho) = \sum_\beta \Phi_\beta( \rho ) \, .
    \end{align*}
    \item For every $\beta$ and any orthonormal basis $\varphi^\beta$ that spans $P_\beta \hs$, the $\varphi^\beta$-classical action $\mathbb{T}_{\varphi^\beta}$ of  $\Phi_\beta$ is irreducible.

    \item For any $\beta$, let $\sigma_\beta$ be a state on $P_\beta \hs$ such that $\Phi_\beta(\sigma_\beta) = \sigma_\beta$. Then $\sigma_\beta$ has full rank in $P_\beta \hs$.

    \item Let $\{p_\beta\}$ be a probability distribution, and $\sigma_\beta$ fixed states of $\Phi_\beta$. Then $\rho_0 = \sum_\beta p_\beta \sigma_\beta$ is a fixed state of $\Phi$. 
    
    \item $\ff(\Phi)$ contains a strictly positive state.

\end{enumerate}
\unskip\nobreak\hfill $\square$ \end{prop}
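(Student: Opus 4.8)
The plan is to build the family $\{P_\beta\}$ from the classical action of the bistochastic channel $\ee$, prove the single structural fact that makes the construction work, and then read off (i)--(vi) with little further effort. Fix a fixed state $\rho_0 = \Phi(\rho_0)$ (which exists by the Schauder--Tychonoff theorem), an eigenbasis $\varphi = \{|\varphi_m\>\}$ of $\rho_0$, and an eigenbasis $\eta = \{|\eta_i\>\}$ of $\xi = \sum_i \xi_i\,\proj{\eta_i}$ with all $\xi_i > 0$. Let $\cS$ be the $(\varphi\otimes\eta)$-classical action of $\ee$; since $\ee$ is bistochastic, $\cS$ is a bipartite bistochastic matrix, which I write as $\cS = \sum_{i,j} D^{ij}\otimes \hat{e}_i\hat{e}_j^{\T}$ with each $D^{ij}$ an entrywise non-negative $d_\s\times d_\s$ matrix. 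A direct computation from $\Phi(\bigcdot) = \tra[\ee(\bigcdot\otimes\xi)]$ gives the $\varphi$-classical action of $\Phi$ as $\cT = \sum_{i,j}\xi_j D^{ij}$.

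The crux is the following: \emph{in any orthonormal basis, if the classical action of a channel in $\opset_\III(\hs)$ is reducible then it is in fact block diagonal.} Indeed, suppose a permutation $\Pi$ puts $\cT$ into block upper-triangular form with vanishing lower-left block; since every entry of $\Pi\cT\Pi^{-1}$ is a sum of the entries of the $\Pi D^{ij}\Pi^{-1}$ weighted by the \emph{strictly positive} $\xi_j$, the lower-left block of every $\Pi D^{ij}\Pi^{-1}$ vanishes as well. \lemref{lemma:bipartite-bistochastic-reducibility}, applied to $(\Pi\otimes\one)\cS(\Pi\otimes\one)^{-1}$, then forces the upper-right blocks of all $D^{ij}$ to vanish too, so the corresponding block of $\cT = \sum_{i,j}\xi_j D^{ij}$ vanishes. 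Applying this recursively to each diagonal sub-block (the recursion terminates since $\dim(\hs)<\infty$) yields a permutation for which $\cT = \bigoplus_\beta \cT_\beta$ with each $\cT_\beta$ irreducible and every $D^{ij}$ block diagonal with respect to the induced partition $\{S_\beta\}$ of the $\varphi$-indices. Set $P_\beta \coloneq \sum_{m\in S_\beta}\proj{\varphi_m}$; calling $\{P_\beta\}$ \emph{maximal} means we further refine, now using arbitrary orthonormal bases, until no block admits a nontrivial splitting of this kind --- again a finite process, since each refinement strictly increases the number of blocks.

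For (i): block-diagonality of the $D^{ij}$ says that $\ee(\proj{\varphi_n}\otimes\proj{\eta_j})$ has vanishing diagonal outside $P_\beta\hs\otimes\ha$ whenever $n\in S_\beta$; positivity of this operator then gives $\supp\,\ee(\proj{\varphi_n}\otimes\proj{\eta_j}) \subseteq P_\beta\hs\otimes\ha$, and a domination argument (every positive operator supported on $P_\beta\hs$ is $\leqslant$ a multiple of $\sum_{n\in S_\beta}\proj{\varphi_n}$, and every positive operator on $\ha$ is $\leqslant$ a multiple of $\oneapp$) followed by linearity upgrades this to: $\ee$ leaves $\lo(P_\beta\hs\otimes\ha)$ invariant. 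Applying this to $(P_\beta\,\bigcdot\, P_\beta)\otimes\xi$ and tracing out $\ha$ gives $\Phi(P_\beta\,\bigcdot\, P_\beta) = P_\beta\Phi(P_\beta\,\bigcdot\, P_\beta)P_\beta = \Phi_\beta(\bigcdot)$, which is trace preserving on $\lo(P_\beta\hs)$, i.e.\ a channel there; moreover $\ee(P_\beta\otimes\oneapp)$ is supported in $P_\beta\hs\otimes\ha$, is $\leqslant\one$, and the $\ee(P_\beta\otimes\oneapp)$ sum to $\one = \sum_\beta P_\beta\otimes\oneapp$, so each equals $P_\beta\otimes\oneapp$, whence the restriction $\ee_\beta$ of $\ee$ to $\lo(P_\beta\hs\otimes\ha)$ is bistochastic and implements $\Phi_\beta$ with the strictly positive preparation $\xi$ --- hence $\Phi_\beta\in\opset_\III(P_\beta\hs)$. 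Item (ii) is then immediate: $[\rho,P_\beta]=\zero$ for all $\beta$ forces $\rho = \sum_\beta P_\beta\rho P_\beta$, so $\Phi(\rho) = \sum_\beta\Phi_\beta(\rho)$ by (i) and linearity. Item (iii) follows from maximality together with the crux applied to $\Phi_\beta\in\opset_\III(P_\beta\hs)$: a reducible classical action of $\Phi_\beta$ in any basis would be block diagonal, producing a strictly finer decomposition, a contradiction. Finally, for (iv)--(vi): each $\Phi_\beta$ has a fixed state $\sigma_\beta$ (Schauder--Tychonoff), in an eigenbasis of which its classical action is irreducible by (iii), so \lemref{lem:unique-fixed-point} gives $\sigma_\beta$ full rank in $P_\beta\hs$; since $\sigma_\beta = P_\beta\sigma_\beta P_\beta$ we have $\Phi(\sigma_\beta) = \Phi_\beta(\sigma_\beta) = \sigma_\beta$, so for any probability vector $\{p_\beta\}$ the operator $\rho_0 = \sum_\beta p_\beta\sigma_\beta$ is a fixed state of $\Phi$, and choosing all $p_\beta > 0$ makes it strictly positive on $\hs = \bigoplus_\beta P_\beta\hs$.

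I expect the main obstacle to be the crux, and within it two points require care: transferring reducibility of $\cT$ to joint block-triangularity of the matrices $D^{ij}$ so that \lemref{lemma:bipartite-bistochastic-reducibility} (which is stated for the bipartite matrix $\cS$, not for $\cT$) becomes applicable --- this is precisely where strict positivity of $\xi$ enters --- and the passage back from a block-diagonal \emph{classical action} (a statement about diagonal matrix elements only) to genuine invariance of the operator subalgebras $\lo(P_\beta\hs\otimes\ha)$ and $\lo(P_\beta\hs)$ under $\ee$ and $\Phi$, which needs the positivity-and-domination step rather than mere linearity. Verifying that each restricted $\ee_\beta$ is again bistochastic, so that the argument can be iterated to obtain (iii), is the remaining bookkeeping.
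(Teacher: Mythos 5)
Your proof is correct, and its engine is the same as the paper's: the identity $\cT=\sum_{i,j}\xi_j D^{ij}$ with $\xi_j>0$ and $D^{ij}$ entrywise non-negative transfers any block-triangular structure of $\cT$ to all the $D^{ij}$, \lemref{lemma:bipartite-bistochastic-reducibility} then kills the off-diagonal block, and \lemref{lem:unique-fixed-point} (Perron--Frobenius) delivers the full-rank fixed states on the irreducible blocks. Where you diverge is in how $\{P_\beta\}$ is built and how items (i)--(ii) are justified. The paper defines $\{P_\beta\}$ as a maximal orthocomplete family commuting with all Kraus operators of $\Phi$, so that $\Phi(P_\beta\,\bigcdot\,P_\beta)=P_\beta\Phi(P_\beta\,\bigcdot\,P_\beta)P_\beta$ and trace preservation on $\lo(P_\beta\hs)$ are immediate algebra, and it closes the maximality contradiction by passing from a block-diagonal classical action back to block-diagonal Kraus operators via $\cT=\sum_a K_a\odot\overline{K}_a$. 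You instead read $\{P_\beta\}$ off the block structure of the classical action itself and must therefore supply two things the paper gets for free: the positivity/domination argument upgrading vanishing diagonal entries of $\ee(\proj{\varphi_n}\otimes\proj{\eta_j})$ to genuine invariance of $\lo(P_\beta\hs\otimes\ha)$ under $\ee$, and the verification that $\ee(P_\beta\otimes\oneapp)=P_\beta\otimes\oneapp$ so that the restricted interaction $\ee_\beta$ is again bistochastic and the crux can be iterated on each block. Both of these steps are carried out correctly, and the explicit bistochasticity of $\ee_\beta$ is a clean way to make the recursive refinement self-contained; the Kraus-operator route is shorter but hides that bookkeeping inside the maximality of the commuting family. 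The two arguments are otherwise interchangeable, and your treatment of (iv)--(vi) coincides with the paper's.
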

\begin{proof}

Let $\{K_a\}_a$ be a Kraus representation of a channel $\Phi$ acting in $\hs$, i.e., $\Phi(\bigcdot) = \sum_a K_a \bigcdot K_a^*$. There exists a set of orthocomplete projections $\{P_\beta\}$ such that  $K_a=\sum_\beta P_\beta K_a P_\beta$ holds for all $a$, which is equivalent to $[K_a, P_\beta] = \zero$ for all $a, \beta$. To see that such a set of projections always exists, note that by choosing $\{\beta\}$ as a singleton, so that $P_\beta = \onesys$, the above properties trivially hold.  It is trivial to see that $\{K_a^\beta\}_a$, where $K_a^\beta \coloneq P_\beta K_a P_\beta$, is a Kraus representation for the  operation  $\Phi_\beta (\bigcdot) \coloneq P_\beta \Phi( P_\beta \bigcdot P_\beta) P_\beta$. 

Now we shall prove item (i). That $\Phi_\beta(\bigcdot) =  \Phi(P_\beta \bigcdot P_\beta)$ follows  immediately from the fact that $P_\beta$ are projections and that $[K_a, P_\beta] = [K_a^*, P_\beta] = \zero$ for all $a$. To see that $\Phi_\beta$ is a channel when restricted to  $\lo(P_\beta \hs ) \to \lo(P_\beta \hs )$,  it is sufficient to note that the unit in $P_\beta \hs$ is the projection $P_\beta$, and that 
\begin{align*}
    \Phi_\beta^*(P_\beta) = \sum_a {K^\beta_a}^* P_\beta {K^\beta_a}  = \sum_a P_\beta K_a^* P_\beta K_a P_\beta  = \sum_a K_a^* K_a P_\beta  = P_\beta \, .
\end{align*}
Here, we have used the fact that $[K_a, P_\beta] = [K_a^*, P_\beta] = \zero$ for all $a$,  that $P_\beta$ is a projection, and that $\sum_a K_a^* K_a = \Phi^*(\onesys) = \onesys$. 

Now we  prove item (ii). Consider a state $\rho$ such that  $[\rho, P_\beta] = \zero \, \forall \beta$, which implies that $\rho = \sum_\beta P_\beta \rho P_\beta$. Since $\Phi_\beta(\bigcdot) =\Phi(P_\beta \bigcdot P_\beta)$, i.e., if the input of $\Phi$ is in $P_\beta \hs$ then the output is guaranteed to also be in $P_\beta \hs$,  it holds that
\begin{align*}
    \Phi(\rho) = \sum_\beta \Phi(P_\beta \rho P_\beta) = \sum_\beta \Phi_\beta(\rho) \, . 
\end{align*}

Now we prove item (iii), i.e., show that if $\Phi$ can be implemented by a bistochastic interaction with a strictly positive auxiliary system as in \eq{eq:full-rank-bistochastic-app}, then  a ``maximal'' set of orthocomplete projections $\{P_\beta\}$ exists such that the classical action of the channel $\Phi_\beta$ acting in $P_\beta \hs \subseteq \hs$ is irreducible for any ONB that spans the subspace $P_\beta \hs$. 

Denote by $\varphi^\beta\coloneq\{\ket{\varphi_{i_\beta}^\beta}\}_{i_\beta=1}^{d_\beta}$, where $d_\beta = \rank{P_\beta}$,  any  orthonormal basis that spans $P_\beta \hs \equiv \supp(P_\beta)$, i.e.,  $\< \varphi_{j_{\beta'}}^{\beta'}| \varphi_{i_\beta}^{\beta}\> = \delta_{\beta, \beta'}\delta_{i_\beta,j_\beta}$. Then any orthonormal basis  $\varphi = \{|\varphi_m\>\}_m$ that spans $\hs$ can be constructed as  $\varphi= \cup_\beta \varphi^\beta  = \{\ket{\varphi^\beta_{i_\beta}}\}_{\beta, i_\beta}$.  

The Kraus operators $K_a$, in the $\varphi$-matrix representation, read
\begin{equation}
    \label{eq:block-kraus}
     K_a=\bigoplus_\beta K_a^\beta\, ,
\end{equation}
where $K_a^\beta$ are matrices in the $\varphi^\beta$ representation.   On the one hand, by Eqs.~\eqref{eq:T-kraus} and \eqref{eq:block-kraus}, the $\varphi$-classical action of $\Phi$, i.e.,  $\cT$, is given by
    \begin{equation}
    \label{eq:T-direct-sum}
        \cT=\sum_a K_a\odot\overline{K}_a=\bigoplus_\beta \cT_{\varphi^\beta}\, ,
    \end{equation}
where $\cT_{\varphi^\beta}=\sum_a K_a^\beta\odot\overline{K}_a^\beta$ is the $\varphi^\beta$-classical action of $\Phi_\beta$. On the other hand, by definition \eqref{eq:classical-action} and using the eigen-decomposition $\xi=\sum q_j \proj{\psi_j}$ in \eq{eq:full-rank-bistochastic-app}, the matrix elements of $\cT$ read
     \begin{align}
    \label{eq:T-and-S-classical-actions}
        \cT_{m,n}&=\bra{\varphi_m}\Phi\left(\proj{\varphi_n}\right)\ket{\varphi_m}\nonumber\\
        &=\bra{\varphi_m}\tr\sub{\aa}\left[\ee\left(\proj{\varphi_n}\otimes\sum q_j\proj{\psi_j}\right)\right]\ket{\varphi_m}\nonumber\\
        &=\sum_{i,j}q_j\bra{\varphi_m}\bra{\psi_i}\ee\left(\proj{\varphi_n}\otimes\proj{\psi_j}\right)\ket{\varphi_m}\ket{\psi_i}\nonumber\\
        &=\sum_{i,j}q_j\cS_{m, n; i, j}\, ,
    \end{align}
where $\cS$ is the $(\varphi\psi)$-classical action of $\ee$, and $q_j>0$ for all $j$ by the assumption of strictly positivity of $\xi$. Since $\ee$ is a bistochastic channel acting in $\hs \otimes \ha$, then $\cS$ is a bistochastic matrix, which  can be written as 
    \begin{equation}
    \label{eq:S-block-structure}
        \cS=\sum_{i,j}  D^{ij}\otimes \hat{e}_i\hat{e}_j^\T\, ,
    \end{equation}
where $\{\hat{e}_i\}$ is an orthonormal basis spanning $\mathbb{R}^{d_\aa}$, with $d_\aa = \dim(\ha)$,  and the matrices $D^{ij}$ are entry-wise non-negative and of dimension $d_\s = \dim(\hs)$. Noting that here, the matrix elements of $\cS$ read
\begin{align*}
\cS_{m, n ; i, j } = D^{ij}_{m,n}\, ,   
\end{align*}
Eqs. \eqref{eq:T-and-S-classical-actions} and \eqref{eq:S-block-structure}  imply that 
    \begin{equation*}
        \cT_{m,n}=\sum_{ij}q_jD^{ij}_{m,n} \, .
    \end{equation*}
Given that $\cT$ has the block form of Eq.~\eqref{eq:T-direct-sum}, then due to the entry-wise non-negativity of $D^{ij}$ and positivity of $q_j$ for all $j$, it follows that for all $i,j$,  the matrices $D^{ij}$  must also admit this block-diagonal structure. That is to say, 
    \begin{equation*}
        D^{ij}=\bigoplus_\beta D^{ij}_\beta\, ,
    \end{equation*}
where $D^{ij}_\beta$ are entry-wise non-negative matrices of dimension $d_\beta$. This has two consequences. First, due to Eq.~\eqref{eq:S-block-structure} and \lemref{lemma:bipartite-bistochastic-reducibility}, it holds that
\begin{align*}
 \cS  =\bigoplus_\beta \cS_\beta\, ,    
\end{align*}
with each block 
    \begin{equation}
    \label{eq:blocks-of-S}
        \cS_\beta=\sum_{ij}  D^{ij}_\beta\otimes \hat{e}_i\hat{e}_j^\T\, 
    \end{equation}
of dimension $ d_\beta d_\aa$ being bistochastic itself. Second, the $\varphi^\beta$-classical action of $\Phi_\beta$ is given by
    \begin{equation}
    \label{eq:beta-block-of-T}
        \cT_{\varphi^\beta}=\sum_{ij} q_j D^{ij}_\beta\, .
    \end{equation}
Now, we proceed with the proof of irreducibility of $\cT_{\varphi^\beta}$ by contradiction. Specifically, we will show that if $\cT_{\varphi^\beta}$ is reducible, then  $\{P_\beta\}$ is not maximal, in the sense that $P_\beta$ can be decomposed into  smaller orthogonal projections $P_{\beta, \alpha}, \sum_\alpha P_{\beta, \alpha} = P_{\beta}$. Assume that there exist a subspace $P_\beta\hs$ and an orthonormal basis $\varphi^\beta$ spanning it such that  $\cT_{\varphi^\beta}$ is reducible and can be brought into the form of \eq{eq:block_form_matrix}. This means that for all $i,j$, the matrix $D_{\beta}^{ij}$ has this block form, i.e., up to some permutation $\Pi_\beta$ of the basis $\varphi^\beta$ we may write 
    \begin{equation*}
       D^{ij}_\beta=\left(\begin{array}{@{}c|c@{}}
     \begin{matrix}
       A^{ij}
     \end{matrix}
     & B^{ij} \\
    \hline\rule{0in}{.14in}
     \zero &
    \begin{matrix}
     C^{ij}
    \end{matrix}
    \end{array}\right)\, \forall i,j \, ,
    \end{equation*}
where the dimensions of the blocks are the same for all $i,j$.    By \eq{eq:blocks-of-S} and \lemref{lemma:bipartite-bistochastic-reducibility}, it must hold that $B^{ij} = \zero$ for all $i,j$. As such,  we have that $D_\beta^{ij} = \oplus_\alpha D_{\beta,\alpha}^{ij}$ for all $i,j$, where $D_{\beta,1}^{ij} = A^{ij}$ and $D_{\beta,2}^{ij} = C^{ij}$,  and so by \eq{eq:beta-block-of-T} it holds that $\cT_{\varphi^\beta} = \oplus_\alpha \cT_{\varphi^{\beta,\alpha}}$. By \eq{eq:T-kraus}, this implies that $K_{a}^\beta = \oplus_\alpha K_{a}^{\beta,\alpha}$, and so  there is a smaller, or a more refined, orthocomplete set of projections $P_{\beta, \alpha}$ for which items (i) and (ii) hold.   

Now we  prove item (iv). By item (i)  the operation $\Phi_\beta$ is a channel  acting in $P_\beta \hs$. Due to the Schauder–Tychonoff fixed point theorem, all channels acting in a finite-dimensional Hilbert space have at least one fixed state in that space. As such, there exists a state $\sigma_\beta \in \lo(P_\beta \hs)$ such that $\Phi_\beta (\sigma_\beta) = \sigma_\beta$. Let $\varphi^\beta$ be an eigenbasis of $\sigma_\beta$, and let $\cT_{\varphi^\beta}$ be the $\varphi^\beta$-classical action of $\Phi_\beta$.   By item (iii), the classical action of $\Phi_\beta$ is irreducible for any ONB that spans $P_\beta \hs$. It follows that $\cT_{\varphi^\beta}$ is irreducible.  By \lemref{lem:unique-fixed-point}, it follows that $\sigma_\beta$ must be strictly positive, i.e., it has full rank in $P_\beta \hs$. 

Now we  prove item (v). Consider the convex combination $\rho_0 = \sum_\beta p_\beta \sigma_\beta$.   Since $\sigma_\beta = P_\beta \sigma_\beta$, then $[\rho_0,P_\beta]= \zero \, \forall \beta \iff \rho_0 = \sum_\beta P_\beta \rho_0 P_\beta$ holds. By item (ii)  it holds that
\begin{align*}
    \Phi(\rho_0) &= \sum_\beta \Phi_\beta(\rho_0) = \sum_\beta p_\beta \Phi_\beta(\sigma_\beta) = \sum_\beta p_\beta \sigma_\beta = \rho_0 \,.
\end{align*}
The second equality follows from the fact that $\Phi_\beta(\rho_0) = \Phi_\beta(P_\beta \rho_0 P_\beta) = p_\beta \Phi_\beta(\sigma_\beta)$, and  the penultimate equality follows from the fact that $\Phi_\beta(\sigma_\beta) = \sigma_\beta$. 

Finally, we shall prove item (vi). If in the above we choose $p_\beta >0$,  $\rho_0$ has full rank in $\hs$. It follows that $\Phi$ has a strictly positive  fixed state $\rho_0$.

\end{proof}

\section{Fixed-points of  measurement channels consistent with the strong third law}\label{app:fixed-point-measurement}

In this section, we shall provide the full proof for item (v) of \thmref{thm:non-ddisturbing-thermo-nogo}. To this end, we need to explore in more depth the properties of the fixed points of the $\E$-channel $\ii_\xx$, when $\ii$ is consistent with the strong third law, i.e., when $\ii \in \inset_\II(\hs)$. 

\

\begin{lemma}\label{lemma:fixed-point-rank-non-decreasing}
 Let  $(\ha, \xi, \ee, \Z)$ be a  process  for an $\E$-compatible instrument $\ii$ acting in $\hs$.  Consider a state $\rho \in \ff(\ii_\xx)$, with  support projection $P$. If  $\ee$ is rank non-decreasing and $\xi$ is strictly positive,   the following hold:
 \begin{enumerate}[(i)]

      \item   $\supp(\ee(\rho \otimes \xi)) = P\hs \otimes \ha$, i.e., $\ee(\rho \otimes \xi)$ has full rank in $P\hs \otimes \ha$.
      \item If $\rank{E_x} =1$ for some $x$,  then   $\rho > \zero$. 
 \end{enumerate}
\unskip\nobreak\hfill $\square$ \end{lemma}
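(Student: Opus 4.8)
The plan is to prove (i) by pinning down the rank of $\ee(\rho\otimes\xi)$ from both sides, and then to bootstrap (ii) from (i) together with a second rank count — crucially applied not to $\rho$ but to a strictly positive input state.

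For (i) I would set $\sigma:=\ee(\rho\otimes\xi)$ and observe, via \eqref{eq:instrument-implementation} and $\rho\in\ff(\ii_\xx)$, that $\tra[\sigma]=\ii_\xx(\rho)=\rho$, while $\trs[\sigma]=\Lambda(\rho)$ with $\Lambda$ the conjugate channel of \eqref{eq:conjugate-channel}. On one hand \lemref{lemma:rank-partial-trace} gives $\rank{\sigma}\leqslant\rank{\rho}\,\rank{\Lambda(\rho)}\leqslant\rank{\rho}\dim(\ha)$; on the other hand, since $\ee$ is rank non-decreasing and $\xi$ is strictly positive, $\rank{\sigma}\geqslant\rank{\rho\otimes\xi}=\rank{\rho}\dim(\ha)$. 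Hence every inequality is tight, so $\Lambda(\rho)$ is full rank on $\ha$ and $\rank{\sigma}=\rank{\rho}\dim(\ha)=\dim(P\hs\otimes\ha)$. Finally, positivity of $\sigma$ together with the fact that its reduction $\tra[\sigma]=\rho$ is supported on $P\hs$ forces $\supp(\sigma)\subseteq P\hs\otimes\ha$ — a positive bipartite operator lives in $P\hs\otimes\ha$ whenever its reduction on $\hs$ does — and comparing dimensions with the rank count upgrades this to $\supp(\sigma)=P\hs\otimes\ha$.

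For (ii) I would fix an outcome $x$ with $\rank{E_x}=1$ and write $E_x=\lambda\proj\psi$ with $\lambda\in(0,1]$. Using the factorisation $\ii_x^*(A)=\sqrt{E_x}\,\Xi_x^*(A)\,\sqrt{E_x}$ for some channel $\Xi_x$ (as recalled in the proof of \lemref{lemma:measurement-weak-third-law}), equivalently $\ii_x(B)=\Xi_x(\sqrt{E_x}\,B\,\sqrt{E_x})$, and $\sqrt{E_x}=\sqrt{\lambda}\proj\psi$, this collapses to $\ii_x(B)=\lambda\langle\psi|B|\psi\rangle\,\mu_x$ for every $B$, where $\mu_x:=\Xi_x(\proj\psi)$ is a \emph{single} state not depending on $B$. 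The point I would exploit is that, $\mu_x$ being thus prescribed, a rank lower bound on $\ii_x$ at any one full-rank input determines $\mu_x$ itself. So I would take a strictly positive $\vartheta>\zero$ on $\hs$: since $\ee$ is strictly positive (being rank non-decreasing) and $\xi>\zero$, the operator $\ee(\vartheta\otimes\xi)$ is full rank on $\hs\otimes\ha$, whence $(\onesys\otimes\sqrt{Z_x})\,\ee(\vartheta\otimes\xi)\,(\onesys\otimes\sqrt{Z_x})$ has rank exactly $\dim(\hs)\rank{Z_x}$, while its partial traces are $\ii_x(\vartheta)$ and $\sqrt{Z_x}\Lambda(\vartheta)\sqrt{Z_x}$ (of rank at most $\rank{Z_x}$); \lemref{lemma:rank-partial-trace} then forces $\rank{\ii_x(\vartheta)}\geqslant\dim(\hs)$, and since $\ii_x(\vartheta)=\lambda\langle\psi|\vartheta|\psi\rangle\,\mu_x$ with $\langle\psi|\vartheta|\psi\rangle>0$, this gives $\mu_x>\zero$. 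To finish I would check $\langle\psi|\rho|\psi\rangle>0$: running the identical rank count with $\rho$ in place of $\vartheta$ and invoking (i) (that $\supp(\ee(\rho\otimes\xi))=P\hs\otimes\ha$ and $\Lambda(\rho)$ is full rank) yields $\rank{\ii_x(\rho)}\geqslant\rank{\rho}\geqslant1$, so $\ii_x(\rho)\ne\zero$, so $\langle\psi|\rho|\psi\rangle>0$; here $Z_x\ne\zero$ throughout because $E_x\ne\zero$ and $\Lambda$ is strictly positive (\lemref{lemma:Gamma-E-positive}). Then $\rho=\ii_\xx(\rho)\geqslant\ii_x(\rho)=\lambda\langle\psi|\rho|\psi\rangle\,\mu_x>\zero$, as required.

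The step I expect to be the genuine insight rather than routine is this last maneuver for (ii): recognising that a rank-one $E_x$ turns $\ii_x$ into a measure-and-prepare operation with a fixed output state $\mu_x$, so that a single rank inequality at a full-rank input pins $\mu_x$ down to full rank, after which $\ii_x(\rho)\leqslant\rho$ closes the argument. The remaining difficulty is purely computational bookkeeping: carefully evaluating $\rank{(\onesys\otimes\sqrt{Z_x})\,\omega\,(\onesys\otimes\sqrt{Z_x})}=\dim\supp(\omega)-\dim\!\big(\supp(\omega)\cap(\hs\otimes\ker Z_x)\big)$ for the relevant positive $\omega$, and verifying the partial-trace identities $\tra[(\onesys\otimes\sqrt{Z_x})\,\omega\,(\onesys\otimes\sqrt{Z_x})]=\tra[(\onesys\otimes Z_x)\,\omega]$ and $\trs[(\onesys\otimes\sqrt{Z_x})\,\omega\,(\onesys\otimes\sqrt{Z_x})]=\sqrt{Z_x}\,\trs[\omega]\,\sqrt{Z_x}$.
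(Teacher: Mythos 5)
Your proof is correct. Part (i) is essentially identical to the paper's: the same rank sandwich $\rank{\rho}\dim(\ha)\leqslant\rank{\ee(\rho\otimes\xi)}\leqslant\rank{\ii_\xx(\rho)}\rank{\Lambda(\rho)}$ forces all inequalities to be tight, and the support containment follows from the reduced state being supported on $P\hs$ (the paper routes this through \lemref{lemma:effect-support-probability-1}, you prove the same fact inline). For part (ii) your route is mildly different in execution but not in substance: the paper gets $\tr[E_x\rho]>0$ from the full rank of $\Lambda(\rho)$ and then cites \lemref{lemma:measurement-weak-third-law} (the statement that an operation in $\opset_\I(\hs)$ maps any state with full-rank restriction to $\supp(E)$ to a full-rank output) to conclude that $\ii_x(\rho)/\tr[E_x\rho]$ is strictly positive, closing with the same mixture argument $\rho=\ii_\xx(\rho)\geqslant\ii_x(\rho)$. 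You instead make the measure-and-prepare structure $\ii_x(B)=\lambda\langle\psi|B|\psi\rangle\,\mu_x$ explicit and re-derive the needed full-rank-output fact from \lemref{lemma:rank-partial-trace} at a strictly positive input, which is a self-contained (and arguably more transparent) substitute for the citation; it buys you independence from the earlier lemma at the cost of redoing a rank count the paper has already packaged. Both arguments rely on the same two ingredients — strict positivity of $\ee$ and $\xi$ to make the output of $\ii_x$ full rank, and part (i) to guarantee $\langle\psi|\rho|\psi\rangle>0$ — so no gap.
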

\begin{proof}
  \begin{enumerate}[(i)]
      \item 
    Since $\rank{\xi} = \dim(\ha)$ and $\ee$ is rank non-decreasing, we may write
    \begin{align}\label{eq:strong-third-fixedstate-inequality}
      \rank{\rho}\dim(\ha) & \leqslant \rank{\ee(\rho \otimes \xi)} \leqslant \rank{\ii_\xx(\rho)}\rank{\Lambda(\rho)} \, ,    
     \end{align}
      where $\Lambda(\bigcdot)  := \trs[\ee(\bigcdot \otimes \xi)]$ is the conjugate channel to $\ii_\xx$, and  the final inequality follows from  \lemref{lemma:rank-partial-trace}. 
      Since  $\ii_\xx(\rho) = \rho \implies \rank{\ii_\xx(\rho)} = \rank{\rho} = \dim( P \hs)$, it follows that  $\rank{\Lambda(\rho)} = \dim(\ha)$. Therefore, it also holds that $\rank{\ee(\rho \otimes \xi)} =\dim(\P\hs) \dim(\ha) = \dim(\P\hs\otimes \ha)$. 

Now, since $P \rho = \rho$, then  $\rho = \ii_\xx(\rho)$ implies that 
   \begin{align*}
       \tra[(P \otimes \oneapp ) \, \ee(\rho \otimes \xi)] = P \, \tra[ \ee(\rho \otimes \xi)] = P \ii_\xx(\rho) = P \rho = \rho \, .
   \end{align*} 
But this implies that $\tr[(P \otimes \oneapp ) \, \ee(\rho \otimes \xi)] = 1$, and so by \lemref{lemma:effect-support-probability-1} it must hold that $(P \otimes \oneapp) \ee(\rho \otimes \xi) = \ee(\rho \otimes \xi)$. That is, $\supp(\ee(\rho \otimes \xi)) \subset P\hs \otimes \ha$.  But as shown above, $\rank{\ee(\rho \otimes \xi)} = \dim(\P\hs\otimes \ha)$, and so this implies that   $\supp(\ee(\rho \otimes \xi)) = P\hs \otimes \ha$, i.e.,  $\ee(\rho \otimes \xi)$ has full rank in $P\hs \otimes \ha$.

 \item 
 
   Since $E_x \ne \zero \implies Z_x \ne \zero$, by item (i) and the probability reproducibility condition  it holds that 
 \begin{align*}
   \tr[E_x \rho] =  \tr[\onesys \otimes Z_x\ \ee(\rho \otimes \xi)] \equiv \tr[Z_x \Lambda(\rho)] >0 \, .
\end{align*}
 
 Let $\rho$ be a fixed state of the channel $\ii_\xx$.   By item (ii), it holds that $ \tr[ E_x  \rho] >0 $ for all $x$. Let $E_x = \lambda_x P_x$ be a rank-1 effect. Then $\tr[E_x \rho]>0$ implies that $P_x \rho P_x = \tr[P_x \rho] P_x$ has full rank in the 1-dimensional subspace $P_x \hs$, and so by item (iii) of \lemref{lemma:measurement-weak-third-law} it follows that $\sigma\coloneq \ii_x(\rho)/\tr[E_x \rho]$ has full rank in $\hs$. As such, we have that $\rho = \ii_\xx(\rho) = \tr[E_x \rho] \sigma + \sum_{y \ne x} \ii_y(\rho)$, and since a mixture of a full-rank state with any other state must be full-rank,  it must hold that $\rho$ has full rank in $\hs$. That is, if $E_x$ has rank 1 for some $x$, then $\rho \in \ff(\ii_\xx) \implies \rank{\rho} = \dim(\hs)$. 
 \end{enumerate}   
\end{proof}  

Before proceeding further, let us recall some results shown previously in Appendix M of Ref. \cite{Mohammady2021a}.  We define the ``average'' of the $\E$-channel $\ii_\xx$ and its dual as
\begin{align}\label{eq:av-channel}
& \ii_{\av}(\bigcdot) \coloneq\lim_{N\to \infty} \frac{1}{N}\sum_{n=1}^{N}(\ii_\xx)^n(\bigcdot) \, , 
    & \ii^*_{\av}(\bigcdot)\coloneq\lim_{N\to \infty} \frac{1}{N}\sum_{n=1}^{N}(\ii^*_\xx)^n(\bigcdot) \, .
\end{align}
$\ii^*_{\av}$ is a (unital) CP projection on $\ff(\ii_\xx^*) = \ff(\ii_\av^*)$, i.e., it holds that $\ii_\av^* = \ii_\av^* \circ \ii_\xx^* = \ii_\xx^* \circ \ii_\av^* = \ii_\av^* \circ \ii_\av^*$. Similarly,  $\ii_\av$ is  a CP projection on $\ff(\ii_\xx) = \ff(\ii_\av)$.  Let $\P$ denote the minimal projection on the fixed-point set $\ff(\ii_\xx)$, that is, for all projections $Q$ such that $\rho = Q \rho $ for all $\rho \in \ff(\ii_\xx)$, it holds that $\P \leqslant Q$. $\P$ equals the support projection for the state 
\begin{align}\label{eq:complete-mixture-av-channel}
    \rho_0 \coloneq \ii_{\av} \left( \frac{\onesys}{\dim(\hs)}\right) \, .
\end{align}
Note that  $\P=\onesys$ if and only if $\ff(\ii_\xx)$ contains a strictly positive state. 
We may use $\P$ to define the CP maps 
\begin{align}\label{eq:av-channel-P}
& \ii_\P^*(\bigcdot) \coloneq \P \ii_\xx^*(\bigcdot)\P, & \ii_{\av,\P}^*(\bigcdot) \coloneq \P \ii_\av^*(\bigcdot)\P.
\end{align} 
These maps are unital (with unit $\P$) when the domain and image are restricted from $\lo(\hs)$ to $\lo(\P \hs)$. Indeed, we  observe that 
\begin{align}\label{eq:av-channel-P-identity}
    \ii_\av^*(\bigcdot) = \ii_\av^*(\P \bigcdot \P) \, \, , \, \,  \ii_{\av,\P}^*(\bigcdot) =  \ii_{\av,\P}^*(\P \bigcdot \P) \, \, , \, \,  \ii_{\P}^*(\bigcdot) =  \ii_{\P}^*(\P \bigcdot \P) \, . 
\end{align}

The fixed points of these  CP maps  are  defined as 
\begin{align*}
& \ff(\ii_P^*) \coloneq \{ A \in \lo(\P\hs) :  \ii_\P^*(A) = A\} & \ff(\ii_{\av,P}^*)  \coloneq  \{ A \in \lo(\P\hs) :  \ii_{\av,\P}^*(A) = A\}  \, ,
\end{align*}
and we observe that 
\begin{align}\label{eq:algebra-fixed-points}
\P\ff(\ii_\xx^*) \P & : = \{ \P A \P : A \in \ff(\ii_\xx^*)\}  \equiv  \ff(\ii_P^*)  \equiv \ff(\ii_{\av,P}^*)  \, . 
\end{align}
That is, for any $A \in \ff(\ii_\xx^*)$, it holds that $\P A \P \in \ff(\ii_P^*)  \equiv \ff(\ii_{\av,P}^*)$. Similarly, for any $A \in \ff(\ii_P^*)  \equiv \ff(\ii_{\av,P}^*)$, there exists $B \in \ff(\ii_\xx^*)$ such that $\P B \P = A$. 

Since there exists a state $\rho_0$  that has full rank in  $\P\hs$ which is non-disturbed by $\ii_\xx$, it follows that $\ff(\ii_\P^*) \equiv  \ff(\ii_{\av,\P}^*)$ is a  von Neumann algebra \cite{Bratteli1998, Arias2002}, i.e., $\ff(\ii_\P^*)$ satisfies multiplicative closure. But since $\P \hs$ is finite-dimensional, then $\ff(\ii_\P^*)$ is a finite von Neumann algebra $\mathscr{A}$, which may have an Abelian non-trivial center $\mathscr{Z} \coloneq \mathscr{A} \cap \mathscr{A}'$ generated by the set of ortho-complete projections $\{P_\alpha\}$ which satisfy $\sum_\alpha P_\alpha = \P$. That is, every self-adjoint $B \in \mathscr{Z}$ can be written as $B = \sum_\alpha  \lambda_\alpha P_\alpha$.  We may therefore decompose $\mathscr{A}$ into a finite direct sum $\mathscr{A} = \oplus_\alpha \mathscr{A}_\alpha$, where each $\mathscr{A}_{\alpha} = P_\alpha \mathscr{A}$ is a 
type-I factor (a finite dimensional von Neumann algebra with a trivial center) on 
$P_{\alpha}\hs = \kk_\alpha \otimes \rr_\alpha$, 
written as $\mathscr{A}_{\alpha} = \lo(\kk_\alpha)\otimes \one\sub{\rr_\alpha}$. Note that here, $P_\alpha = \one\sub{\kk_\alpha} \otimes \one\sub{\rr_\alpha}$. It follows  that  we may write 

\begin{align}\label{eq:factor-fixed-point-set}
\ff(\ii_\xx) &= \bigoplus_\alpha \lo(\kk_\alpha) \otimes \omega_\alpha  \, , \nonumber \\
\ff(\ii_P^*) \equiv \ff(\ii_{\av,\P}^*)&= \bigoplus_{\alpha} \lo(\kk_{\alpha}) \otimes \one\sub{\rr_\alpha}  \, ,  
\end{align}
and 
\begin{align}\label{eq:factor-av-channel}
\ii_\av(\bigcdot) &= \sum_{\alpha}
\tr\sub{\rr_\alpha}[P_{\alpha} \bigcdot P_{\alpha}]\otimes \omega_{\alpha}  \, ,  \nonumber \\
\ii_{\av,\P}^* (\bigcdot) &= \sum_{\alpha}\Gamma_{\omega_{\alpha}}
(P_{\alpha} \bigcdot P_{\alpha})\otimes \one_{\mathcal{R}_{\alpha}}  \, , 
\end{align}
where:  $\omega_\alpha$ are states on $\rr_\alpha$;   $\Gamma_{\omega_\alpha} : \lo(\kk_{\alpha}\otimes \rr_{\alpha}) \to \lo(\kk_\alpha)$ are restriction maps; and $\tr\sub{\rr_\alpha}: \lo(\kk_\alpha \otimes \rr_\alpha) \to \lo(\kk_\alpha)$ are partial traces \cite{Lindblad1999a}. Note that  $\omega_\alpha$ are  states with full rank in $\rr_\alpha$. This is because the state $\rho_0$ defined in \eq{eq:complete-mixture-av-channel}   has full rank in $\P \hs$. But since   $\rho_0 := \ii_\av(\onesys/\dim(\hs)) \propto \oplus_\alpha \one\sub{\kk_\alpha} \otimes \omega_\alpha$, then $\rho_0$ has full rank in $\P \hs$ if and only if $\omega_\alpha$ have full rank in $\rr_\alpha$ for all $\alpha$.

We now provide a useful result indicating the form that the effects of $\E$ must take in light of the fixed-point structure of the $\E$-channel $\ii_\xx$.  This is a generalisation of Lemma E.1 in Ref. \cite{Mohammady2022a}, which holds if $\ii$ is constrained by the weak third law, i.e., $\ii \in \inset_\I(\hs)$,  \emph{and} if $\ff(\ii_\xx)$ contains a strictly positive state. 

\

\begin{lemma}\label{lemma:effect-factor-decomposition}
 Let $\E \coloneq \{E_x : x \in \xx\}$ be a non-trivial observable on $\hs$, and let  $(\ha, \xi, \ee, \Z)$ be a measurement process  for an $\E$-compatible instrument $\ii$ acting in $\hs$. Let $\P$ be the minimal support projection on $\ff(\ii_\xx)$, and define the restriction of observable $\E$ in $\P \hs$ as
 \begin{align*}
     \P \E \P := \{\P E_x \P : x \in \xx\} \, .
 \end{align*}
The following hold: 
\begin{enumerate}[(i)]
    \item $\P \ff(\ii_\xx^*) \P \subset (\P \E \P)'$.

    \item If $\ee$ is rank non-decreasing and $\xi$ is strictly positive, then  
 \begin{align}\label{eq:effect-direct-sum-decomp}
    \P E_x \P = \bigoplus_\alpha \one\sub{\kk_\alpha} \otimes E_{x, \alpha}  \, ,
\end{align}
where  for all $x$ and $\alpha$,   $\zero \leqslant  E_{x, \alpha} \leqslant \one\sub{\rr_\alpha}$,  $E_{x,\alpha} \ne \zero$,  and $E_{x,\alpha} \ne \one\sub{\rr_\alpha}$. 
\end{enumerate}
\unskip\nobreak\hfill $\square$ \end{lemma}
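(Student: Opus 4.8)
The plan is to prove the two parts in order, with (i) carrying the real content and (ii) following by an algebraic decomposition together with the two earlier lemmas. Note that part (i) needs no constraint on $\ee$ or $\xi$.

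\textbf{Part (i).} The goal is $\P\ff(\ii_\xx^*)\P=\ff(\ii_\P^*)\subseteq(\P\E\P)'$. I would work throughout with the $\P$-compressed maps $\ii_\P^*(\bigcdot)=\P\ii_\xx^*(\bigcdot)\P$ and $\ii_{x,\P}^*(\bigcdot)\coloneq\P\ii_x^*(\bigcdot)\P$, regarded as maps on $\lo(\P\hs)$. First record the bookkeeping facts: the state $\rho_0$ of \eqref{eq:complete-mixture-av-channel} is fixed by $\ii_\xx$ with support $\P$, so every state supported in $\P\hs$ is dominated by a multiple of $\rho_0$ and hence mapped by $\ii_\xx$ into $\lo(\P\hs)$; dually $\P\ii_\xx^*(\P^\perp)\P=\zero$, and since $\zero\leqslant\ii_x^*\leqslant\ii_\xx^*$ also $\P\ii_x^*(\P^\perp)\P=\zero$, whence $\ii_{x,\P}^*(\P)=\P\ii_x^*(\onesys)\P=\P E_x\P$, $\sum_x\ii_{x,\P}^*=\ii_\P^*$, $\zero\leqslant\ii_{x,\P}^*\leqslant\ii_\P^*$, and $\ii_\P^*$ is unital with unit $\P$. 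Now invoke the fact already recorded above (valid because $\rho_0$ has full rank in $\P\hs$) that $\ff(\ii_\P^*)$ is a von Neumann algebra; in particular every $A\in\ff(\ii_\P^*)$ has $A^*A\in\ff(\ii_\P^*)$, so $A$ lies in the multiplicative domain of the unital CP map $\ii_\P^*$. Taking a minimal Stinespring dilation $\ii_\P^*(\bigcdot)=V^*(\bigcdot\otimes\one\sub{\kk})V$ with $V\colon\P\hs\to\P\hs\otimes\kk$ an isometry, the multiplicative-domain and fixed-point conditions give $(A\otimes\one\sub{\kk})V=V\,\ii_\P^*(A)=VA$; by Arveson's Radon--Nikodym theorem, applicable since $\zero\leqslant\ii_{x,\P}^*\leqslant\ii_\P^*$ with the dilation minimal, one may write $\ii_{x,\P}^*(\bigcdot)=V^*(\bigcdot\otimes D_x)V$ for effects $D_x$ on $\kk$, so that $\P E_x\P=\ii_{x,\P}^*(\P)=V^*(\P\otimes D_x)V$ and
\begin{align*}
\ii_{x,\P}^*(A)&=V^*(A\otimes D_x)V=V^*(\P\otimes D_x)(A\otimes\one\sub{\kk})V \\
&=V^*(\P\otimes D_x)VA=(\P E_x\P)\,A\, .
\end{align*}
Running the same computation with $A^*$ in place of $A$ and taking adjoints gives $\ii_{x,\P}^*(A)=A\,(\P E_x\P)$; hence $[A,\P E_x\P]=\zero$ for every $A\in\ff(\ii_\P^*)$, which by \eqref{eq:algebra-fixed-points} is part (i).

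\textbf{Part (ii).} By part (i), each $\P E_x\P$ lies in the commutant, inside $\lo(\P\hs)$, of $\ff(\ii_\P^*)=\bigoplus_\alpha\lo(\kk_\alpha)\otimes\one\sub{\rr_\alpha}$ from \eqref{eq:factor-fixed-point-set} (recall $P_\alpha\hs=\kk_\alpha\otimes\rr_\alpha$). Since this algebra contains the central projections $P_\alpha$, any operator commuting with it is block diagonal in the $P_\alpha$, and within the $\alpha$-block the commutant of $\lo(\kk_\alpha)\otimes\one\sub{\rr_\alpha}$ is $\one\sub{\kk_\alpha}\otimes\lo(\rr_\alpha)$; therefore $\P E_x\P=\bigoplus_\alpha\one\sub{\kk_\alpha}\otimes E_{x,\alpha}$, and $\zero\leqslant\P E_x\P\leqslant\P$ forces $\zero\leqslant E_{x,\alpha}\leqslant\one\sub{\rr_\alpha}$, which is \eqref{eq:effect-direct-sum-decomp}. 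To exclude the two extremes --- the only place where rank non-decreasingness of $\ee$ and strict positivity of $\xi$ are used --- suppose $E_{x,\alpha}=\zero$ for some $x,\alpha$: then $P_\alpha E_x P_\alpha=\zero$ with $E_x\geqslant\zero$ gives $E_x P_\alpha=\zero$, and the state $\rho_\alpha\coloneq(\one\sub{\kk_\alpha}/\dim(\kk_\alpha))\otimes\omega_\alpha$ lies in $\ff(\ii_\xx)$ with support $P_\alpha$ by \eqref{eq:factor-fixed-point-set} (using that $\omega_\alpha$ has full rank in $\rr_\alpha$), so \lemref{lemma:fixed-point-rank-non-decreasing}(i) makes $\ee(\rho_\alpha\otimes\xi)$ of full rank in $P_\alpha\hs\otimes\ha$; but $\zero=\tr[E_x\rho_\alpha]=\tr[(\onesys\otimes Z_x)\,\ee(\rho_\alpha\otimes\xi)]$ then forces $P_\alpha\otimes Z_x=\zero$, hence $Z_x=\zero$, hence $E_x=\zero$ by \lemref{lemma:Gamma-E-positive}(ii) (as $\ee$ rank non-decreasing $\implies$ strictly positive and $\xi>\zero$), contradicting $E_x\ne\zero$. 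If instead $E_{x,\alpha}=\one\sub{\rr_\alpha}$, then $P_\alpha E_x P_\alpha=P_\alpha$ with $E_x\leqslant\onesys$ gives $E_x P_\alpha=P_\alpha$, so choosing some $y\ne x$ (possible since $\E$ is non-trivial) yields $\zero\leqslant P_\alpha E_y P_\alpha\leqslant P_\alpha(\onesys-E_x)P_\alpha=\zero$, so $E_{y,\alpha}=\zero$, contradicting the case just treated. Hence $E_{x,\alpha}\ne\zero$ and $E_{x,\alpha}\ne\one\sub{\rr_\alpha}$ for all $x,\alpha$.

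I expect the crux to be part (i): one must verify that the $\P$-compressed dual operations $\ii_{x,\P}^*$ genuinely inherit the required domination, unitality, and the identity $\ii_{x,\P}^*(\P)=\P E_x\P$, and the multiplicative-domain step then rests on Arveson's Radon--Nikodym theorem for the family $\{\ii_{x,\P}^*\}$ relative to a minimal Stinespring dilation of $\ii_\P^*$ (equivalently one could run the argument through an explicit measurement dilation of $\ii_\xx$, at the cost of remarking that the conclusion does not depend on the dilation chosen). Part (ii) is then a routine computation with the commutant of a finite-dimensional von Neumann algebra, its only non-formal ingredients being \lemref{lemma:fixed-point-rank-non-decreasing}(i) and \lemref{lemma:Gamma-E-positive}(ii).
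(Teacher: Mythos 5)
Your proposal is correct; both parts go through, and the overall architecture (fixed points form a von Neumann algebra $\Rightarrow$ multiplicative domain $\Rightarrow$ commutation with $\P E_x\P$; then the commutant of $\bigoplus_\alpha\lo(\kk_\alpha)\otimes\one\sub{\rr_\alpha}$ gives the block form, with \lemref{lemma:fixed-point-rank-non-decreasing}(i) excluding $E_{x,\alpha}=\zero$ and normalisation excluding $E_{x,\alpha}=\one\sub{\rr_\alpha}$) matches the paper's. The difference is in the machinery for part (i): the paper works with the concrete unital CP map $\Gamma^\ee_{\xi,\P}=\P\,\Gamma_\xi\circ\ee^*(\bigcdot)\,\P$ supplied by the given measurement process, notes that $A\otimes\oneapp$ lies in its multiplicative domain for $A\in\ff(\ii_\P^*)$, and evaluates at $B=\onesys\otimes Z_x$ — here the pointer effects $Z_x$ play the role of ready-made Radon--Nikodym densities. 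You instead take an abstract minimal Stinespring dilation of $\ii_\P^*$ and invoke Arveson's Radon--Nikodym theorem to manufacture the densities $D_x$ for the compressed sub-operations $\ii_{x,\P}^*$. Your route buys dilation-independence (it makes explicit that part (i) holds for every $\E$-compatible instrument, with no hypothesis on $\xi$ or $\ee$), at the cost of importing Arveson's theorem and of the bookkeeping $\P\,\ii_x^*(\P^\perp)\,\P=\zero$, which you correctly supply; the paper's route is shorter because the dilation is already given. In part (ii) your exclusion of $E_{x,\alpha}=\zero$ tests the specific fixed state $(\one\sub{\kk_\alpha}/\dim(\kk_\alpha))\otimes\omega_\alpha$ supported on $P_\alpha$ and concludes $Z_x=\zero$, whereas the paper tests general elements $A_\alpha\otimes\one\sub{\rr_\alpha}$ of the fixed-point algebra against the full-rank fixed state $\rho_0$; both hinge on the same input, namely that $\ee(\rho\otimes\xi)$ has full rank on $(\supp\rho)\otimes\ha$ for a fixed state $\rho$, and your version is if anything slightly more direct.
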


\begin{proof}
Using the CP unital map $\Gamma^\ee_\xi$ defined in \eq{eq:gamma-E}, let us define the CP subunital map $\Gamma^\ee_{\xi,\P} : \lo(\hs \otimes \ha) \to \lo(\P \hs)$ as $\Gamma^\ee_{\xi, \P}(\bigcdot) := \P \Gamma^\ee_\xi(\bigcdot) \P$.  We may write  $\P \ii_x^*(\bigcdot) \P = \Gamma_{\xi, \P}^\ee (\bigcdot \otimes Z_x)$, and so    $\P E_x \P = \P \ii_x^*(\onesys) \P =  \Gamma_{\xi,\P}^\ee (\onesys \otimes Z_x) $. Similarly, we may write $\ii_\P^*(\bigcdot) \coloneq \P \ii_\xx^*(\bigcdot) \P = \Gamma_{\xi,\P} ^\ee (\bigcdot \otimes \oneapp)$.  Since the fixed-point set $\ff(\ii_\P^*) \equiv \ff(\ii_{\av,\P}^*) \subset \lo(\P \hs)$  is a von Neumann algebra, for any $A \in \ff(\ii_\P^*)$ it holds that $A^*A, A A^* \in \ff(\ii_\P^*)$.     By the multiplicability theorem \cite{Choi1974}, this implies that   $A \Gamma_{\xi, \P}^\ee(B) = \Gamma_{\xi, \P}^\ee ( (A\otimes \oneapp) B) $ and $\Gamma_{\xi, \P}^\ee (B)  A = \Gamma_{\xi, \P}^\ee ( B (A\otimes \oneapp)) $ for all $A \in \ff(\ii_\P^*)$ and $B \in \lo(\hs \otimes \ha)$. By choosing $B = \onesys \otimes Z_x$,  we may therefore write
\begin{align*}\label{eq:multiplicative-non-disturbance-Q}
\P \ii_x^*(A) \P =  \Gamma_{\xi, \P}^\ee (A \otimes Z_x)  =  A \P E_x \P = \P E_x \P A
\end{align*}
for all $A \in \ff(\ii_\P^*)$. That is, 
\begin{align*}
\P \ff(\ii_\xx^*) \P = \ff(\ii_\P^*) \subset (\P\E \P)' \coloneq \{A \in \lo(\P \hs) : [\P E_x \P, A] = \zero \, \, \forall x \in \xx \}  \, ,    
\end{align*} 
i.e., the fixed points of $\ii_\P^*$  are contained in the commutant of $\P \E \P$ in $\P \hs$. Equivalently, for any $A \in \ff(\ii_\xx^*)$, the restriction $\P A \P$ 
 is contained in the commutant of $\P \E \P$ in $\P \hs$. This concludes the proof for item (i). 
 
 Now we shall proceed with proving item (ii). Note that the condition $\ff(\ii_\P^*)  \subset (\P \E \P)'$ implies that $\P\E\P \subset \ff(\ii_\P^*) '$. By    \eq{eq:factor-fixed-point-set}, we have that 
\begin{align*}
 \ff(\ii_\P^*) ' =    \bigoplus_{\alpha} \one\sub{\kk_{\alpha}} \otimes \lo(\rr_\alpha)  \, .
\end{align*}
That the effects of $\P \E \P$ are decomposed as in \eq{eq:effect-direct-sum-decomp} directly follows. Moreover, $\zero \leqslant  E_{x, \alpha} \leqslant \one\sub{\rr_\alpha}$ follows trivially from the fact that $E_x$, and hence $\P E_x \P$, are effects. So now we shall show that $E_{x,\alpha} \ne \zero$ and $E_{x,\alpha} \ne \one\sub{\rr_\alpha}$.

Note that for any $A \in \ff(\ii_\P^*)$, outcome $x$,  and state $\rho_0$ that has full rank in $\P \hs$, it holds that $ \tr[\rho_0 A^*A \P E_x \P] =  \tr[\rho_0 (A \sqrt{\P E_x \P})^*(A \sqrt{\P E_x \P})] \geqslant 0$, which vanishes if and only if $A^*A \P E_x \P = \zero$. But now we may write the following:
\begin{align*}
 \tr[\rho_0 A^*A \P E_x \P]  &=   \tr[\rho_0 \P \Gamma_\xi^\ee (A^*A \otimes Z_x)\P] \\
 & = \tr[\ee(\rho_0 \otimes \xi) A^*A \otimes Z_x] \\
 & \geqslant 0  \, .
\end{align*}
Since $\xi$ has full rank in $\ha$ and $\ee$ is rank non-decreasing, then by \lemref{lemma:fixed-point-rank-non-decreasing}  it follows that for a fixed state $\rho_0 = \ii_\xx(\rho_0)$ that has full rank in $\P \hs$, it holds that $\ee(\rho_0 \otimes \xi)$ has full rank in $\P\hs \otimes \ha$. Since $ Z_x \ne   \zero$, then the equality condition of the above equation is satisfied for such a state if and only if $A=\zero$. Therefore,     $A^*A \P E_x \P =  \zero \iff A = \zero$. Since $A \P E_x \P =\zero \implies A^*A \P E_x \P = \zero$, it follows that $A \P E_x \P =\zero \iff A =\zero$.

Now assume that $E_{x,\alpha} = \zero$ for some $\alpha$. It will hold that an operator  $A = A_\alpha \otimes \one\sub{\rr_\alpha} \in \ff(\ii_{\P}^*)$ exists,   with $ A_\alpha \ne \zero$, such that   $A \P E_x\P = \zero$. But this contradicts what we showed above. Therefore, all $E_{x,\alpha}$ must be non-vanishing. Finally, since $\E$ is non-trivial, then there exists at least two distinct outcomes, and so by normalisation it holds that  $E_{x,\alpha} \ne    \one\sub{\rr_\alpha}$. 

\end{proof}

We are now ready to prove items  (v) and (vi) of \thmref{thm:non-ddisturbing-thermo-nogo} in the main text, which we reiterate here for convenience:

\

\begin{theorem}\label{thm:non-ddisturbing-thermo-nogo-app}
Consider an $\E$-compatible instrument $\ii:= \{\ii_x : x\in \xx\}$ acting in $\hs$, and assume that  $\E$ is a non-trivial observable. Assume that $\ii$ belongs to $\inset_C(\hs)$ for  $C \in \{\II, \III\}$ as given in \defref{defn:thermo-consistent-instrument}.  The following hold:
\begin{enumerate}[(i)]

    \item 
    
    If $\ii \in \inset_\II(\hs)$, then $\ii$ is first-kind only if $\zero < E_x < \onesys $ for all $x \in \xx$.

    \item 

    If $\ii \in \inset_\III(\hs)$, then $\ii$ is first-kind only if $\zero < E_x < \onesys $ for all $x \in \xx$ and $[E_x, E_y] = \zero$ for all $x,y \in \xx$.

\end{enumerate}

\nobreak\hfill $\square$ \end{theorem}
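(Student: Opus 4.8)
The plan is to reduce both claims to the fixed-point analysis of the $\E$-channel $\ii_\xx$ set up in this appendix (Eqs.~\eqref{eq:av-channel}--\eqref{eq:factor-av-channel}) together with \lemref{lemma:effect-factor-decomposition}. Write $\P$ for the minimal support projection on $\ff(\ii_\xx)$, which is non-empty by the Schauder--Tychonoff theorem. First-kindness of $\ii$ means $E_x = \ii_\xx^*(E_x)$, i.e.\ $E_x \in \ff(\ii_\xx^*)$ for all $x$; since the average $\ii_\av^*$ of \eq{eq:av-channel} is a unital CP projection onto $\ff(\ii_\xx^*)$ satisfying $\ii_\av^*(\bigcdot) = \ii_\av^*(\P\bigcdot\P)$ by \eq{eq:av-channel-P-identity}, first-kindness is equivalent to $E_x = \ii_\av^*(\P E_x\P)$ and, by unitality of $\ii_\av^*$, also to $\onesys - E_x = \ii_\av^*(\P - \P E_x\P)$.

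For item (i) --- where $\ii\in\inset_\II(\hs)$, so the process $(\ha,\xi,\ee,\Z)$ has $\xi$ strictly positive and $\ee$ rank non-decreasing --- I would first note that $\P E_x\P \in \P\ff(\ii_\xx^*)\P = \ff(\ii_\P^*)$ by \eq{eq:algebra-fixed-points}, while \lemref{lemma:effect-factor-decomposition}(ii) gives $\P E_x\P = \bigoplus_\alpha \one\sub{\kk_\alpha}\otimes E_{x,\alpha}$ with $\zero\leqslant E_{x,\alpha}\leqslant\one\sub{\rr_\alpha}$, $E_{x,\alpha}\neq\zero$, and $E_{x,\alpha}\neq\one\sub{\rr_\alpha}$; in particular $\P E_x\P$ lies in $\bigoplus_\alpha\one\sub{\kk_\alpha}\otimes\lo(\rr_\alpha) = \ff(\ii_\P^*)'$ by the factor decomposition \eqref{eq:factor-fixed-point-set}. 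Hence $\P E_x\P$ belongs to the centre $\ff(\ii_\P^*)\cap\ff(\ii_\P^*)' = \spann\{P_\alpha\}$, with $P_\alpha := \one\sub{\kk_\alpha}\otimes\one\sub{\rr_\alpha}$ mutually orthogonal projections summing to $\P$, so $\P E_x\P = \sum_\alpha\mu_\alpha(x)P_\alpha$ and $E_{x,\alpha} = \mu_\alpha(x)\one\sub{\rr_\alpha}$. The conditions $E_{x,\alpha}\neq\zero$ and $E_{x,\alpha}\neq\one\sub{\rr_\alpha}$ then force $0<\mu_\alpha(x)<1$ for all $\alpha$, so $\|\P E_x\P\| = \max_\alpha\mu_\alpha(x)<1$ and $\|\P - \P E_x\P\| = \max_\alpha(1-\mu_\alpha(x))<1$. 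Since $\ii_\av^*$ is unital and CP it is an operator-norm contraction, whence $\|E_x\| = \|\ii_\av^*(\P E_x\P)\|\leqslant\|\P E_x\P\|<1$ and $\|\onesys - E_x\| = \|\ii_\av^*(\P - \P E_x\P)\|\leqslant\|\P - \P E_x\P\|<1$; that is, $\zero < E_x < \onesys$.

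For item (ii) --- where $\ii\in\inset_\III(\hs)\subset\inset_\II(\hs)$ --- the bound $\zero<E_x<\onesys$ is immediate from item (i), so only commutativity is at issue. Now $\ii_\xx$ is a channel in $\opset_\III(\hs)$, so \thmref{thm:channel-III-sp-fixed-state} gives that $\ff(\ii_\xx)$ contains a strictly positive state and hence $\P = \onesys$. Then \lemref{lemma:effect-factor-decomposition}(i) reads $\ff(\ii_\xx^*) = \P\ff(\ii_\xx^*)\P \subset (\P\E\P)' = \E'$, while first-kindness puts each $E_x$ in $\ff(\ii_\xx^*)$; therefore $E_x\in\E'$, i.e.\ $[E_x,E_y]=\zero$ for all $x,y$. (Alternatively, using that $\ff(\ii_\xx^*)$ is then a von Neumann algebra containing $\E$ together with \lemref{lemma:effect-factor-decomposition} one sees that $\E$ lies in the centre $\spann\{P_\alpha\}$; this is also the content of Proposition~4 of Ref.~\cite{Heinosaari2010}.)

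The step I expect to be the main obstacle is pinning $\P E_x\P$ to the centre of $\ff(\ii_\P^*)$ and extracting the strict inequalities $0<\mu_\alpha(x)<1$: this is exactly where the two halves of \lemref{lemma:effect-factor-decomposition} must be combined --- membership in the fixed-point algebra from first-kindness, and the direct-sum form with blocks $E_{x,\alpha}$ neither $\zero$ nor $\one\sub{\rr_\alpha}$ from rank-non-decreasingness --- and where the strict spectral bounds on the compression $\P E_x\P$ are transferred back to $E_x$ via contractivity of $\ii_\av^*$, which matters since $E_x$ need not be supported on $\P\hs$ when $\P\neq\onesys$.
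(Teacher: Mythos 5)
Your proof is correct and follows essentially the same route as the paper's: first-kindness places $\P E_x\P$ in the fixed-point algebra $\ff(\ii_\P^*)$, \lemref{lemma:effect-factor-decomposition} forces the block form with $E_{x,\alpha}\neq\zero,\one\sub{\rr_\alpha}$, and contractivity of the unital CP map $\ii_\av^*$ transfers the strict spectral bounds from $\P E_x\P$ back to $E_x$, while item (ii) follows from $\P=\onesys$ via \thmref{thm:channel-III-sp-fixed-state} and \lemref{lemma:effect-factor-decomposition}(i). The only (cosmetic) difference is that you extract the scalars $\mu_\alpha(x)$ by placing $\P E_x\P$ in the centre $\ff(\ii_\P^*)\cap\ff(\ii_\P^*)'=\spann\{P_\alpha\}$, whereas the paper computes $\ii_{\av,\P}^*(\P E_x\P)$ explicitly from \eq{eq:factor-av-channel} to obtain $\lambda_\alpha(x)=\tr[E_{x,\alpha}\omega_\alpha]$; the two are equivalent.
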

\begin{proof}
\begin{enumerate}[(i)]

    \item 
    
    An $\E$-compatible instrument $\ii$ is a measurement of the first kind if $\E \subset \ff(\ii_\xx^*)=\ff(\ii_\av^*)$, where $\ii_\av^*$ is a unital CP map defined in \eq{eq:av-channel}.  Now let $\P$ be the minimal support projection on the fixed states of $\ii_\xx$, and define the unital CP map $\ii_{\av, \P}^*(\bigcdot) \coloneq \P \ii_\av^*(\bigcdot) \P$. Recall from \eq{eq:algebra-fixed-points} that $\E \subset \ff(\ii_\xx^*) \equiv \ff(\ii^*_\av) \implies \P E_x \P =  \ii_{\av,\P}^*(\P E_x \P) $.    By \lemref{lemma:effect-factor-decomposition} and \eq{eq:factor-av-channel},  it follows that
    
    \begin{align*}
        \P E_x \P &= \ii_{\av,\P}^*(\P E_x \P) \\
        & =  \sum_{\alpha}\Gamma_{\omega_{\alpha}}
(P_{\alpha} \P E_x \P P_{\alpha})\otimes \one_{\mathcal{R}_{\alpha}} \\
& = \bigoplus_{\alpha}\lambda_\alpha(x) \one\sub{\kk_\alpha} \otimes \one\sub{\rr_\alpha}\equiv  \bigoplus_{\alpha}\lambda_\alpha(x) P_\alpha \, ,
    \end{align*}
   where $\lambda_{\alpha}(x)\coloneq  \tr[E_{x,\alpha} \omega_\alpha]$. Since for all $\alpha$, $\omega_\alpha$ are strictly positive states on $\rr_\alpha$, while for all $\alpha$ and  $x$, $E_{x,\alpha}$ are effects on $\rr_\alpha$ which satisfy  $E_{x,\alpha} \ne \zero$ and $E_{x, \alpha} \ne \one\sub{\rr_\alpha}$, then  $0 < \lambda_{\alpha}(x) < 1$. Now recall from \eq{eq:av-channel-P-identity} that  $\ii_\av^*(\bigcdot) = \ii_\av^*( \P \bigcdot \P)$.  It follows that if $\ii$ is a measurement of the first kind, then for all $x$ it must hold that 
\begin{align*}
    \| E_x\| = \| \ii_\av^*(E_x) \| = \| \ii_\av^*( \P E_x \P) \| \leqslant \| \P E_x \P \| <1 \, ,
\end{align*}
where the first inequality follows from the fact that $\ii_\av^*$ is CP and unital, and the final inequality follows from the fact that $\lambda_{\alpha}(x) <1$. Similarly, we may write
\begin{align*}
    \| \onesys -  E_x\| & = \| \ii_\av^*(\onesys - E_x) \| = \| \ii_\av^*(\P - \P E_x \P) \|  \leqslant \| \P - \P E_x \P \| <1 \, ,
\end{align*}
where the final inequality follows from the fact that $\lambda_{\alpha}(x) >0$. It follows that $E_x$ cannot have eigenvalue 1 or eigenvalue 0, and so $\E$ must be indefinite, i.e., $\zero < E_x < \onesys$. 

\item The requirement that $\zero < E_x < \onesys$ must hold follows immediately from (i) and the fact that $\inset_\III(\hs) \subset \inset_\II(\hs)$. The requirement that $\E$ must be  commutative, i.e., $[E_x, E_y]= \zero$, follows from  \thmref{thm:channel-III-sp-fixed-state} which states that if $\ii \in \inset_\III(\hs)$ then $\ff(\ii_\xx)$ contains a strictly positive state, and so $\ff(\ii_\xx^*)$ is a von Neumann algebra.  Indeed, note that as shown in \lemref{lemma:effect-factor-decomposition}, If $\E \subset \ff(\ii^*_\xx)$, then $\P \E \P \subset (\P \E \P)'$, where $\P$ is the minimal support projection on $\ff(\ii_\xx)$. If $\ff(\ii_\xx)$ contains a strictly positive state, then $\P = \onesys$, and it follows that $\E \subset \E'$ must hold, i.e., $\E$ must be commutative.

\end{enumerate}

\end{proof}

\begin{corollary}\label{cor:rank-1-disturbance}
  Consider an $\E$-compatible instrument $\ii \in \inset_\II(\hs)$, and assume that for some outcome $x$, the effect $E_x$ has rank 1, i.e., $E_x = \lambda \proj{\psi}$ for some unit vector $|\psi\>$ in $\hs$. It follows that $\ff(\ii_\xx^*) = \co \onesys$. That is, $\ii$ disturbs all non-trivial observables.  
\unskip\nobreak\hfill $\square$ \end{corollary}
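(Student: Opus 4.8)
The plan is to assemble Lemmas~\ref{lemma:fixed-point-rank-non-decreasing} and \ref{lemma:effect-factor-decomposition}, the rank-$1$ hypothesis forcing the fixed-point algebra to collapse to scalars. Fix a process $(\ha,\xi,\ee,\Z)$ with $\xi>\zero$ and $\ee$ rank non-decreasing that implements $\ii$; such a process exists because $\ii\in\inset_\II(\hs)$. First I would apply item (ii) of Lemma~\ref{lemma:fixed-point-rank-non-decreasing}: by the Schauder–Tychonoff fixed point theorem $\ff(\ii_\xx)$ contains a state, and since some effect $E_x=\lambda\proj\psi$ has rank~$1$, that state is strictly positive. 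A strictly positive fixed state of $\ii_\xx$ has full support, so the minimal support projection $\P$ on $\ff(\ii_\xx)$ equals $\onesys$. Consequently $\ff(\ii_\xx^*)=\ff(\ii_\P^*)$ is a von Neumann algebra $\mathscr{A}\ni\onesys$, with associated factor decomposition $\hs=\bigoplus_\alpha\kk_\alpha\otimes\rr_\alpha$ and $\mathscr{A}=\bigoplus_\alpha\lo(\kk_\alpha)\otimes\one\sub{\rr_\alpha}$, cf.~\eqref{eq:factor-fixed-point-set}.

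Next I would invoke item (ii) of Lemma~\ref{lemma:effect-factor-decomposition}, which applies because $\ee$ is rank non-decreasing, $\xi$ is strictly positive, and $\E$ is non-trivial (a rank-$1$ effect in dimension $\geqslant 2$ cannot be proportional to $\onesys$, and by normalisation $\sum_y E_y=\onesys$ has rank $\geqslant 2>1$, so there is at least one further outcome; the whole statement is vacuous when $\dim(\hs)=1$). With $\P=\onesys$ this yields $E_x=\bigoplus_\alpha\one\sub{\kk_\alpha}\otimes E_{x,\alpha}$ with every $E_{x,\alpha}\neq\zero$, hence $\rank{E_x}=\sum_\alpha\dim(\kk_\alpha)\,\rank{E_{x,\alpha}}\geqslant\sum_\alpha\dim(\kk_\alpha)$. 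Since $\rank{E_x}=1$, the index set $\{\alpha\}$ must be a singleton $\{\alpha_0\}$ with $\dim(\kk_{\alpha_0})=1$, so $\mathscr{A}=\lo(\co)\otimes\one\sub{\rr_{\alpha_0}}=\co\onesys$, i.e.\ $\ff(\ii_\xx^*)=\co\onesys$. Finally, an observable $\F$ is left undisturbed by $\ii$ only if $\F\subset\ff(\ii_\xx^*)$; as the only observables contained in $\co\onesys$ are trivial, $\ii$ disturbs every non-trivial observable.

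The bulk of the work is already packaged inside the two cited lemmas, so no step is especially laborious. The conceptual crux---and the one place that demands care---is the passage from ``some $E_x$ has rank $1$'' to ``$\P=\onesys$'': this is exactly where Lemma~\ref{lemma:fixed-point-rank-non-decreasing}(ii) does the heavy lifting, since it is what promotes $\ff(\ii_\xx^*)$ to a von Neumann algebra with the explicit factor structure, after which the rank-$1$ constraint annihilates every superselection sector except the trivial one.
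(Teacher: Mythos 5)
Your proposal is correct and follows essentially the same route as the paper's own proof: Lemma~\ref{lemma:fixed-point-rank-non-decreasing}(ii) promotes the fixed state to full rank so that $\P=\onesys$, and then the decomposition of Lemma~\ref{lemma:effect-factor-decomposition} forces $\rank{E_x}\geqslant\sum_\alpha\dim(\kk_\alpha)$, which under the rank-$1$ hypothesis collapses the fixed-point algebra to $\co\onesys$. The extra details you supply (non-triviality of $\E$, the explicit rank count, and the closing remark that non-disturbance requires $\F\subset\ff(\ii_\xx^*)$) are all consistent with what the paper leaves implicit.
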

\begin{proof}
By item (ii) of \lemref{lemma:fixed-point-rank-non-decreasing},   $\ff(\ii_\xx)$ contains a strictly positive state.  By \lemref{lemma:effect-factor-decomposition}, and inserting $\P = \onesys$, the rank of every effect of $\E$ is bounded as $\rank{ E_x } \geqslant \sum_\alpha \dim(\kk_\alpha)$.  Therefore, if any effect of $\E$ is rank-1, then it must hold that the number of indices $\alpha$ is 1, and that  $\dim(\kk_\alpha) = 1$, so that by \eq{eq:factor-fixed-point-set} we have $\ff(\ii_\xx^*) = \co \onesys$. 
\end{proof}

\bibliography{Projects-Thermal-Measurement.bib}

\end{document}